\newcommand\Tstrut{\rule{0pt}{2.6ex}}         
\newcommand\Bstrut{\rule[-0.9ex]{0pt}{0pt}}   
\newcommand{\mypar}[1]{\vspace{1mm}\noindent\textit{#1.}}
\newcommand{\rone}{(\emph{i})\xspace}
\newcommand{\rtwo}{(\emph{ii})\xspace}
\newcommand{\rthree}{(\emph{iii})\xspace}
\newcommand{\rfour}{(\emph{iv})\xspace}
\newcommand{\query}{\Phi}
\newcommand{\signatures}{\query}
\newcommand{\varsig}{V_\query}
\newcommand{\signature}{\sigma}
\newcommand{\bestset}{\mathcal{P}}
\newcommand{\lang}{\mathcal{L}}
\newcommand{\fls}{\bot}
\newcommand{\tru}{\top}
\newcommand{\lproperty}{$\lang$-property\xspace}
\newcommand{\lconjunction}{$\lang$-conjunction\xspace}
\newcommand{\lconjunctions}{$\lang$-conjunctions\xspace}
\newcommand{\lproperties}{$\lang$-properties\xspace}
\newcommand{\lconjuncts}{\lproperties}
\newcommand{\phiprog}{\varphi_{\signatures}}
\newcommand{\phiand}{\varphi_{\wedge}}
\newcommand{\phiinit}{\varphi_{init}}
\newcommand{\philast}{\varphi_{last}}
\newcommand{\phil}{\varphi}
\newcommand{\ex}{e}
\newcommand{\posex}{\ex^+}
\newcommand{\negex}{\ex^-}
\newcommand{\eplus}{E^+}
\newcommand{\eminus}{E^-}
\newcommand{\eminusmust}{E^-_{must}}
\newcommand{\eminusmay}{E^-_{may}}
\newcommand{\eminusdelta}{E^-_\delta}
\newcommand{\cs}{\textnormal{{\textsc{CheckSoundness}}}\xspace}
\newcommand{\cp}{\textnormal{{\textsc{CheckPrecision}}}\xspace}
\newcommand{\synth}{\textnormal{{\textsc{Synthesize}}}\xspace}
\newcommand{\maxsynth}{\textnormal{{\textsc{MaxSynthesize}}}\xspace}
\newcommand{\synthproperty}{\textnormal{{\textsc{SynthesizeStrongestConjunct}}}\xspace}
\newcommand{\synthproperties}{\textnormal{{\textsc{SynthesizeStrongestConjunction}}}\xspace}
\newcommand{\interp}[1]{\llbracket #1 \rrbracket}
\newcommand{\name}{\textsc{spyro}\xspace}
\newcommand{\sketch}{\textsc{sketch}\xspace}
\newcommand{\smt}{\textsc{smt}\xspace}
\newcommand{\jlibsketch}{\textsc{JLibSketch}\xspace}
\newcommand{\sygus}{\textsc{SyGuS}\xspace}
\newcommand{\semgus}{\textsc{SemGuS}\xspace}
\newcommand{\synquid}{\textsc{Synquid}\xspace}
\newcommand{\messy}{\textsc{MESSY}\xspace}
\newcommand{\numNotimeout}{40\xspace}
\newcommand{\numSolved}{41\xspace}
\newcommand{\numBenchmarks}{45\xspace}
\newcommand{\numcvcSygusFailed}{4\xspace}
\newcommand{\numcvcSygusSolved}{6\xspace}
\newcommand{\numcvcSygusBenchmarks}{10\xspace}
\newcommand{\numSketchNotimeout}{34\xspace}
\newcommand{\numSketchSolved}{35\xspace}
\newcommand{\numSketchBenchmarks}{35\xspace}
\newcommand{\numSygusEquivFail}{6\xspace}
\newcommand{\numSygusEquivBenchmarks}{8\xspace}
\newcommand{\numSygusTimeout}{3\xspace}
\newcommand{\numSygusNotimeout}{1\xspace}
\newcommand{\numSygusBenchmarks}{7\xspace}
\newcommand{\numBVNotimeout}{8\xspace}
\newcommand{\numBVSolved}{9\xspace}
\newcommand{\numBVBenchmarks}{9\xspace}
\newcommand{\numSensitivityBenchmarks}{9\xspace}
\newcommand{\numSensitivityBenchmarksTotal}{18\xspace}
\newcommand{\numSynquidResynthFailed}{4\xspace}
\newcommand{\numSynquidResynth}{12\xspace}
\newcommand{\numSynquidSynth}{16\xspace}
\newcommand{\numSynquidBenchmarks}{24\xspace}
\newcommand{\numOtherBenchmarks}{14\xspace}
\newcommand{\numDafnyProved}{23\xspace}
\newcommand{\numDafnyProvedLemma}{33\xspace}
\newcommand{\numDafnyTotal}{35\xspace}
\newcommand{\ratioSynth}{13.69\xspace}
\newcommand{\ratioSound}{26.78\xspace}
\newcommand{\ratioPrecision}{42.33\xspace}
\newcommand{\ratioLastCall}{19.61\xspace}
\newcommand{\ratioFreeze}{3.06}
\newcommand{\exname}[1]{\texttt{#1}}
\newcommand{\x}{x}
\newcommand{\y}{y}
\newcommand{\vard}{d}
\newcommand{\varo}{o}
\newcommand{\vark}{k}
\newcommand{\varl}{l}
\newcommand{\varn}{n}
\newcommand{\varm}{m}
\newcommand{\varb}{b}
\newcommand{\varq}{q}
\newcommand{\vart}{t}
\newcommand{\vars}{s}
\newcommand{\varval}{v}
\newcommand{\idx}{idx}
\newcommand{\hdist}{D_h}
\newcommand{\edist}{D_e}
\newcommand{\arr}{arr}
\newcommand{\xone}{x_1}
\newcommand{\xtwo}{x_2}
\newcommand{\xthree}{x_3}
\newcommand{\yone}{y_1}
\newcommand{\ytwo}{y_2}
\newcommand{\oone}{o_1}
\newcommand{\otwo}{o_2}
\newcommand{\valone}{v_1}
\newcommand{\valtwo}{v_2}
\newcommand{\lone}{l_1}
\newcommand{\ltwo}{l_2}
\newcommand{\sone}{s_1}
\newcommand{\stwo}{s_2}
\newcommand{\tone}{t_1}
\newcommand{\ttwo}{t_2}
\newcommand{\tn}{t_n}
\newcommand{\kone}{k_1}
\newcommand{\ktwo}{k_2}
\newcommand{\done}{d_1}
\newcommand{\dtwo}{d_2}
\newcommand{\xout}{ox}
\newcommand{\valout}{ov}
\newcommand{\sout}{os}
\newcommand{\soneout}{os_1}
\newcommand{\stwoout}{os_2}
\newcommand{\qout}{oq}
\newcommand{\tout}{ot}
\newcommand{\lout}{ol}
\newcommand{\mout}{om}
\newcommand{\loneout}{ol_1}
\newcommand{\ltwoout}{ol_2}
\newcommand{\toneout}{ot_1}
\newcommand{\ttwoout}{ot_2}
\newcommand{\error}{err}
\newcommand{\setempty}{\exname{emptySet}}
\newcommand{\setremove}{\exname{remove}}
\newcommand{\setadd}{\exname{add}}
\newcommand{\setcontains}{\exname{contains}}
\newcommand{\setsize}{\exname{size}}
\newcommand{\arrayempty}{\exname{emptyArray}}
\newcommand{\arrayadd}{\exname{add}}
\newcommand{\arrayget}{\exname{get}}
\newcommand{\mapempty}{\exname{emptyMap}}
\newcommand{\mapput}{\exname{put}}
\newcommand{\mapget}{\exname{get}}
\newcommand{\maxtwo}{\exname{max2}}
\newcommand{\maxthree}{\exname{max3}}
\newcommand{\maxfour}{\exname{max4}}
\newcommand{\maxfive}{\exname{max5}}
\newcommand{\diff}{\exname{diff}}
\newcommand{\diffone}{\exname{diff1}}
\newcommand{\difftwo}{\exname{diff2}}
\newcommand{\arraytwo}{\exname{arrSearch2}}
\newcommand{\arraythree}{\exname{arrSearch3}}
\newcommand{\opfoo}{\exname{foo}}
\newcommand{\lappend}{\exname{append}}
\newcommand{\lemptylist}{\exname{[]}}
\newcommand{\lcons}{\exname{cons}}
\newcommand{\ldelete}{\exname{deleteFirst}}
\newcommand{\ldeleteall}{\exname{delete}}
\newcommand{\ldrop}{\exname{drop}}
\newcommand{\lelem}{\exname{elem}}
\newcommand{\lelemidx}{\exname{elemIndex}}
\newcommand{\lith}{\exname{ith}}
\newcommand{\lmin}{\exname{min}} 
\newcommand{\lreplicate}{\exname{replicate}}
\newcommand{\lreverse}{\exname{reverse}\xspace}
\newcommand{\lreversetwice}{\exname{reverse\_twice}}
\newcommand{\lsnoc}{\exname{snoc}}
\newcommand{\lstutter}{\exname{stutter}}
\newcommand{\ltake}{\exname{take}}
\newcommand{\ltail}{\exname{tail}}
\newcommand{\lconsdelete}{\exname{cons\_delete}}
\newcommand{\bvsquare}{\exname{square}}
\newcommand{\bvcube}{\exname{cube}}
\newcommand{\bvhalf}{\exname{half}}
\newcommand{\bvsquareineq}{\exname{squareIneq}}
\newcommand{\bvconstneq}{\exname{constNeq}}
\newcommand{\bvconj}{\exname{conjunction}}
\newcommand{\bvsinglepoint}{\exname{singlePoint}}
\newcommand{\bvfourpoints}{\exname{fourPoints}}
\newcommand{\bvdisj}{\exname{disjunction}}
\newcommand{\telem}{\exname{elem}}
\newcommand{\tempty}{\exname{emptyTree}}
\newcommand{\tbranch}{\exname{branch}}
\newcommand{\tleft}{\exname{left}}
\newcommand{\tright}{\exname{right}}
\newcommand{\trootval}{\exname{rootval}}
\newcommand{\bstempty}{\exname{emptyBST}}
\newcommand{\bstinsert}{\exname{insert}}
\newcommand{\bstdelete}{\exname{delete}}
\newcommand{\bstfind}{\exname{elem}}
\newcommand{\stempty}{\exname{emptyStack}}
\newcommand{\stpush}{\exname{push}}
\newcommand{\stpop}{\exname{pop}}
\newcommand{\stpushpop}{\exname{push\_pop}}
\newcommand{\qempty}{\exname{emptyQueue}}
\newcommand{\qenqueue}{\exname{enqueue}}
\newcommand{\qdequeue}{\exname{dequeue}}
\newcommand{\qtolist}{\exname{toList}}
\newcommand{\iaabs}{\exname{abs}}
\newcommand{\ialinsum}{\exname{linSum}}
\newcommand{\ianonlinsum}{\exname{nonlinSum}}
\newcommand{\isempty}{\exname{isEmpty}}
\newcommand{\equal}{\exname{eq}}
\newcommand{\size}{\exname{len}}
\newcommand{\forallm}{\forall}
\newcommand{\existsm}{\exists}
\newcommand{\BV}{\mathit{BV}}
\newcommand{\BVCONJ}{{\mathit{BV}(\land)}}
\newcommand{\alphaHat}{\widehat{\alpha}}
\newcommand{\nonprop}{D}
\newcommand{\nonap}{AP}
\newcommand{\nonint}{I}
\newcommand{\nonsize}{S}
\newcommand{\nonlist}{L}
\newcommand{\nonstack}{ST}
\newcommand{\nonqueue}{Q}
\newcommand{\nontree}{T}
\newcommand{\nonhash}{H}
\newcommand{\nonmap}{M}
\newcommand{\nonkey}{K}
\newcommand{\nonvalue}{V}
\newcommand{\noncoeff}{C}
\newcommand{\nonguard}{G}
\newcommand{\inputvars}{\langle \text{input variables} \rangle}
\newcommand{\outputvars}{\langle \text{output variables} \rangle}
\newcommand{\progconsts}{\langle \text{constants from code} \rangle}
\definecolor{lgreen}{RGB}{248,255,248}
\definecolor{dgreen}{RGB}{0,128,0}
\definecolor{titlecol}{RGB}{170,199,250}
\definecolor{dred}{RGB}{200,0,0}
\definecolor{lyellow}{RGB}{255,255,40}
\definecolor{plum}{rgb}{0.56, 0.27, 0.52}
\newcommand{\hlm}[1]{\colorbox{lyellow}{$#1$}}
\newcommand{\hlg}[1]{\colorbox{lime}{$#1$}}
\newcommand{\hlr}[1]{\colorbox{pink}{$#1$}}
\newenvironment{mybox}[1][gray!20]{
	\begin{tcolorbox}[   
		breakable,
		left=0pt,
		right=0pt,
		top=0pt,
		bottom=-1pt,
		colback=#1,
		colframe=#1,
		width=\dimexpr\textwidth\relax,
		boxsep=2pt,
		arc=0pt,outer arc=0pt,
		]
	}{
\end{tcolorbox}
}
\newtheorem{example}{Example}[section]
\newtheorem{definition}{Definition}[section]
\setlist[itemize]{align=parleft,left=0pt..1em, topsep=2pt}
\setlist[description]{topsep=2pt}
\begin{document}

\title{Synthesizing Specifications}

\author{Kanghee Park}
\orcid{0009-0005-7983-233X}
\email{khpark@cs.wisc.edu}
\affiliation{%
  \institution{University of Wisconsin-Madison}
  \city{}
  \country{USA}
}

\author{Loris D'Antoni}
\orcid{0000-0001-9625-4037}
\email{loris@cs.wisc.edu}
\affiliation{%
  \institution{University of Wisconsin-Madison}
  \city{}
  \country{USA}
}

\author{Thomas Reps}
\orcid{0000-0002-5676-9949}
\email{reps@cs.wisc.edu}
\affiliation{%
  \institution{University of Wisconsin-Madison}
  \city{}
  \country{USA}
}

\begin{abstract}
Every program should be accompanied by a specification that describes important aspects of the code's behavior, but writing good specifications is often harder than writing the code itself.
%
This paper addresses the problem of synthesizing specifications automatically, guided by user-supplied inputs of two kinds:
\rone a query $\query$ posed about a set of function definitions, and
\rtwo a domain-specific language $\lang$ in which each extracted property $\phil_i$ is to be expressed
(we call properties in the language \lproperties).
Each of the $\phil_i$ is a \emph{best} \lproperty for $\signatures$:
there is no other \lproperty for $\signatures$ that is strictly more precise than $\phil_i$.
Furthermore, the set $\{ \phil_i \}$ is exhaustive:
no more \lproperties can be added to it to make the conjunction $\bigwedge_i \phil_i$ more precise.

We implemented our method in a tool, \name.
The ability to modify both $\query$ and $\lang$ provides a \name user  with ways to customize the kind of specification to be synthesized.
We use this ability to show that \name can be used in a variety of applications, such as mining program specifications,
performing abstract-domain operations,
and synthesizing algebraic properties of program modules.
\end{abstract}

\begin{CCSXML}
<ccs2012>
   <concept>
       <concept_id>10003752.10010124.10010138.10010140</concept_id>
       <concept_desc>Theory of computation~Program specifications</concept_desc>
       <concept_significance>500</concept_significance>
       </concept>
   <concept>
       <concept_id>10011007.10010940.10010992.10010998.10011000</concept_id>
       <concept_desc>Software and its engineering~Automated static analysis</concept_desc>
       <concept_significance>500</concept_significance>
       </concept>
   <concept>
       <concept_id>10003752.10003790.10011119</concept_id>
       <concept_desc>Theory of computation~Abstraction</concept_desc>
       <concept_significance>500</concept_significance>
       </concept>
   <concept>
       <concept_id>10011007.10011074.10011092.10011782</concept_id>
       <concept_desc>Software and its engineering~Automatic programming</concept_desc>
       <concept_significance>500</concept_significance>
       </concept>
 </ccs2012>
\end{CCSXML}

\ccsdesc[500]{Theory of computation~Program specifications}
\ccsdesc[500]{Software and its engineering~Automated static analysis}
\ccsdesc[500]{Theory of computation~Abstraction}
\ccsdesc[500]{Software and its engineering~Automatic programming}

\keywords{Program Specifications, Program Synthesis}

\maketitle

\section{Introduction}
\label{Se:Introduction}

Specifications make us understand how code behaves.
They also have many uses in testing, verifying, repairing, and synthesizing code.
Because programmers iteratively refine their code to meet a desired intent (that often changes along the way), writing and maintaining specifications is often harder than writing and maintaining the code itself.
A number of approaches have been proposed for automatically generating specifications, but these approaches are restricted to certain types of specifications, limited types of properties, and are based on dynamic testing---i.e., they yield likely specifications that though correct on the observed test cases might be unsound in general.

In this paper, we present the first customizable framework for synthesizing provably sound, most-precise (i.e., ``best'') specifications from a given set of function definitions.
Our framework can be used to mine specifications from code, but also to enable several applications where obtaining precise specifications is crucial---e.g., generating algebraic specifications for modular program synthesis~\cite{mariano2019algspecs}, automating sensitivity analysis of programs~\cite{DAntoniGAHP13}, and enabling abstract interpretation for new abstract domains \cite{DBLP:journals/pacmpl/YaoSHZ21}.
The engine/primitive that drives the framework is an algorithm for the following problem:
\textbf{\textit{
Given a query $\query$ posed about a set of function definitions, find a most-precise conjunctive formula---expressed in the user-supplied, domain-specific language (DSL) $\lang$---that is implied by $\query$.
}}

Our algorithm 
synthesizes a set of properties $\{ \phil_i \}$---each expressed in $\lang$---that are consequences of $\query$. 
Each $\phil_i$ is a most-precise \lproperty for $\signatures$:
there is no \lproperty for $\signatures$ that is strictly more precise than $\phil_i$.
Furthermore, the set $\{ \phil_i \}$ is exhaustive:
the conjunction of the properties $\bigwedge_i \phil_i$ is a best \lconjunction---i.e., no \lproperties can be added to make the conjunction more precise.

This primitive is quite flexible.
For instance, if the user desires a specification of the \emph{input-output behavior} of a function $\opfoo$, then $\query$ is ``$\textit{out} = \opfoo(\textit{in})$.''
Here the objective of specification synthesis is to perform a kind of ``projection" operation that provides information about the behavior of $\opfoo$ solely in terms of the variables visible on entry and exit (i.e., the ones visible to a client of $\opfoo$).
On the other hand, if the user wants a specification of the properties of \emph{combinations} of the stack operations $\stpush$ and $\stpop$, then $\query$ is
$\soneout {=} \stpush(\sone, \xone) \land (\stwoout, \xtwo) {=} \stpop(\stwo)$.
In this case, when given an appropriate DSL $\lang$, our tool \name synthesizes the \lproperties
\rone $\equal(\soneout, \stwo) \Rightarrow \xone {=} \xtwo$ (the element $\xout$ obtained after $(\sout, \xout) = \stpop(\stpush(s,x))$ is the pushed element $x$),
\rtwo $\equal(\soneout, \stwo) \Rightarrow \equal(\sone, \stwoout)$ (the stack $\sout$ obtained after $(\sout, \xout) = \stpop(\stpush(s,x))$ is the original stack $s$), and
\rthree $\equal(\stwoout, \sone) \wedge \xone {=} \xtwo \Rightarrow \equal(\soneout, \stwo)$ (after $\stpop(s)$, a $\stpush$ of the popped element restores stack $s$).

Our approach is different from existing specification-mining algorithms in that it meets three important objectives:
\rone \textbf{Expressiveness:}
    The synthesized specifications are not limited to a fixed type, but are customizable through the user-provided DSL.
\rtwo\textbf{Soundness:}
    The synthesized specifications are sound for all inputs to the function definitions, not just for a specified set of test cases.\footnote{As explained in \S\ref{se:problem-definition}, the framework requires a definition of the semantics of the function symbols that appear in query $\signatures$ (e.g., $\stpush$, $\stpop$, $\lreverse$, etc.) and DSL $\lang$.
The specification obtained with our framework is \emph{sound with respect to the supplied semantics}, but our implementation sometimes uses bounded or approximate semantics}.
\rthree\textbf{Precision:}
    The synthesized specifications are precise in that no specification in the DSL $\lang$ is more precise than the synthesized ones---both at the level of each individual \lproperty $\phil_i$ and at the level of the synthesized best \lconjunction $\bigwedge_i \phil_i$.\footnote{Readers familiar with symbolic methods for abstraction interpretation~\cite{DBLP:conf/vmcai/RepsSY04} will recognize that our problem is an instance of the \emph{strongest-consequence problem}.
  Given a formula $\Phi$ in logic $\mathcal{L}_1$ (with meaning function $\interp{\cdot}_1$), find the strongest formula $\Psi$ in a different logic $\mathcal{L}_2$ (with meaning function $\interp{\cdot}_2$) such that $\interp{\Phi}_1 \subseteq \interp{\Psi}_2$.
  %
  }


\mypar{The Key Challenge}
Finding a sound \lproperty for $\query$ is trivial: ``$\textit{true}$'' is always one.
However, finding a \emph{best} \lproperty is difficult because it is a kind of optimization problem, requiring \name to find a \emph{most-precise} solution.
Proving most-preciseness of an \lproperty requires showing that synthesizing a more precise \lproperty is an \emph{unrealizable} problem---i.e., no such a property exists.

\mypar{Key Ideas}
Our algorithm is based on a form of counterexample-guided inductive synthesis (CEGIS) that iteratively accumulates positive and negative examples of $\query$.
We build upon the algorithm for synthesizing abstract transformers proposed by  \citet{DBLP:journals/pacmpl/KalitaMDRR22} and generalize it to the problem of synthesizing specifications.
Our algorithm uses three key ideas (the first and third of which differ from \citeauthor{DBLP:journals/pacmpl/KalitaMDRR22}).
First, although the overall goal of the framework is to obtain a specification as a conjunctive formula, the core algorithm that underlies the synthesis process ``dives below'' this goal and focuses on synthesizing a most-precise \emph{individual conjunct} (i.e., a  most-precise \lproperty for $\signatures$).
The smaller size of individual \lproperties compared to a full conjunctive \lconjunction is better aligned with the capabilities of synthesis tools.
Second, to handle competing objectives of soundness and precision, the algorithm treats negative examples specially.
Some negative examples \textit{must} be rejected by the \lproperty we are synthesizing and some \textit{may} be rejected.
Third, to speed up progress, the algorithm accumulates \textit{must} negative examples \emph{monotonically}, so that once a sound \lproperty is identified, 
\textit{may} negative examples change status to \textit{must} negative examples, and the algorithm 
only searches for (better) \lproperties that also reject those examples.

\mypar{Our Framework}
The core algorithm is a CEGIS loop that handles some negative-example classifications as ``maybe'' constraints, and guarantees progress via monotonic constraint hardening until an \lproperty is found that is both sound and precise.
By repeatedly calling the core algorithm to synthesize incomparable \lproperties, a most-precise \lconjunction is created.

The core algorithm relies on three simple primitives.
\begin{description}
  \item[\synth:]
    synthesizes an \lproperty that accepts a set of positive examples and rejects a set of negative examples; it returns $\bot$ if no such a property exists.
  \item[\cs:]
    checks if the current \lproperty is sound;
    if is not, \cs returns a new positive example that the property fails to accept.
  \item[\cp:]
    checks if the current \lproperty is precise;
    if it is not, \cp returns a new \lproperty that accepts all positive examples, rejects all negative examples, and rejects one more negative example (which is also returned).
\end{description}
Our current implementation of each primitive relies on satisfiability modulo theory (SMT) solvers, limiting the scope of our framework.
Nevertheless, as long as one has implementations for such primitives, the algorithm is sound.
When the DSL $\lang$ is finite, the algorithm is also complete.

\mypar{Contributions} Our work makes the following contributions:
\begin{itemize}
  \item
    A formal framework for the problem of synthesizing best \lconjunctions (\S\ref{se:problem-definition}).
  \item
    An algorithm for synthesizing best \lconjunctions  (\S\ref{se:algorithm}).
  \item
    A tool that we implemented to support our framework, called \name.
    There are two instantiations of \name: \name[\sketch] and \name[\smt], which have different capabilities
    (see \S\ref{se:implementation}).
  \item
    An evaluation of \name on a variety of benchmarks, showcasing four different applications of \name (\S\ref{se:evaluation}): mining specifications~\cite{BOOK:MSSSA17}, generating algebraic specifications for modular program synthesis~\cite{mariano2019algspecs}, automating sensitivity analysis of programs~\cite{DAntoniGAHP13}, and enabling abstract interpretation for new abstract domains \cite{DBLP:journals/pacmpl/YaoSHZ21}.
  %
    
\end{itemize}
\S\ref{se:related-work} discusses related work.
\S\ref{se:conclusion} concludes.
In the extended paper~\cite{park2023synthesizing},
\S\ref{App:Proofs} contains proofs;
\S\ref{app:sketch} contains implementation details;
and \S\ref{App:data} contains further details about the evaluation.

\section{Problem Definition}
\label{se:problem-definition}

In this section, we define the problem addressed by our framework. 
Throughout the paper, we use a running example in which the goal is to synthesize interesting consequences of the following query, which allows obtaining properties of up to two calls of the list-reversal function.
\begin{equation}
  \label{eq:signatures-rev2}
  \loneout = \lreverse(\lone) \land \ltwoout = \lreverse(\ltwo),
\end{equation}
In particular, we are interested in identifying properties that are consequences of query formula \eqref{eq:signatures-rev2} and are expressible in the DSL defined by the following grammar $\lang_{\textit{list}}$:
\begin{equation}
\label{eq:grammar-rev2}    
\begin{array}{rcl}
    \nonprop  & := & \top \mid \nonap \mid \nonap \vee \nonap \mid \nonap \vee \nonap \vee \nonap \\
    \nonap & := & \isempty(\nonlist) \mid \neg \isempty(\nonlist) \mid \equal(\nonlist, \nonlist) \mid \neg \equal(\nonlist, \nonlist) \mid \nonsize ~\{ \leq \mid < \mid \geq \mid > \mid = \mid \neq \}~ \nonsize + \{0 \mid 1\}\\
    \nonsize  & := & 0 \mid \size(\nonlist) \\
    \nonlist  & := &  \lone \mid \ltwo \mid \loneout \mid \ltwoout
\end{array}
\end{equation}


An \emph{\lproperty} is a property expressible in a DSL $\lang$.
We say that an \lproperty is \emph{sound} if it is a consequence of---i.e., implied by---the given query formula $\query$.
%
The goal of our framework is to synthesize a set of incomparable \textit{sound and most-precise} \lproperties (i.e., a conjunctive specification), not just any \lproperties.
For example, the $\lang_{\textit{list}}$-property $\size(\lone)\leq \size(\loneout)$ is sound but not most-precise ($\size(\lone)= \size(\loneout)$ is a more precise sound $\lang_{\textit{list}}$-property).

In the rest of this section, we describe what a user of the framework has to provide to solve this problem, and what they obtain as output.
The user needs to provide the following inputs:

\noindent
\textbf{Input 1: Query.}
The query formula consists of a finite set of atomic formulae $\signatures=\{\signature_1, \ldots, \signature_n\}$ (denoting their conjunction). 
Each atomic formula $\sigma_i$ is of the form $o^i = f^i(x^i_1, \ldots, x^i_n)$, where $o^i$ is an output variable, each $x^i_j$ is an input variable, and $f^i$ is a function symbol.
In our running example, the query is given in Eq.~\eqref{eq:signatures-rev2}.

\noindent\textbf{Input 2: Grammar of \lproperties.}
The grammar of the DSL $\lang$ in which the synthesizer is to express properties.
In our example, the DSL $\lang_{\textit{list}}$ is defined in Eq.~\eqref{eq:grammar-rev2}.
%

\noindent\textbf{Input 3: Semantics of function symbols.}
A specification of the semantics of the function symbols that appear in query $\signatures$ (e.g., $\lreverse$) and in the DSL $\lang$ (e.g., $\size$).

We assume semantic definitions are given in---or can be translated to---formulas in some logic fragment.
%
%
For pragmatic reasons, in our implementation semantic definitions are given as code that is then automatically transformed into first-order formulas.
For instance, the semantics of $\lreverse$ is given as a program in the
$\sketch$ programming language~\cite{DBLP:journals/sttt/Solar-Lezama13} from which we automatically extract
the following bounded semantics $\varphi_\lreverse^k(l, ol)$ for a given bound $k > 0$:
\begin{equation}
\label{eq:semantics-formula-example}
\begin{array}{r@{\hspace{0.5ex}}c@{\hspace{0.5ex}}l}
\varphi^0_\lreverse(l, ol) & := & \fls
\\
\varphi^n_\lreverse(l, ol) & := &
[\varphi_\equal(l, []) \Rightarrow \varphi_\equal(ol, [])]\;\wedge \\
& & 
\exists v_{hd}, l_{tl}, l' [ \varphi_\equal(l, v_{hd} :: l_{tl}) \Rightarrow 
\varphi_\lreverse^{n-1}(l_{tl}, l') \wedge
\varphi_\lsnoc(l', v_{hd}, ol)]
\hfill \textit{ if } n > 0
\end{array}
\end{equation}
We discuss this limitation---i.e., that the semantics is bounded---and how we mitigate it in Section~\ref{se:evaluation}.

Let $\varsig$ be the set of all variables in $\signatures$.
We use $\phiprog$ to denote the formula that exactly characterizes the space of valid models over the variables $\varsig$ in $\signatures$.
For example, let $\varphi_\lreverse(l,ol)$ be the formula that exactly characterizes the result of reversing a list $l$ and storing the result in $ol$---e.g., $(l,ol)=([1,2],[2,1])$ is a valid model of $\varphi_\lreverse$.
We use $\interp{\phil}$ to denote the set of models of a formula $\phil$.
Then, in our example 
$\interp{\phiprog(\lone, \loneout,\ltwo,\ltwoout)} = \interp{\varphi_\lreverse(\lone, \loneout) \wedge \varphi_\lreverse(\ltwo, \ltwoout)}$.
Henceforth, we omit variables in formulas when no confusion should result, and merely write $\phiprog$.

\noindent\textbf{Output: Best \lproperties.}
The goal of our method is to synthesize a set of incomparable \textit{sound and most precise} \lproperties that are consequences of query $\query$.
Ideally, the best \lproperty would be one that exactly describes $\phiprog$, but in general the language $\lang$ might not be expressive enough to do so.
We argue that this feature is actually a desirable one!\footnote{
  The idea of imposing a limit on the expressibility of the language in which properties can be stated is related to the concept of \emph{inductive bias} in machine learning \cite[\S2.7]{Book:Mitchell97}: When there is no inductive bias, ``[a] concept learning algorithm is ... completely unable to generalize beyond the observed examples.''
}
The customizability of our approach via a DSL is what allows our work to focus on identifying small and readable properties (rather than complex first-order formulas), and to apply our method to different use cases (see \S\ref{se:evaluation}).

Because in general there might not be an \lproperty that is equivalent to $\phiprog$, the goal becomes instead to find \lproperties that tightly approximate $\phiprog$.
\begin{definition}[A best \lproperty]
An \lproperty $\phil$ is \emph{a best \lproperty} for a query $\signatures$
if and only if
\rone $\phil$ is \emph{sound} with respect to $\signatures$: $\interp{\phiprog} \subseteq \interp{\phil}$.
\rtwo $\phil$ is \emph{precise} with respect to $\signatures$ and $\lang$: $\neg \exists \phil' \in \lang.\, \interp{\phiprog} \subseteq \interp{\phil'} \subset \interp{\phil}$. 
We use $\bestset(\signatures)$ to denote the set of all best \lproperties for $\signatures$.
\end{definition}

When we refer to ``a sound \lproperty,'' soundness is always relative to some $\phiprog$.
Strictly speaking, we should say ``a $\phiprog$-sound \lproperty,'' but  $\phiprog$ should always be clear from context.

A best \lproperty is \textit{a strongest consequence} of $\phiprog$ that is expressible in $\lang$.
Because $\lang$ is constrained, there may be multiple, incomparable \lproperties that are all strongest consequences of $\phiprog$---thus, we speak of \textit{a} best \lproperty and not \textit{the} best \lproperty.
In our running example, $\size(\lone)= \size(\loneout)$ is a best \lproperty and so is $\neg \equal(\loneout, \ltwo) \vee \equal(\lone, \ltwoout)$. (Stated as an implication: $\equal(\loneout, \ltwo) \Rightarrow \equal(\lone, \ltwoout)$.)
The former states that the sizes of the input and output of reverse are the same, while the latter states that applying reverse twice to a list yields the same list.

The goal of this paper is to find semantically minimal sets of incomparable best \lproperties.

\begin{definition}[Best \lconjunction]
A potentially infinite set of \lproperties $\Pi = \{\phil_i\}$ forms a \emph{best \lconjunction} $\phiand=\bigwedge_{i} \phil_i$ for query $\signatures$ if and only if
\rone
    every  $\phil\in \Pi$ is a best \lproperty for $\signatures$;
  \rtwo
    every two distinct $\phil_i, \phil_j \in \Pi$ are \emph{incomparable}---i.e.,  $\interp{\phil_i}\setminus \interp{\phil_j}\neq \emptyset$ and $\interp{\phil_j}\setminus \interp{\phil_i}\neq \emptyset$;
  \rthree
    the set is
    \emph{semantically minimal}---i.e.,
    for every best \lproperty $\phil\in \bestset(\signatures)$ we have $\interp{\phiand}\subseteq \interp{\phil}$.
\end{definition}

While there can be multiple best \lconjunctions, they are all logically equivalent
and they are all equivalent to a strongest \lconjunction. However, note that a strongest \lconjunction is not necessarily a best \lconjunction; it could contain \lproperties that are not best and could potentially have repeated or redundant \lproperties.
\begin{restatable}{theorem}{bestconjunction}
If $\phiand$ is a best \lconjunction, then its interpretation coincides with the conjunction of all possible best properties:
$\interp{\phiand}=\interp{\bigwedge_{\phil\in \bestset(\signatures)} \phil}$.
\end{restatable}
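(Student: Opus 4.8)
The plan is to prove the set equality by establishing the two inclusions separately, reading the (possibly infinitary) conjunctions semantically as intersections of model sets: $\interp{\bigwedge_{\phil \in \bestset(\signatures)} \phil} = \bigcap_{\phil \in \bestset(\signatures)} \interp{\phil}$, and likewise $\interp{\phiand} = \bigcap_{\phil \in \Pi} \interp{\phil}$, where $\Pi = \{\phil_i\}$ is the set of \lproperties underlying the given best \lconjunction. With this reading in place, both inclusions reduce to monotonicity of intersection under inclusion of index sets, combined with the defining conditions of a best \lconjunction.

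For the inclusion $\interp{\phiand} \subseteq \interp{\bigwedge_{\phil \in \bestset(\signatures)} \phil}$, I would invoke condition (iii) of the definition of best \lconjunction (semantic minimality) directly: it asserts that for every best \lproperty $\phil \in \bestset(\signatures)$ we have $\interp{\phiand} \subseteq \interp{\phil}$. Since this holds for each conjunct appearing on the right-hand side, $\interp{\phiand}$ is contained in the intersection $\bigcap_{\phil \in \bestset(\signatures)} \interp{\phil}$, which is precisely the interpretation of the right-hand conjunction.

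For the reverse inclusion $\interp{\bigwedge_{\phil \in \bestset(\signatures)} \phil} \subseteq \interp{\phiand}$, I would use condition (i): every $\phil \in \Pi$ is a best \lproperty, so $\Pi \subseteq \bestset(\signatures)$. Intersecting over a larger index set yields a smaller set, hence $\bigcap_{\phil \in \bestset(\signatures)} \interp{\phil} \subseteq \bigcap_{\phil \in \Pi} \interp{\phil} = \interp{\phiand}$. Combining the two inclusions then gives the claimed equality.

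The argument is essentially a direct unwinding of the definitions, so I do not anticipate a genuine obstacle; the only point that warrants care is fixing the semantic reading of the possibly infinite conjunction $\bigwedge_{\phil \in \bestset(\signatures)} \phil$ as the intersection $\bigcap_{\phil \in \bestset(\signatures)} \interp{\phil}$ over all best \lproperties, and making sure monotonicity of intersection over an infinite index set is applied legitimately. Once that is settled, condition (iii) supplies one inclusion verbatim and condition (i) plus monotonicity supplies the other; notably, condition (ii) (incomparability of the $\phil_i$) is not needed for this statement.
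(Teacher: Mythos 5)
Your proposal is correct and follows essentially the same route as the paper's proof: the $\subseteq$ direction from semantic minimality (condition \rthree of the definition of a best \lconjunction) and the $\supseteq$ direction from the fact that each conjunct of $\phiand$ belongs to $\bestset(\signatures)$, with intersections monotone in the index set. Your additional observation that condition \rtwo (incomparability) is never used is also accurate.
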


We are now ready to state our problem definition:
\begin{definition}[Problem definition]
  Given query $\signatures$,
  the concrete semantics $\phiprog$ for the function symbols in $\signatures$, and a domain-specific language $\lang$ with its corresponding semantic definition, synthesize a best \lconjunction for $\signatures$. 
\end{definition}

\noindent
As illustrated in Section~\ref{se:algorithm}, given query~\eqref{eq:signatures-rev2},
the DSL in Eq.~\eqref{eq:grammar-rev2}, 
and the semantic definitions of $\lreverse$, $\isempty$, $\size$, etc., our tool \name synthesizes the set of \lproperties shown below in Eq.~\eqref{eq:reverse-prop}, and establishes that the conjunction of these properties is a best \lconjunction.
(For clarity, we write properties of the form $\neg a \vee b$ as $a\Rightarrow b$.)
\begin{equation}
\label{eq:reverse-prop}
\begin{array}{c@{\hspace{6.0ex}}c@{\hspace{6.0ex}}c}
\size(\lone) = \size(\loneout) & \size(\ltwo) = \size(\ltwoout) &
\equal(\ltwo, \ltwoout) \vee \size(\ltwo) > 1 \\
\size(\lone) > 1 \vee \equal(\lone, \loneout) &
\equal(\ltwoout, \lone) \Rightarrow \equal(\ltwo, \loneout) & \equal(\loneout, \ltwoout) \Rightarrow \equal(\lone, \ltwo)\\
\equal(\loneout, \ltwo) \Rightarrow \equal(\lone, \ltwoout) & \equal(\lone, \ltwo) \Rightarrow \equal(\loneout, \ltwoout)\\
\end{array}
\end{equation}

Even though $\lreverse$ is a simple function,
its corresponding best \lconjunction (w.r.t. the DSL $\lang$) for query~\eqref{eq:signatures-rev2} is non-trivial.
For example, our approach can discover properties involving single function calls (e.g., $\lreverse$ behaves like the identity function on a list of length 0 or 1), but also hyperproperties, i.e., properties involving multiple calls to the same function. 
For example, the property $\equal(\loneout, \ltwo) \Rightarrow \equal(\lone, \ltwoout)$ states that applying the \lreverse function twice to an input returns the same input, while the property $\equal(\loneout, \ltwoout) \Rightarrow \equal(\lone, \ltwo)$ shows that $\lreverse$ is injective!

Moreover, because the user has control over the DSL $\lang$, they can change the language in which properties are to be expressed.
In particular, if the formulas returned by \name are too complicated for the user's taste, they can modify $\lang$ and reinvoke \name until they are satisfied with the results.

Depending on the DSL, a best \lconjunction may need to be an infinite formula.

\begin{example}[Infinite \lconjunction]\label{Exa:InfiniteBestLConjunction}
Consider again the running example, and assume we change the DSL to the one defined by the following grammar $\lang_{\textit{inf}}$:
\[
\begin{array}{rcl}
    \textit{Root}   :=  \lone = \textit{CL} \land \ltwo = \loneout \Rightarrow \ltwoout = \textit{CL} && \textit{CL}  :=  \lemptylist \mid 1 :: \textit{CL} \mid 2 :: \textit{CL}
\end{array}
\]
where ``::'' denotes the infix $\lcons$ operator.
There exists only one best $\lang_{\textit{inf}}$-conjunction, which has an infinite number of conjuncts.
\[
\begin{array}{c@{\hspace{5.0ex}}c}
  \lone = \lemptylist \land \ltwo = \loneout \Rightarrow \ltwoout = \lemptylist 
  &
  \lone = 1 :: \lemptylist \land \ltwo = \loneout \Rightarrow \ltwoout = 1 :: \lemptylist
  \\
  \lone = 2 :: \lemptylist \land \ltwo = \loneout \Rightarrow \ltwoout = 2 :: \lemptylist
  &
  \lone = 1 :: 1 :: \lemptylist \land \ltwo = \loneout \Rightarrow \ltwoout = 1 :: 1 :: \lemptylist
  \\
  \lone = 1 :: 2 :: \lemptylist \land \ltwo = \loneout \Rightarrow \ltwoout = 1 :: 2 ::\lemptylist
  &
  \ldots
\end{array}
\]
\end{example}
Our implementation focuses on DSLs for which this problem does not arise. 
Assumptions on DSLs are formally discussed in Section~\ref{se:algorithm}.

All the inputs to the framework are reusable.
To synthesize a best \lconjunction for a different query $\signatures$ that still operates over lists, one only needs to supply the semantic definition of the functions in $\signatures$, and (if needed) modify the variable names generated by nonterminal $\nonlist$ of Eq.~\eqref{eq:grammar-rev2}.

For example, for the function that takes a list and duplicates its entries $\lout= \lstutter(\varl)$---e.g., $\lstutter([1,2])=[1,1,2,2]$)---\name synthesizes the following \lconjunction using the DSL $\lang_{\textit{list}}$.
\begin{equation}
\begin{array}{c}
    \size(\lout) {=} \size(\varl) {+} 1  \vee \size(\varl) {>} 1 \vee \isempty(\lout)\\ 
    \size(\lout) {\leq} 0 \vee \size(\varl) {=} 1 \vee \size(\lout) {>} \size(\varl) {+} 1\qquad
    \size(\lout) {\leq} 0 \vee \size(\lout) {>} \size(l) \\
\end{array}
\end{equation}

Because our DSL does not contain multiplication by $2$, \name could not synthesize the property that states that the length of the output list is twice the length of the input list.
If we modify the DSL $\lang_{\textit{list}}$ to contain multiplication by $2$ and the ability to describe when an element appears both in the input and output list, \name successfully synthesizes the following new best \lproperties:
\begin{equation}
\label{eq:stutter-newdsl}
\begin{array}{c}
\size(\lout) = 2 \cdot \size(\varl) \\
\forallm \varval.\, (\existsm \x {\in} \varl.\, \x {=} \varval) \Rightarrow (\existsm \x {\in} \lout.\, \x {=} \varval)
\quad \forallm \varval.\, (\existsm \x {\in} \lout.\, \x {=} \varval) \Rightarrow (\existsm \x {\in} \varl.\, \x {=} \varval) \\
\forallm \varval.\, (\forallm \x {\in} \varl.\, \x {\leq} \varval) \Rightarrow (\forallm \x {\in} \lout.\, \x {\leq} \varval)
\quad
\forallm \varval.\, (\forallm \x {\in} \lout.\, \x {\leq} \varval) \Rightarrow (\forallm \x {\in} \varl.\, \x {\leq} \varval)
\\
\forallm \varval.\, (\forallm \x {\in} \varl.\, \x {\geq} \varval) \Rightarrow (\forallm \x {\in} \lout.\, \x {\geq} \varval)
\quad
\forallm \varval.\, (\forallm \x {\in} \lout.\, \x {\geq} \varval) \Rightarrow (\forallm \x {\in} \varl.\, \x {\geq} \varval)
\end{array}
\end{equation}
The ability to modify the DSL empowers the user of \name with ways to customize the type of properties they are interested in synthesizing.
As we will show in Section~\ref{se:evaluation},
customizing the DSL also allows us to use \name for different applications and case studies---e.g., synthesizing abstract transformers and algebraic properties of programs.


\section{An Algorithm for Synthesizing Best \lconjunctions}
\label{se:algorithm}

In this section, we present the main contribution of the paper: an algorithm for synthesizing a best \lconjunction.
The algorithm synthesizes one most-precise \lproperty at a time.
It keeps track of the \lproperties it has synthesized and uses this information to synthesize a new most-precise \lproperty that is incomparable to all the ones synthesized so far.

\subsection{Positive and Negative Examples}
\label{se:examples}

Given query $\signatures$, the concrete semantics $\phiprog$ for the function symbols in $\signatures$, and a domain-specific language $\lang$ with its corresponding semantic definition,
our algorithm synthesizes best \lproperties and a best \lconjunction for $\signatures$ using an example-guided approach.
%
\begin{definition}[Examples]
Given a model $\ex$, we say that $\ex$ is a \emph{positive example} if $\ex\in\interp{\phiprog}$ and a \emph{negative example} otherwise.
\end{definition}
Given a formula $\phil$ and an example $\ex$, we write $\phil(\ex)$ to denote $\ex\in\interp{\phil}$ and $\neg\phil(\ex)$ to denote $\ex\not\in\interp{\phil}$.
Given a  set of examples $E$, we write $\phil(E)$ to denote $\wedge_{\ex\in E} \phiprog(\ex)$ and $\neg\phil(E)$ to denote $\wedge_{\ex\in E} \neg\phiprog(\ex)$.

\begin{example}
Given the query $\loneout = \lreverse(\lone)$,
the model that assigns $\lone$ to the list $[1,2]$ and $\loneout$ to the list $[2,1]$ is a positive example.
For brevity, we use the notation ${\color{dgreen}([1,2],[2,1])}$ to denote such an example. 
The following examples are negative ones: ${\color{dred}([1],[2])}$, ${\color{dred}([1,2],[1,3])}$, ${\color{dred}([1,2],[1])}$, ${\color{dred}([],[1])}$.

When considering the query from Eq.~\eqref{eq:signatures-rev2},
which contains two calls on \lreverse, ${\color{dgreen}([1,2],[2,1],[1],[1])}$ is a positive example (where the values denote $\lone$, $\loneout$, $\ltwo$, and $\ltwoout$, respectively), while ${\color{dred}([1,2],[1,3],[1],[1])}$ and ${\color{dred}([1,2],[1,3],[1],[0])}$ are negative examples.
\end{example}

%
Intuitively,  a best \lproperty must accept all positive examples while also excluding as many negative examples as possible.

Positive examples can be treated as ground truth---i.e., a best \lproperty should always accept all positive examples---but negative examples are more subtle.
First, there can be multiple, incomparable, best \lproperties, each of which rejects a different set of negative examples.
Second, there may be negative examples that no best \lproperty can reject---they are accepted by \emph{every} best \lproperty.

\begin{figure}[tb!]
     \begin{subfigure}[b]{0.48\textwidth}
        \centering
        \scalebox{0.83}{
        \begin{tikzpicture}
        \draw[draw=black,thick] (0,0) rectangle (8,5);
        \node[text width=3cm] at (5.3,4.7) {\small{$\phil_1\equiv\size(\lone) {=} \size(\loneout)$}};
        \draw[color=blue!60, thick,dashed](3.1,2.5) ellipse (2.2 and 2.1);
        \node[text width=4cm] at (6.2,0.35) {\small{$\phil_2\equiv\size(\lone) {>} 1 \vee \equal(\lone, \loneout)$}};
        \draw[color=blue!60, thick,dashed](3.7,2.1) ellipse (2.8 and 1.6);
        \draw[color=dgreen, fill=lgreen, thick](3.1,1.9) ellipse (2.0 and 1);
        \node[text width=3cm] at (3.3,2.4) {\small{$\interp{\loneout=\lreverse(\lone)}$}};

        \node[text width=3cm,color=dred] at (2.0,4.2) {\Small{([],[1, 2])}};
        \node[text width=3cm,color=dred] at (6.8,2.0) {\Small{([1,2],[1])}};        
        \node[text width=3cm,color=dred] at (3.3,4.0) {\Small{([1],[2])}};        
        \node[text width=3cm,color=dred] at (4.8,3.2) {\Small{([1,2],[1,3])}};  
        \node[text width=3cm,color=dgreen] at (5.3,1.9) {\Small{([1],[1])}};
        \end{tikzpicture}
        }
         \caption{
         {\color{dgreen}Positive} and {\color{dred}Negative} examples. Different negative examples are rejected by different best \lproperties $\phil_1$ and $\phil_2$}
         \label{fig:positive-negative-examples}
     \end{subfigure}
    \quad
     \begin{subfigure}[b]{0.48\textwidth}
         \centering
        \scalebox{0.83}{
        \begin{tikzpicture}
        \draw[draw=black,thick] (0,0) rectangle (8,5);
        \draw[color=dgreen, thick, fill=lgreen](3.1,1.9) ellipse (2.0 and 1);
        \node[text width=3cm] at (4.5,3.9) {\small{$\phil_3\equiv\equal(\lone, \loneout)$}};
        \draw[color=blue!60, thick,dashed](3.7,2.6) ellipse (2.1 and 1.0);
        \node[text width=3cm] at (3.3,2.4) {\small{$\interp{\loneout=\lreverse(\lone)}$}};
        
        \node[text width=3cm,color=dred] at (3.3,4.0) {\Small{([1],[2])}};        
        \node[text width=3cm,color=dred] at (5.5,3.0) {\Small{([1,2],[1,2])}};  
        \node[text width=3cm,color=black] at (3.0,1.5) {\Small{\hlg{([1,2],[2,1])}}};
        \node[text width=3cm,color=dgreen] at (5.3,1.9) {\Small{([1],[1])}};
        \end{tikzpicture}
        }
         \caption{A possible soundness counterexample \hlg{([1,2],[2,1])} produced when performing $\cs(\phil_3,\phiprog)$}
         \label{fig:soundness}
     \end{subfigure}
\caption{The role of examples and \cs. \label{fig:examples-and-soudness}}
\end{figure}

\begin{example}\label{Exa:positive-negative-examples}
Consider again the query $\loneout = \lreverse(\lone)$
and the diagram shown in Figure~\ref{fig:positive-negative-examples}.
The \lproperties $\phil_1 \equiv \size(\lone) = \size(\loneout)$ 
and $\phil_2 \equiv \size(\lone) {>} 1 \vee \equal(\lone, \loneout)$ are both best \lproperties.
While they both accept the positive example {\color{dgreen}$([1],[1])$} and reject the negative example {\color{dred}$([],[1, 2])$}, we can see that $\phil_1$ rejects the negative example {\color{dred}$([1,2],[1])$} whereas $\phil_2$ accepts it.
Similarly $\phil_2$ rejects the negative example {\color{dred}$([1],[2])$} whereas $\phil_1$ accepts it.
Finally, neither property rejects the negative example {\color{dred}$([1,2], [1,3])$}.
In fact, no best \lproperty in the given DSL can reject this example, and thus any best \lconjunction will accept this negative example.
\end{example}

In most cases, we use common datatypes---integers or lists---as the domain of our examples;
however, more complicated definitions are sometimes required.
If we are interested in queries involving binary search trees (BSTs), a tree datatype can describe the syntactic structure of the examples, but cannot capture BST invariants, 
e.g., for every node $n$, all values in the left (resp.\ right) subtree of $n$ must be $\leq$ (resp.\ $\geq$) $n$'s value.
In our implementation, the set of valid BSTs is defined using the following \sketch program---called a generator---that uses BST insertion operations to generate valid binary search trees:
\begin{equation*}
\label{eq:nondet-generator}
\texttt{generate\_BST}() := 
\texttt{if } (??) \texttt{ then } \bstempty()
\texttt{ else } \bstinsert(\texttt{generate\_BST}(), ??)
\end{equation*}
The code is then transformed into a bounded formula similar to the one in Eq.~\ref{eq:semantics-formula-example}, but where the holes (i.e., \texttt{??}) at each recursive call are replaced with unknown variables.
In this case, different values for the holes result in different BSTs.
In the rest of the paper, we assume that examples are only drawn from their valid domains regardless of how this domain is expressed.

\subsection{Soundness and Precision}
\label{se:soundness-precision}

Now that we have established how examples relate to best \lproperties, we can introduce the two key operations that make our algorithm work: \cs and \cp.
These two operations are similar to the ones used by \citet{DBLP:journals/pacmpl/KalitaMDRR22} to synthesize abstract transformers, and are used by the inductive-synthesis algorithm to determine whether an \lproperty is sound (i.e., a valid \lproperty) and precise (i.e., a best \lproperty).
We modify the definition of \cp proposed by \citet{DBLP:journals/pacmpl/KalitaMDRR22} to account for already synthesized best \lproperties and thus facilitate the synthesis of distinct best \lproperties.

\subsubsection{Checking Soundness}

Given an \lproperty $\phil$, $\cs(\phil, \phiprog)$ checks whether $\phil$ is an overapproximation of $\phiprog$.
In other words, \cs checks if
there exists a positive example $\posex \in \interp{\phiprog}$ that is not accepted by $\phil$;
it returns that example if it exists, and $\bot$ otherwise.
The soundness check can be expressed as 
$\exists \posex.\, \neg\phil(\posex)\wedge \phiprog(\posex)$.

\begin{example}[\cs]
Consider again the query $\loneout = \lreverse(\lone)$.
We describe how \cs operates using the example depicted in Figure~\ref{fig:soundness}.
The property $\phil_3\equiv \equal(\lone, \loneout)$ is unsound because it does not overapproximate $\phiprog$.
$\cs(\phil_3,\phiprog)$ would return a positive example that is incorrectly rejected by $\phil_3$---e.g., \emph{\hlg{$([1,2],[2,1])$}}.
$\cs(\phil_1,\phiprog)$ on the property $\phil_1\equiv\size(\lone)=\size(\loneout)$ would instead return $\bot$ because the property $\phil_1$ is sound (see Figure~\ref{fig:positive-negative-examples}).
\end{example}

\subsubsection{Checking Precision}
\label{se:CheckingPrecision}

Given an \lproperty $\phil$, a set of positive examples $\eplus$ accepted by $\phil$, a set of negative examples $\eminus$ rejected by $\phil$, and a Boolean formula $\psi$ denoting the set from which examples can be drawn, $\cp(\phil, \psi, \eplus, \eminus)$, checks whether there exist an \lproperty $\phil'$ and a negative example $\negex$ such that:
\rone $\phil'$ accepts all the positive examples in $\eplus$ and rejects all the negative examples in $\eminus$;
\rtwo $\psi(\negex)$ and $\phil'$ rejects $\negex$, whereas $\phil$ accepts $\negex$.
Formally, 
$$\exists \phil', \negex. \; \psi(\negex) \wedge \phil(\negex) \wedge \neg\phil'(\negex) \wedge \phil'(\eplus)\wedge \neg\phil'(\eminus)$$

\cp can be thought of as a primitive that synthesizes a negative example and a formula that can reject such an example at the same time (or proves whether the synthesis problem does not admit a solution).
The formula $\phil'$ is a witness that the negative example $\negex$ can be rejected by some \lproperty.
In our algorithm, the set $\psi$ is used to ensure that the negative example produced by $\cp$ is not already rejected by best \lproperties we already synthesized.

\begin{example}[\cp]
\label{ex:precision}
Consider again the query $\loneout = \lreverse(\lone)$.
We describe how \cp operates using the example depicted in Figure~\ref{fig:precision}.
The property $\phil_4\equiv \size(\lone) > 0 \vee \equal(\lone, \loneout)$ is sound but imprecise.

Figure~\ref{fig:precision-sound} shows how running $\cp(\phil_4,{\neg\phiprog},\{{\color{dgreen}([1],[1])}\},\{{\color{dred}([],[1, 2])}\})$ could for example return $\phil_2\equiv \size(\lone) > 1 \vee \equal(\lone, \loneout)$ and the negative example $\negex=\emph{\hlr{([1],[])}}$.

Figure~\ref{fig:precision-unsound} shows how running $\cp(\phil_4,{\neg\phiprog},\{{\color{dgreen}([1],[1])}\},\{{\color{dred}([],[1, 2])}\})$ could alternatively return $\phil_3\equiv\equal(\lone, \loneout)$ and the negative example $\negex={\color{dred}([1,2],[1,3])}$.
While $\phil_3$ satisfies all the requirements of \cp---i.e., it correctly classifies the current positive and negative examples---the formula $\phil_3$ is unsound because it incorrectly rejects, among others, the positive example ${\color{dgreen}([1,2],[2,1])}$.
Moreover, in this case \cp has returned the negative example $\emph{\hlr{([1,2],[1,3])}}$, which---as observed in Example~\ref{Exa:positive-negative-examples}---is not rejected by any sound \lproperty!
\end{example}

\begin{figure}[tb!]
     \begin{subfigure}[b]{0.48\textwidth}
        \centering
        \scalebox{0.82}{
        \begin{tikzpicture}
        \draw[draw=black,thick] (0,0) rectangle (8,5);
        \node[text width=4cm] at (5.3,4.7) {\small{$\phil_4\equiv\size(\lone) > 0 \vee \equal(\lone, \loneout)$}};
        \draw[color=blue!60, thick,dashed](4,2.25) ellipse (3.7 and 2.2);
        \node[text width=4cm] at (4.3,3.9) {\small{$\phil_2\equiv\size(\lone) > 1 \vee \equal(\lone, \loneout)$}};
        \draw[color=dred, thick,dashed](3.7,2.1) ellipse (2.8 and 1.6);
        \draw[color=dgreen, fill=lgreen, thick](3.1,1.9) ellipse (2.0 and 1);
        \node[text width=3cm] at (3.3,2.4) {\small{$\interp{\loneout=\lreverse(\lone)}$}};
        \node[text width=3cm,color=dred] at (2.0,4.2) {\Small{([],[1, 2])}};
        \node[text width=3cm,color=black] at (8,2.5) {\Small{\hlr{([1],[])}}};
        \node[text width=3cm,color=dgreen] at (5.3,1.9) {\Small{([1],[1])}};
        \end{tikzpicture}
        }
         \caption{
         Possible result $(\phil_2, \hlr{([1],[])})$ obtained when calling $\cp$ on $\phil_4$ and the current examples. $\phil_2$ is \textit{sound}.
         }
         \label{fig:precision-sound}
     \end{subfigure}
    \quad
     \begin{subfigure}[b]{0.48\textwidth}
         \centering
        \scalebox{0.82}{
        \begin{tikzpicture}
        \draw[draw=black,thick] (0,0) rectangle (8,5);
        \draw[color=dgreen, fill=lgreen, thick](3.1,1.9) ellipse (2.0 and 1);
        \node[text width=4cm] at (5.3,4.7) {\small{$\phil_4\equiv\size(\lone) > 0 \vee \equal(\lone, \loneout)$}};
        \draw[color=blue!60, thick,dashed](4,2.25) ellipse (3.7 and 2.2);
        \node[text width=3cm] at (4.5,3.9) {\small{$\phil_3\equiv\equal(\lone, \loneout)$}};
        \draw[color=dred, thick,dashed](3.7,2.6) ellipse (2.1 and 1.0);
        \node[text width=3cm] at (3.3,2.4) {\small{$\interp{\loneout=\lreverse(\lone)}$}};
        \node[text width=3cm,color=dred] at (2.0,4.2) {\Small{([],[1, 2])}};
        \node[text width=3cm,color=black] at (6.8,1.5) {\Small{\hlr{([1,2],[1,3])}}};
        \node[text width=3cm,color=dgreen] at (5.3,1.9) {\Small{([1],[1])}};
        \end{tikzpicture}
        }
         \caption{
         Possible result $(\phil_3, \hlr{([1,2],[1,3])})$ obtained when calling $\cp$ on $\phil_4$ and the current examples. $\phil_3$ is \textit{unsound}.
         }
         \label{fig:precision-unsound}
     \end{subfigure}
\caption{Possible results produced by \cp \label{fig:precision}}
\end{figure}

\subsubsection{Monotonicity of Sound \lproperties}
\label{subsub:monotonicity}
The fact that \cp can return (i) an unsound \lproperty, and (ii) a negative example that cannot be rejected by any sound \lproperty, makes designing a synthesis algorithm that can solve our problem challenging.

A key property that we exploit in our algorithm is that, under some assumptions on the language $\lang$, once a sound \lproperty is found, there must exist a best \lproperty that implies it.
This property allows searching for sound $\mathcal{L}$-properties in a monotonic manner. Once a sound $\mathcal{L}$-property is found, the search space can be narrowed down to a smaller set that includes all $\mathcal{L}$-properties that are more precise than the one already found.
We will show in Theorem~\ref{thm:completeness} that this narrowing is guaranteed to be finite under some assumptions about the language $\lang$.

We say that a relation $\preceq$ on a set $X$ is a \textit{well-quasi order} if $\preceq$ is a reflexive and transitive relation such that any infinite sequence of elements $x_0, x_1, x_2, \ldots$ from $X$ contains an increasing pair ${\displaystyle x_{i} \preceq x_{j}}$ with ${\displaystyle i<j.}$
If the consequence relation $\Rightarrow$ for language $\lang$ is a well-quasi order, we have that any descending sequence of sound \lproperties cannot be infinite---i.e., for any \lproperty $\phil$, there exist only finitely many \lproperties that imply $\phil$.
Moreover, a well-quasi order has no infinite anti-chains (i.e., there are no infinite sequences of pairwise incomparable elements).
Clearly, if $\lang$ is \emph{finite}, then $\Rightarrow$ is a well-quasi order, but finiteness is not a necessary condition.
For example, consider the absolute-value function $\varo=\iaabs(\x)$, and a grammar $\lang_{inf}$ that defines properties of the form $-20\leq \x \leq 10 \Rightarrow \varo \leq N$ (for any natural number $N$).
The set of properties is infinite,
but $\Rightarrow$ is a well-quasi order on the set of sound $\lang_{inf}$-properties---i.e., for any concrete value of $N$, it is only possible to decrease the value of $N$ and strengthen the property a finite number of times.

\begin{restatable}[Monotonicity of Sound \lproperties]{lemma}{monotonicity}
\label{lem:monotonicity}
If $\Rightarrow$ is a well-quasi order on the set of \lproperties, for every sound \lproperty $\phil$, there exists a best \lproperty $\phil'$ such that $\phil'\Rightarrow \phil$.
\end{restatable}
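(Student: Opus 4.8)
The plan is to exhibit the witness $\phil'$ by a minimization argument over the sound $\lproperty$s that are at least as precise as $\phil$. Concretely, I would consider the set
\[
S = \{ \psil \in \lang \mid \psil \text{ is sound and } \psil \Rightarrow \phil \},
\]
which is nonempty because $\phil$ itself is sound and $\phil \Rightarrow \phil$ by reflexivity of $\Rightarrow$. The goal is then to pull a best $\lproperty$ out of $S$: I claim that any element of $S$ that is minimal with respect to \emph{strict} implication (i.e.\ strict subset-inclusion of interpretations) is in fact a best $\lproperty$ for $\signatures$.

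First I would establish that $S$ has such a minimal element. Suppose not: then for every $\psil \in S$ there is some $\psil' \in S$ with $\interp{\psil'} \subsetneq \interp{\psil}$. Starting from $\phil \in S$ and choosing such a strictly stronger element repeatedly (an appeal to dependent choice), I obtain an infinite sequence $\phil = \psil_0, \psil_1, \psil_2, \ldots$ with $\interp{\psil_{k+1}} \subsetneq \interp{\psil_k}$ for all $k$; equivalently $\psil_{k+1} \Rightarrow \psil_k$ but $\psil_k \not\Rightarrow \psil_{k+1}$. This is an infinite strictly descending chain under $\Rightarrow$, which contradicts the hypothesis that $\Rightarrow$ is a well-quasi order: applying the well-quasi-order property to this sequence yields indices $i < j$ with $\psil_i \Rightarrow \psil_j$, whereas in a strictly descending chain every later element is strictly stronger than every earlier one, so $\interp{\psil_j} \subsetneq \interp{\psil_i}$ and hence $\psil_i \not\Rightarrow \psil_j$. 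Thus $S$ must contain a minimal element $\phil'$.

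Next I would verify that this minimal $\phil'$ is a best $\lproperty$. Soundness is immediate since $\phil' \in S$, so $\interp{\phiprog} \subseteq \interp{\phil'}$. For precision, suppose toward a contradiction that some $\phil'' \in \lang$ satisfies $\interp{\phiprog} \subseteq \interp{\phil''} \subsetneq \interp{\phil'}$. Then $\phil''$ is sound, and by transitivity $\interp{\phil''} \subsetneq \interp{\phil'} \subseteq \interp{\phil}$ gives $\phil'' \Rightarrow \phil$, so $\phil'' \in S$; but $\interp{\phil''} \subsetneq \interp{\phil'}$ then contradicts the minimality of $\phil'$ in $S$. Hence no such $\phil''$ exists, $\phil'$ is precise, and therefore best. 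Finally $\phil' \Rightarrow \phil$ holds because $\phil' \in S$, which is exactly the claimed conclusion.

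The crux of the argument—and the only place the well-quasi-order hypothesis is used—is the passage from ``no infinite strictly descending chain'' to ``$S$ has a minimal element.'' The key observation that makes minimality over $S$ (rather than over the set of \emph{all} sound properties) sufficient is that precision is a \emph{downward} condition: any sound property strictly more precise than $\phil'$ automatically still implies $\phil$ and therefore already lives in $S$, so local minimality in $S$ coincides with global minimality among sound $\lproperty$s. I expect the main subtlety to lie in stating the descending-chain/choice step cleanly rather than in any deep difficulty; note in particular that finiteness of $\lang$ is not required, only that $\Rightarrow$ is a well-quasi order.
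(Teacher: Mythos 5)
Your proof is correct, and it reaches the same destination as the paper's proof---extract a minimal sound \lproperty below $\phil$ and check that minimality yields precision---but the justification of the existence step is genuinely different, and in fact more careful. The paper argues that the well-quasi-order hypothesis makes the strict down-set $\Pi=\{\phil'\mid \phil'\Rightarrow\phil \wedge \phil'\neq\phil\}$ \emph{finite}, and then selects a best \lproperty from the finite set $\Pi\cup\{\phil\}$. That finiteness claim is not actually a consequence of the wqo hypothesis: a wqo can have infinite down-sets (for instance, an infinite strictly ascending chain $p_1\Rightarrow p_2\Rightarrow\cdots\Rightarrow\phil$ is a wqo in which infinitely many distinct properties imply $\phil$). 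What a wqo does guarantee is precisely what you use: there is no infinite strictly descending sequence under $\Rightarrow$, because such a sequence would contain no increasing pair $\psil_i\Rightarrow\psil_j$ with $i<j$. Your dependent-choice construction of a descending chain from the assumed absence of a minimal element of $S$, together with your observation that precision is a downward condition (any sound property strictly more precise than the minimal $\phil'$ already implies $\phil$ and so lies in $S$), therefore proves the lemma from a strictly weaker---and correct---reading of the hypothesis, arguably repairing a gap in the paper's own two-line argument. One cosmetic point you already handle correctly: since $\Rightarrow$ is only a quasi-order (syntactically distinct but semantically equivalent properties may coexist), minimality must be taken with respect to strict inclusion of interpretations $\interp{\cdot}$, which is exactly how you set it up and exactly what the paper's definition of precision requires.
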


Consequently, if $\phil$ is a sound \lproperty that rejects a set of negative examples $\eminus$, there must exist a best \lproperty that also rejects the examples in $\eminus$.
This property lets us infer when a set of negative examples \textit{must} be rejected by a best \lproperty (i.e., after a sound property is found). 

\subsection{Synthesizing One Most Precise \lproperty}
\label{se:oneproperty}

We are now ready to describe the method used to synthesize an individual best \lproperty (Algorithm~\ref{alg:SynthesizeProperty}).
The procedure \synthproperty takes as input 
\begin{itemize}
    \item $\phiprog$: a formula describing the behavior of the program,
    \item $\psi$: a formula describing the domain from which examples can be drawn,
    \item $\phiinit$: an initial sound \lproperty that rejects all of the examples in $\eminusmust$,
    \item $\eplus$: a set of positive examples that \textit{must} be accepted by the returned \lproperty,
    \item $\eminusmust$: a set of negative examples that \textit{must} be rejected by the returned \lproperty,
\end{itemize}
and produces as output
\begin{itemize}
    \item a sound \lproperty $\phil$ that accepts all the examples in $\eplus$, rejects all the examples in $\eminusmust$,
    \item an updated set of positive examples $\eplus$ that the synthesized \lproperty accepts,
    \item an updated set of negative examples $\eminusmust$ that the synthesized \lproperty rejects.
\end{itemize}

We next discuss the procedure \synth, which \synthproperty uses to identify \lproperties that behave correctly on a finite set of examples.

\mypar{Synthesis from examples}
Besides \cs and \cp, which we have already discussed, \synthproperty uses the procedure $\synth(\eplus, \eminus)$, which returns---when one exists---an \lproperty $\phil$ that accepts all of the examples in $\eplus$ (i.e., $\phil(\eplus)$) and rejects all of the examples in $\eminus$ (i.e., $\neg\phil (\eminus)$).
If no such a property exists---i.e., the synthesis problem is \emph{unrealizable}---$\synth(\eplus, \eminus)=\bot$. 

\begin{example}[Example-based Synthesis]
Figure~\ref{fig:positive-negative-examples} showed that there may be negative examples that no sound \lproperty can reject.
Our algorithm uses $\eminusmay$ to handle such examples and make sure that they never end up in $\eminusmust$.
With $\lang_{\textit{list}}$ from Eq.~\eqref{eq:grammar-rev2}, if $\eplus=\{{\color{dgreen}([1],[1])}\}$, $\eminusmust=\{\}$, and $\eminusmay=\{{\color{dred}([1,2],[1,3])}\}$, then
$\synth(\eplus, \eminusmust \cup\eminusmay)$  can return the \lproperty $\size(\lone) = 1$, which is clearly unsound.
In this case, \cs will repeatedly add positive $\eplus$ (without changing $\eminusmust$ and $\eminusmay$) until $\synth(\eplus, \eminusmust \cup\eminusmay)$ returns $\bot$---e.g., when $\eplus=\{{\color{dgreen}([1],[1])}, {\color{dgreen}([1,2],[2,1])}\}$.
Even when each member of a set of negative examples can be rejected (individually) by a sound \lproperty, there may not exist a single sound \lproperty that rejects all members of the set.
For example, the negative examples $\{{\color{dred}([1],[2])},{\color{dred}([1,2],[1])}\}$ in Figure~\ref{fig:positive-negative-examples} cannot both be rejected by a single best \lproperty.
If $\eplus=\{{\color{dgreen}([1,2],[2,1])}\}$, $\eminusmust=\{{\color{dred}([1],[])}, {\color{dred}([1],[2])}\}$, and $\eminusmay=\{{\color{dred}([1,2],[1])}\}$, then
$\synth(\eplus, \eminusmust \cup\eminusmay)$  will return $\bot$.
\end{example}

\begin{algorithm}[t]
{\it
\caption{\synthproperty($\phiprog, \psi, \phiinit, \eplus, \eminusmust$)}
\label{alg:SynthesizeProperty}
\DontPrintSemicolon
$\phil, \philast \gets \phiinit\text{; } \eminusmay \gets \emptyset$ \tcp*{initial sound \lproperty that rejects $\eminusmust$}
\While{true}{
    $\posex \gets \cs(\phil, \phiprog)$ \tcp*{check soundness first}  \label{Li:CallCheckSoundness}
    \eIf {$\posex \neq \bot$} {
        $\eplus \gets \eplus \cup \{\posex\}$ \tcp*{unsound, update positive examples}  \label{Li:UpdatePlus}
        $\phil' \gets \synth(\eplus, \eminusmust \cup \eminusmay)$ \tcp*{learn a new \lproperty}    \label{Li:CallSynthesize}
        
        \eIf {$\phil' \neq \bot$} {
            $\phil \gets \phil'$ \tcp*{new candidate {\lproperty}}
        }{
            $\phil \gets \philast\text{; } \eminusmay \gets \emptyset$ \tcp*{no sound \lproperty rejects $\eminusmay$, revert to $\philast$} \label{Li:RevertLastSound}
        }
    }{
        $\eminusmust \gets \eminusmust \cup \eminusmay\text{; } \quad\eminusmay \gets \emptyset$ \tcp*{sound, so $\eminusmay$ example is added to $\eminusmust$} 
        \label{Li:MergeMayIntoMust}
        $\philast \gets \phil$ \tcp*{remember sound \lproperty}
        \label{Li:RememberSoundProperty}
        $\negex, \phil' \gets \cp(\phil, \neg \phiprog \wedge \psi, \eplus, \eminusmust)$ \tcp*{sound, check precision}    \label{Li:CallCheckPrecision}
        \eIf {$\negex \neq \bot$} {
            $\eminusmay \gets \eminusmay\cup \{\negex\}$ \tcp*{update negative examples}  \label{Li:ReinitializeMinusMay}
            $\phil \gets \phil'$  \tcp*{new candidate formula}             \label{Li:ReinitializePhi}
        }{
            \Return $\phil, \eplus, \eminusmust \cup \eminusmay$ \tcp*{sound and precise}    \label{Li:SynthPropReturn}
        }
    }
}
}
\end{algorithm}

\mypar{Preserved Invariants}
We describe \synthproperty and the invariants it maintains.

\begin{mybox}
\textbf{Invariant 1:} At the beginning of each loop iteration, the \lproperty $\phil$ accepts all the examples in the current set $\eplus$ and rejects all the examples in the current set $\eminusmust\cup\eminusmay$.
\end{mybox}

In each iteration, \synthproperty checks if the current property is sound using \cs (line~\ref{Li:CallCheckSoundness}).
If the property is sound, it is then checked for precision using \cp (line~\ref{Li:CallCheckPrecision}).
The algorithm terminates once the property is sound and precise (line~\ref{Li:SynthPropReturn}).

After a new candidate is found by \cp, the set $\eminusmay$ stores one negative example that might be rejected by a sound \lproperty.
If \cs determines that $\phil$ is sound, the example in $\eminusmay$ is added to $\eminusmust$ (line~\ref{Li:MergeMayIntoMust}) in accordance with Lemma~\ref{lem:monotonicity}.
If a new negative example is returned by \cp, together with a new property $\phil'$, the set $\eminusmay$ is reinitialized, and $\phil'$ becomes the current property to check in the next iteration (lines~\ref{Li:ReinitializeMinusMay} and~\ref{Li:ReinitializePhi}).

The next invariant is guaranteed by Lemma~\ref{lem:monotonicity} and by the fact that negative examples are added to $\eminusmust$ only when a sound property is found (line~\ref{Li:MergeMayIntoMust}).
\begin{mybox}
\textbf{Invariant 2:} There exists a sound \lproperty that rejects all the negative examples in $\eminusmust$.
\end{mybox}

When $\eminusmust$ is augmented with $\eminusmay$ in line \ref{Li:MergeMayIntoMust}, \synthproperty
stores in $\philast$ the current sound \lproperty that rejects all of the negative examples in $\eminusmust$.
In the next iteration, if a new positive example is returned by \cs, the positive examples are updated, and a new property is synthesized using \synth (line~\ref{Li:CallSynthesize}).
If \synth cannot synthesize a property that is consistent with the new set of examples, \name discards the conflicting negative example in $\eminusmay$ and reverts to the last sound property it found (line~\ref{Li:RevertLastSound}).

\begin{mybox}
\textbf{Invariant 3:} At the beginning of each loop iteration, the \lproperty $\philast$ accepts all of the examples in the current set $\eplus$ and rejects all of the examples in the current set $\eminusmust$.
\end{mybox}

Monotonically increasing the set of negative examples $\eminusmust$ when a sound \lproperty is found is one of the contributions of our algorithm.
While the algorithm is sound even without line~\ref{Li:MergeMayIntoMust}, this step prevents the algorithm from often oscillating between multiple best \lproperties throughout its execution. 
We found that this optimization gives a \ratioFreeze\% speedup to the algorithm (see ~\S\ref{se:furtheranalysis}).

\begin{example}[Algorithm~\ref{alg:SynthesizeProperty} Run]
Consider again the query $\loneout = \lreverse(\lone)$,
and a call to \synthproperty($\phiprog$, $\tru$, $\tru$, $\emptyset$, $\emptyset$)---i.e., $\phiinit=\psi=\tru$ and $\eplus=\eminusmust=\emptyset$.

\noindent\textul{Iteration 1.}
The run starts with $\cs(\top, \phiprog)$ (line \ref{Li:CallCheckSoundness}), which returns $\bot$ because the property $\top$ is sound.
Then all negative examples in $\eminusmay$ are added to $\eminusmust$ (line \ref{Li:MergeMayIntoMust}),
but because $\eminusmay=\eminusmust=\emptyset$, both sets remain empty.
\cp($\top$, ${\neg\phiprog}$, $\emptyset$, $\emptyset$) (line \ref{Li:CallCheckPrecision})
returns a new candidate $\phil_1 \equiv \equal(\lone, \loneout)$ with a negative example $\hlr{([], [1, 2])}$ (line \ref{Li:CallCheckPrecision}).
This negative example is added to $\eminusmay$ (line \ref{Li:ReinitializeMinusMay}), and the code goes back to line \ref{Li:CallCheckSoundness}.

\noindent\textul{Iteration 2.}
$\cs(\phil_1, \phiprog)$ returns a positive example $\hlg{([1, 2], [2, 1])}$, $\eplus$ is updated to $\{{\color{dgreen}([1, 2], [2, 1])}\}$ (line \ref{Li:UpdatePlus}), and $\synth(\{ {\color{dgreen}([1, 2], [2, 1])} \}, \{ {\color{dred}([], [1, 2])} \})$ (line \ref{Li:CallSynthesize}) returns a new candidate $\phil_2 \equiv \size(\lone) {\neq} 0$.

\noindent\textul{Iteration 3.}
$\cs(\phil_2, \phiprog)$ returns a positive example $\hlg{([], [])}$,
$\eplus$ is updated to $\{{\color{dgreen}([1, 2], [2, 1])}, {\color{dgreen}([], [])}\}$ (line \ref{Li:UpdatePlus}). $\synth(\eplus, \{ {\color{dred}([], [1, 2])} \})$
(line \ref{Li:CallSynthesize}) returns a new candidate $\phil_3 \equiv \size(\lone) {>} 1 \vee \equal(\lone, \loneout)$.
$\eminusmay$ is unchanged, and the code goes to line \ref{Li:CallCheckSoundness}.

\noindent\textul{Iteration 4.}
$\cs(\phil_3, \phiprog)$ returns $\bot$ because $\phil_3$ is sound.
Because the synthesizer found a sound property, the negative example in $\eminusmay$ is added to $\eminusmust$ (line \ref{Li:MergeMayIntoMust})---i.e., $\eminusmust=\{{\color{dred}([], [1, 2])}\}$.
Although $\phil_3$ is a best \lproperty,
$\cp(\phil_3, {\neg\phiprog}, \eplus,  \{ {\color{dred}([], [1, 2])} \})$ (line \ref{Li:CallCheckPrecision})  returns $\phil_4 \equiv \size(\lone) {=} \size(\loneout)$ with a negative example $\hlr{([1, 2], [1])}$.
$\eminusmay$ is set to $\{ {\color{dred}([1, 2], [1])} \}$, and the code goes to line \ref{Li:CallCheckSoundness}.
%
The current set of examples is not enough for \cp to prove that $\phil_3$ is indeed a best \lproperty---i.e., \cp was able to satisfy all requirements from \S\ref{se:CheckingPrecision} by finding a different, \emph{incomparable} \lproperty $\phil_4$ and an additional negative example.

\noindent\textul{Iteration 5.}
$\cs(\phil_4, \phiprog)$ returns $\bot$, and the negative example ${\color{dred}([1, 2], [1])}$ in $\eminusmay$ is added to $\eminusmust$ (line \ref{Li:MergeMayIntoMust}). 
$\cp(\phil_4, {\neg\phiprog}, \eplus,  \eminusmust)$ (line \ref{Li:CallCheckPrecision}) returns $\bot$, which means that we found a sound and precise \lproperty. 
\synthproperty terminates with $\eminusdelta = \emptyset$, $\phil \equiv \phil_4 \equiv \size(\lone) = \size(\loneout)$, $\eplus = \{ {\color{dgreen}([1, 2], [2, 1])}, {\color{dgreen}([], [])} \}$, and $\eminusmust = \{{\color{dred}([], [1, 2])}, {\color{dred}([1, 2], [1])}\}$.
\end{example}

The last invariant merely states that all the examples in $\eminusmay$ are elements of $\psi$.
\begin{mybox}
\textbf{Invariant 4:} For every example $\ex\in\eminusmay$, we have $\psi(\ex)$.
\end{mybox}

We say that a sound \lproperty $\phil$ is \textit{precise for $\phiprog$ with respect to} $\psi$ if there does not exists a negative example $\negex\in \interp{\psi}$ and \lproperty $\phil'$ such that $\phiprog \Rightarrow \phil'\Rightarrow \phil$ and $\phil'$ rejects $\negex$, whereas $\phil$ accepts $\negex$.
The following lemma characterizes the behavior of \synthproperty.
\begin{restatable}[Soundness and Relative Precision of \synthproperty]{lemma}{soundprecise}
\label{lem:soundprecise}
If \synthproperty terminates, it returns a sound \lproperty $\phil$ that accepts all the examples in $\eplus$, rejects all the examples in $\eminusmust \cup \eminusmay$, and is precise for $\phiprog$ with respect to $\psi$.
\end{restatable}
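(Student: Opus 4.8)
The plan is to analyze the unique exit point of \synthproperty and to verify each of the four conjuncts of the claim there. The procedure can only return at line~\ref{Li:SynthPropReturn}, which lies inside the branch taken when $\cs(\phil,\phiprog)$ at line~\ref{Li:CallCheckSoundness} returned $\bot$ (so the current candidate $\phil$ passed the soundness check) and $\cp(\phil,\neg\phiprog\wedge\psi,\eplus,\eminusmust)$ at line~\ref{Li:CallCheckPrecision} subsequently returned $\bot$. I would first dispatch the easy conjuncts. Soundness of the returned $\phil$ is immediate from the semantics of \cs: the call in this iteration returned $\bot$, which by definition means no positive example is rejected by $\phil$, i.e.\ $\interp{\phiprog}\subseteq\interp{\phil}$. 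The facts that $\phil$ accepts all of $\eplus$ and rejects all of $\eminusmust\cup\eminusmay$ (the returned sets) follow from Invariant~1 at the start of this iteration, noting that the sound branch does not modify $\eplus$ or $\phil$ (the latter would only change at line~\ref{Li:ReinitializePhi}, which is not executed since \cp returned $\bot$), and that it only folds $\eminusmay$ into $\eminusmust$ (line~\ref{Li:MergeMayIntoMust}); thus the returned $\eminusmust\cup\eminusmay$ equals the set $\eminusmust\cup\eminusmay$ that Invariant~1 spoke about at the beginning of the iteration (with $\eminusmay=\emptyset$ at return).

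The substantive part is relative precision, which I would prove by contradiction. Suppose $\phil$ is \emph{not} precise for $\phiprog$ with respect to $\psi$. Unfolding the definition, there exist a negative example $\negex\in\interp{\psi}$ and an \lproperty $\phil'$ with $\phiprog\Rightarrow\phil'\Rightarrow\phil$ such that $\phil'$ rejects $\negex$ while $\phil$ accepts $\negex$. I would then show that $(\phil',\negex)$ is exactly a satisfying witness for the query that the final $\cp(\phil,\neg\phiprog\wedge\psi,\eplus,\eminusmust)$ reported unsatisfiable, namely $\exists\phil',\negex.\,(\neg\phiprog\wedge\psi)(\negex)\wedge\phil(\negex)\wedge\neg\phil'(\negex)\wedge\phil'(\eplus)\wedge\neg\phil'(\eminusmust)$, yielding the contradiction. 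Checking the conjuncts one at a time: $\phil(\negex)$ and $\neg\phil'(\negex)$ are given; $\negex\in\interp{\psi}$ is given; and since $\phil'$ is sound ($\phiprog\Rightarrow\phil'$) yet rejects $\negex$, we get $\negex\notin\interp{\phiprog}$, i.e.\ $\neg\phiprog(\negex)$, so $\negex\in\interp{\neg\phiprog\wedge\psi}$.

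The two remaining conjuncts require the invariants. For $\phil'(\eplus)$, I would invoke the auxiliary fact that every example in $\eplus$ is positive, i.e.\ $\eplus\subseteq\interp{\phiprog}$: the initial $\eplus$ consists of positive examples by the precondition of \synthproperty, and the only additions (line~\ref{Li:UpdatePlus}) are positive examples returned by \cs. Then soundness $\phiprog\Rightarrow\phil'$ gives $\interp{\phiprog}\subseteq\interp{\phil'}$, so $\phil'$ accepts all of $\eplus$. For $\neg\phil'(\eminusmust)$, I would use $\phil'\Rightarrow\phil$, i.e.\ $\interp{\phil'}\subseteq\interp{\phil}$: since $\phil$ rejects every example of $\eminusmust$ (Invariant~1, transported across the merge at line~\ref{Li:MergeMayIntoMust}), any such example lies outside $\interp{\phil}$ and hence outside $\interp{\phil'}$, so $\phil'$ rejects it too. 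Having verified all conjuncts, $(\phil',\negex)$ satisfies the \cp query, contradicting that \cp returned $\bot$; therefore $\phil$ is precise for $\phiprog$ with respect to $\psi$.

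The main obstacle is bookkeeping rather than mathematical depth: I must keep straight which versions of $\eminusmust$ and $\eminusmay$ are in play at the final \cp call—after the merge at line~\ref{Li:MergeMayIntoMust} has emptied $\eminusmay$ and absorbed it into $\eminusmust$—and I must make explicit the auxiliary invariant $\eplus\subseteq\interp{\phiprog}$, which is exactly what lets soundness of $\phil'$ discharge $\phil'(\eplus)$. The direction $\phil'\Rightarrow\phil$ transferring rejection of $\eminusmust$ from $\phil$ to $\phil'$ is the load-bearing use of the implication built into the definition of relative precision.
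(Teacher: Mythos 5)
Your proof is correct and follows essentially the same route as the paper's: soundness from \cs returning $\bot$, the example-acceptance/rejection conditions from Invariant~1 (correctly transported across the merge at line~\ref{Li:MergeMayIntoMust}), and relative precision from the final \cp returning $\bot$. The paper compresses the precision step into a single sentence, whereas you faithfully unpack it---making explicit the auxiliary fact $\eplus\subseteq\interp{\phiprog}$ and the contrapositive construction showing that any counterexample $(\phil',\negex)$ to precision would be a satisfying witness for the \cp query---so yours is an elaboration of the paper's argument, not a different one.
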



\subsection{Synthesizing a Most-Precise \lconjunction}
\label{se:allproperties}

In this section, we present \synthproperties (Algorithm~\ref{alg:SynthesizeAllProperties}), which uses \synthproperty to synthesize a best \lconjunction of \lproperties.

On each iteration, \synthproperties maintains a conjunction of best \lproperties $\phiand$, and uses \synthproperty to synthesize a best \lproperty that rejects some negative examples that are still accepted by $\phiand$ (i.e., negative examples in $\phiand \land \neg\phiprog$).
It also maintains the set of positive examples $\eplus$ that have been observed so far.

Each iteration performs three steps:
First, it uses \synthproperty to try to find an \lproperty $\phil$ that rejects new negative examples $\eminusmust$ that no \lproperty synthesized so far could reject---i.e, by calling \synthproperty with $\psi = \phiand \land \neg\phiprog$ (line~\ref{Li:CallSynthPropertyOne}).
    
Second, it checks whether $\phil$ rejects some example that was not rejected by $\phiand$ (line~\ref{Li:CheckForImprovement}). 
      If it does not, the algorithm terminates, and returns the \lproperties in $\Pi$ synthesized so far.
      They are all best \lproperties and their conjunction is a best \lconjunction.
    
Finally, if we reach line~\ref{Li:CallSynthPropertyTwo}, we know that $\phil$ rejects negative examples in $\eminusmust$ that $\phiand$ did not reject. 
Furthermore, because of the guarantees of \synthproperty, $\phil$ is precise with respect to $\phiand$---i.e., no sound \lproperty $\phil'$ exists that can reject more negative examples in $\phiand$ than $\phil$ could reject. 
However, there may be a more precise \lproperty that rejects more negative examples outside of $\phiand$ that $\phil$ does not reject, while still rejecting all the negative examples in $\eminusmust$.
The call to \synthproperty in line \ref{Li:CallSynthPropertyTwo} addresses this issue; it computes a best \lproperty starting from $\phil$ and makes sure that the \lproperty obtained rejects everything in $\eminusmust$ while allowing negative examples to be computed anywhere---i.e., $\psi = \tru$. (Compare this call with the one on line~\ref{Li:CallSynthPropertyOne}, which only allows negative examples to be drawn from $\phiand$.)
Because precision with respect to $\tru$ implies actual precision, we have that when $\psi=\tru$, if \synthproperty terminates, it returns the best \lproperty $\phil$ for $\phiprog$ by Lemma \ref{lem:soundprecise}.

\begin{algorithm}[t] {\it
\caption{$\synthproperties(\phiprog)$}
\label{alg:SynthesizeAllProperties}
\DontPrintSemicolon
$\phiand \gets \top$ \tcp*{conjunction of best {\lproperties}}
$\Pi \gets \emptyset$ \tcp*{set of best {\lproperties}} \label{Li:InitializePhiToEmpty}
$\eplus, \eminusmay \gets \emptyset$ \tcp*{initialize examples}  \label{Li:InitEminusMayEmpty}
\While{$\tru$}{
    \tcp{find sound $\phil$ that rejects examples in $\eminusmust$ that are still in {$\phiand$}} \label{Li:BeginWhileLoop}
    $\phil, \eplus, \eminusmust \gets \synthproperty(\phiprog,  \phiand, \tru, \eplus, \emptyset)$  \label{Li:CallSynthPropertyOne} \\
    \If{$\eminusmust = \emptyset$}
    {
        $\negex \gets$ \textsc{IsSat($\phiand \wedge \neg \phil $)} \hfill \tcp*{check if $\phil$ improves {$\phiand$}}  \label{Li:CheckForImprovement}
        \eIf{$\negex \neq \bot$} {
            $\eminusmust \gets \{ \negex \}$ \\
        }{
            \Return $\Pi$ \tcp*{return best {\lconjunction}}\label{Li:SynthPropertiesReturn}
        }
    }
    \tcp{refine $\phil$ to reject more examples outside $\phiand$ if possible}
    $\phil,\eplus, \_$ $\gets$ $\synthproperty(\phiprog, \tru, \phil, \eplus, \eminusmust$)  \label{Li:CallSynthPropertyTwo} \\
    $\Pi \gets \Pi \cup \{ \phil \}$ \tcp*{$\phil$ is a best {\lproperty}}\label{Li:UpdatePhi}
    $\phiand \gets \phiand \wedge \phil$ \tcp*{Update conjunction}\label{Li:UpdateConjunction}
}
}\end{algorithm}

\begin{example}[Algorithm~\ref{alg:SynthesizeAllProperties} Run]
Revisiting the query $\loneout = \lreverse(\lone)$, assume that $\phiand$ has been updated with the recently synthesized property $\size(\lone) = \size(\loneout)$.
We now describe the execution of \synthproperty($\phiprog$, $\phiand$, $\tru$, $\{{\color{dgreen}([1, 2], [2, 1])}, {\color{dgreen}([], [])} \}$, $\emptyset$)---i.e., $\psi=\phiand$, $\phiinit = \tru$, $\eplus = \{ {\color{dgreen}([1, 2], [2, 1])}, {\color{dgreen}([], [])} \}$ and $\eminusmust=\emptyset$.

\noindent\textul{Iteration 1.}
The run starts with $\cs(\top, \phiprog)$ (line \ref{Li:CallCheckSoundness}), which returns $\bot$ because the property $\top$ is sound.
Then all negative examples in $\eminusmay$ are added to $\eminusmust$ (line \ref{Li:MergeMayIntoMust}),
but because $\eminusmay=\eminusmust=\emptyset$, both sets remain empty.
\cp($\top$, $\phiand \land {\neg\phiprog}$, $\eplus$, $\emptyset$) (line \ref{Li:CallCheckPrecision})
returns a new candidate $\phil_1 \equiv \size(\lone) = 0 \vee \neg \equal(\lone, \loneout)$ with a negative example $\hlr{([1, 2], [1, 2])}$ (line \ref{Li:CallCheckPrecision}).
Note that this negative example satisfies $\psi = \phiand$.
This negative example is added to $\eminusmay$ (line \ref{Li:ReinitializeMinusMay}), and the code goes back to line \ref{Li:CallCheckSoundness}.

\noindent\textul{Iteration 2.}
$\cs(\phil_1, \phiprog)$ returns a positive example $\hlg{([1], [1])}$, $\eplus$ is updated to $\{{\color{dgreen}([1, 2], [2, 1])}, {\color{dgreen}([], [])}, {\color{dgreen}([1], [1])}\}$ (line \ref{Li:UpdatePlus}), and $\synth(\eplus, \eminusmust \cup \eminusmay)$ (line \ref{Li:CallSynthesize}) returns a new candidate $\phil_2 \equiv \size(\lone) > 1 \vee \equal(\lone, \loneout)$.

\noindent\textul{Iteration 3.}
$\cs(\phil_2, \phiprog)$ returns $\bot$, and the negative example $\eminusmay$ is added to $\eminusmust$ (line \ref{Li:MergeMayIntoMust}), but because $\eminusmay=\emptyset$, the set $\eminusmust$ remains the same.
$\cp(\phil_2, \phiand \land {\neg\phiprog}, \eplus,  \eminusmust)$
(line \ref{Li:CallCheckPrecision}) returns $\bot$, which means that we found a sound and precise \lproperty. 
So \synthproperty terminates with $\phil \equiv \phil_2 \equiv \size(\lone) > 1 \vee \equal(\lone, \loneout)$, $\eplus = \{ {\color{dgreen}([1, 2], [2, 1])}, {\color{dgreen}([], [])}, {\color{dgreen}([1], [1])} \}$, $\eminusmust = \{{\color{dred}([1, 2], [1, 2])}\}$ and $\eminusdelta = \emptyset$.
\end{example}

Every iteration of the loop computes a best \lproperty, which is conjoined onto $\phiand$ (line~\ref{Li:UpdateConjunction}).
\begin{mybox}
\textbf{Invariant 5:}
$\Pi$ is a set of incomparable best \lproperties.
\end{mybox}

We are now ready to show that \synthproperties is sound.
\begin{restatable}[Soundness of \synthproperties]{theorem}{soundnessconj}
\label{thm:soundnessconj}
If \synthproperties terminates, it returns a best \lconjunction for $\phiprog$.
\end{restatable}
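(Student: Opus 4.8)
The plan is to verify the three defining conditions of a best \lconjunction for the formula $\phiand = \bigwedge_{\phil_i \in \Pi} \phil_i$ that is returned (as $\Pi$) by \synthproperties. Conditions (i) (every $\phil_i$ is a best \lproperty) and (ii) (pairwise incomparability) are exactly the content of Invariant 5, which follows from the fact that the second call to \synthproperty on line~\ref{Li:CallSynthPropertyTwo} uses $\psi = \tru$ (so Lemma~\ref{lem:soundprecise} yields full precision, i.e.\ a best \lproperty) together with the improvement test on line~\ref{Li:CheckForImprovement}. So the real work is condition (iii), semantic minimality: for every best \lproperty $\psil \in \bestset(\signatures)$ we must show $\interp{\phiand} \subseteq \interp{\psil}$.

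First I would extract the two facts guaranteed by termination of the loop at line~\ref{Li:SynthPropertiesReturn}. Termination requires that the first call to \synthproperty (line~\ref{Li:CallSynthPropertyOne}) returned with $\eminusmust = \emptyset$, and that the satisfiability test on line~\ref{Li:CheckForImprovement} returned $\bot$. The latter gives Fact A: $\phiand \wedge \neg \phil$ is unsatisfiable, i.e.\ $\interp{\phiand} \subseteq \interp{\phil}$, so the final candidate $\phil$ accepts every model of $\phiand$. The former, traced inside Algorithm~\ref{alg:SynthesizeProperty}, gives Fact B: the last \cp call executed by that invocation (line~\ref{Li:CallCheckPrecision}) returned $\bot$, and it was invoked with domain formula $\neg\phiprog \wedge \phiand$ and with must-set $\emptyset$ (the returned $\eminusmust \cup \eminusmay$ is empty and $\eminusmay$ had just been cleared on line~\ref{Li:MergeMayIntoMust}). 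Unfolding the specification of \cp, Fact B states precisely that there is no pair $(\phil', \negex)$ with $\negex \in \interp{\neg\phiprog \wedge \phiand}$, $\phil(\negex)$, $\neg\phil'(\negex)$, and $\phil'(\eplus)$.

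The core argument is then a contradiction. Suppose some best \lproperty $\psil$ satisfies $\interp{\phiand} \not\subseteq \interp{\psil}$, and pick $m \in \interp{\phiand} \setminus \interp{\psil}$. Since $\psil$ is sound ($\interp{\phiprog} \subseteq \interp{\psil}$), we have $m \notin \interp{\phiprog}$, so $m$ is a negative example and $m \in \interp{\neg\phiprog \wedge \phiand}$. By Fact A, $\phil(m)$; by the choice of $m$, $\neg\psil(m)$; and because the $\eplus$ accumulated by the algorithm contains only positive examples (all in $\interp{\phiprog}$), soundness of $\psil$ gives $\psil(\eplus)$. Thus $(\psil, m)$ is exactly a witness forbidden by Fact B---a contradiction. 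Hence no best \lproperty rejects a model of $\phiand$, which is equivalent to $\interp{\phiand} \subseteq \interp{\psil}$ for every $\psil \in \bestset(\signatures)$, establishing (iii) and completing the proof.

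The step I expect to be the main obstacle---and where care is needed---is the contradiction above. The natural temptation is to invoke the ``precise with respect to $\psi$'' characterization from Lemma~\ref{lem:soundprecise}, but that notion carries the extra constraint $\phil' \Rightarrow \phil$, and a best \lproperty $\psil$ need not imply the final candidate $\phil$. The way around this is to use the raw specification of \cp directly (Fact B), which imposes only $\phil'(\eplus)$ and $\neg\phil'(\negex)$---no implication relation to $\phil$---so that $\psil$ qualifies as the witness $\phil'$ without any closure-under-conjunction assumption on $\lang$. Making the bookkeeping between the returned must-set of \synthproperty and the arguments of its internal final \cp call airtight (to justify Fact B) is the remaining detail to discharge.
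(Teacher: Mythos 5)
Your proof is correct, and on the decisive step it is tighter than the paper's own argument. The paper uses the same decomposition---conditions \rone and \rtwo are discharged by proving Invariant~5 inductively, and semantic minimality comes from the final call at line~\ref{Li:CallSynthPropertyOne}---but it discharges minimality by citing Lemma~\ref{lem:soundprecise}: the returned $\phil$ is precise for $\phiprog$ with respect to $\phiand$, so no sound $\phil'$ can strengthen $\phiand$. As you observe, that precision notion carries the side condition $\phiprog \Rightarrow \phil' \Rightarrow \phil$, and a best \lproperty $\psil$ witnessing non-minimality need not imply the final candidate $\phil$ (nor is $\psil \wedge \phil$ guaranteed to be in $\lang$); read literally, the paper's appeal to the lemma is loose at exactly the point you flagged. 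Your Fact~B---unfolding the raw \cp specification, whose $\bot$ answer excludes \emph{every} pair $(\phil', \negex)$ with $\negex \models \neg\phiprog \wedge \phiand$, $\phil(\negex)$, $\neg\phil'(\negex)$, $\phil'(\eplus)$, with no implication constraint on $\phil'$---is what lets $\psil$ itself serve as the witness, and your bookkeeping (must-set empty at the final \cp call because $\eminusmay$ is cleared at line~\ref{Li:MergeMayIntoMust} and the returned set is empty) is airtight; $\psil(\eplus)$ holds since $\eplus$ contains only \cs-produced positive examples, all in $\interp{\phiprog}$. The one thing a complete write-up should add is the induction behind Invariant~5, which the paper proves rather than assumes: each newly added $\phil$ rejects the examples in $\eminusmust$, which were drawn from $\interp{\phiand}$ and hence are accepted by every earlier conjunct, with the reverse inclusion ruled out because both properties are best.
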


\paragraph{Note:} In some settings,
we might know a priori that certain \lproperties hold, and we would not want to waste time synthesizing them.
In such a situation, the formula $\phiand$ in Algorithm~\ref{alg:SynthesizeAllProperties} can be initialized to hold those properties, in which case Algorithm~\ref{alg:SynthesizeAllProperties} would synthesize only best \lproperties that are not subsumed by $\phiand$. 
For example, consider synthesizing a specification for two calls of the list-reversal function,
as shown in Section~\ref{se:problem-definition}.
We can initialize $\phiand$ with
trivial properties---such as $eq(l_1, l_2) \Rightarrow eq(ol_1, ol_2)$---that are true of every function definition. 
Furthermore, after synthesizing
a property like $eq(l_2, ol_1) \Rightarrow eq(ol_2, l_1)$, we can also include the symmetric property
$eq(l_1, ol_2) \Rightarrow (ol_1, l_2)$
in the conjunction.
This approach enables us to effectively filter out redundant and trivial specifications during the synthesis process.

\subsection{Completeness}
\label{se:completeness}


%

We observe that
in \synthproperty, because positive examples in $\eplus$ are never removed, any property stronger than a property that fails \cs at line \ref{Li:CallCheckSoundness} is never considered again.
Consequently, the sequence of unsound \lproperties in an execution of \synthproperty is non-strengthening.
Thus, if a non-strengthening sequence of unsound \lproperties can only be finite, \synthproperty can only find finitely many unsound \lproperties.

Another key observation about \synthproperty is that at line \ref{Li:CallCheckPrecision} if $\cp(\phil, \ldots)$ 
returns a sound property $\phil'$ with a negative example $\negex$,
\cs will return $\bot$ in the next iteration, and the negative example $\negex$ will be added to $\eminusmust$ (from $\eminusmay$
in line~\ref{Li:MergeMayIntoMust}).
Therefore, any property weaker than $\phil$---i.e., $\phil$ on the current iteration---
will never be considered during this execution of \synthproperty.
(Recall from the definition of $\cp(\phil, \ldots)$ that $\negex$ satisfies $\phil$.)
Thus, if a non-weakening sequence of sound \lproperties can only be finite, \synthproperty can only find finitely many sound \lproperties.

Based on the above two observations,
Theorem~\ref{thm:completeness} provides a sufficient condition for our algorithm to terminate when DSL $\lang$ generates an infinite set of formulas.

\begin{restatable}[Relative Completeness]{theorem}{completeness}
\label{thm:completeness}
Suppose that $\Rightarrow$ is a well-quasi order on the set of sound \lproperties.
Let $\Leftarrow$ denote the inverse of $\Rightarrow$, and suppose that $\Leftarrow$ is a well-quasi order on the set of unsound \lproperties.
If \synth, \cs and \cp are decidable on $\lang$,
then \synthproperty and \synthproperties always terminate.
\end{restatable}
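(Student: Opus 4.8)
The plan is to reduce termination of both algorithms to two finiteness facts about the \lproperties that \synthproperty inspects, and then to invoke the two well-quasi-order (wqo) hypotheses. Since each loop iteration of \synthproperty calls \cs exactly once, on a candidate $\phil$ that is either sound or unsound, it suffices to bound the number of sound iterations and the number of unsound iterations separately; decidability of \synth, \cs, and \cp guarantees that each individual iteration terminates, so finitely many iterations yield termination of \synthproperty.

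First I would bound the unsound iterations. Whenever \cs refutes a candidate $\phil$ it returns a positive example $\posex \notin \interp{\phil}$ that is inserted into $\eplus$, and $\eplus$ is never shrunk; moreover every subsequent candidate accepts all of $\eplus$ (synthesized candidates and \cp-witnesses by construction, and a restored sound $\philast$ because a sound property accepts every positive example). Hence if $\phil_i$ and $\phil_j$ are unsound candidates inspected at iterations $i<j$, then $\phil_j$ accepts the counterexample $\posex_i$ that $\phil_i$ rejects, so $\interp{\phil_j}\not\subseteq\interp{\phil_i}$, i.e. $\phil_j \not\Rightarrow \phil_i$. Equivalently, the sequence of unsound candidates contains no pair with $\phil_i \Leftarrow \phil_j$ for $i<j$; since $\Leftarrow$ is assumed to be a wqo on unsound \lproperties, this sequence must be finite.

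Symmetrically, I would bound the sound iterations using the monotone growth of $\eminusmust$, and this is where I expect the main difficulty. When \cp fires on a sound $\phil$ it returns a negative example $\negex$ that $\phil$ accepts; by the $\eminusmay$-to-$\eminusmust$ merge justified by Lemma~\ref{lem:monotonicity} (Invariant~2), $\negex$ is eventually committed to $\eminusmust$, which never shrinks, so every later candidate rejects $\negex$ while $\phil$ accepts it, giving $\phil \not\Rightarrow \phil'$ for every later sound $\phil'$; the ``retired'' sound properties therefore contain no increasing pair under $\Rightarrow$ and, $\Rightarrow$ being a wqo on sound \lproperties, are finite. The delicate point is the revert step: when \synth fails, \name discards $\eminusmay$ and restores the previous $\philast$, so the same sound property may be reinspected and its pending $\negex$ lost before reaching $\eminusmust$. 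I would handle this by charging repeats to unsound iterations—each revert is immediately preceded by an unsound iteration (the one whose new positive example made \synth unrealizable), and $\philast$ is not updated during unsound iterations, so no revert can occur between inspections of two \emph{distinct} sound properties. Hence each sound property reappears only finitely often, bounded by the finitely many unsound iterations, and so the sound iterations are finite as well. Combining the two bounds, \synthproperty performs finitely many iterations, and since \synth, \cs, and \cp are decidable it terminates.

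Finally, for \synthproperties I would use the antichain half of the wqo hypothesis, which makes the outer loop easy. Each outer iteration either returns or adds one more best \lproperty to $\Pi$, and by Invariant~5 the members of $\Pi$ are pairwise incomparable sound properties—i.e. an antichain under $\Rightarrow$. A wqo admits no infinite antichain, so $\Pi$ can grow only finitely often; since each outer iteration makes two calls to \synthproperty, which we have just shown terminates, \synthproperties terminates as well.
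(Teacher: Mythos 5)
Your proof is correct and takes essentially the same route as the paper's: accumulated positive examples make the sequence of unsound candidates non-strengthening (finite because $\Leftarrow$ is a wqo on unsound \lproperties), committed negative examples make the sequence of sound candidates non-weakening (finite because $\Rightarrow$ is a wqo on sound \lproperties), and the outer loop of \synthproperties halts because $\Pi$ is an antichain and a wqo admits no infinite antichain. Your explicit bookkeeping for the revert step---charging each re-inspection of $\philast$ to the unsound iteration whose failed \synth call triggered it---is simply a more careful rendering of the paper's observation that \cp can return an unsound witness only finitely many times, so the two arguments coincide in substance.
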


Above, our argument about the second observation involved line~\ref{Li:MergeMayIntoMust} of \synthproperty.
Nevertheless, Theorem \ref{thm:completeness} remains valid even if we eliminate line~\ref{Li:MergeMayIntoMust} from \synthproperty---i.e., line~\ref{Li:MergeMayIntoMust} is an optimization.

During each iteration, \synthproperty either adds a new positive example to $\eplus$ or adds a new negative example to $\eminusmust$. As a result, the number of iterations is also limited by the size of the example domain.
\begin{restatable}{corollary}{finite-completeness}
\label{thm:finite-completeness}
Suppose that either $\lang$ contains finitely many formulas, 
or the example domain is finite.
If \synth, \cs and \cp are decidable on $\lang$, 
then \synthproperty and \synthproperties always terminate.
\end{restatable}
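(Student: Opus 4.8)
The statement combines two independent sufficient conditions, so the plan is to split into two cases and dispatch them by entirely different means: the finite-language case reduces to the already-established Theorem~\ref{thm:completeness}, whereas the finite-domain case requires a direct counting argument on the monotone example sets maintained by \synthproperty and \synthproperties.

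For the case where $\lang$ has finitely many formulas, I would first note that logical consequence $\Rightarrow$ is reflexive and transitive, hence a preorder, and that on a finite carrier set every preorder is automatically a well-quasi order: any infinite sequence over a finite set repeats some element, and reflexivity turns that repetition into an increasing pair. Applying this to the (finite) set of sound \lproperties shows $\Rightarrow$ is a well-quasi order there, and applying it to the (finite) set of unsound \lproperties shows the inverse relation $\Leftarrow$ is a well-quasi order there. Together with the decidability of \synth, \cs, and \cp assumed in the hypothesis, both premises of Theorem~\ref{thm:completeness} are satisfied, so \synthproperty and \synthproperties terminate; this case is essentially a citation.

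The finite-example-domain case is the substantive one. Let $D$ be the finite example domain. I would establish termination of \synthproperty by a stabilization argument. Both $\eplus$ (only enlarged at line~\ref{Li:UpdatePlus}) and $\eminusmust$ (only enlarged at line~\ref{Li:MergeMayIntoMust}) are monotonically non-decreasing subsets of $D$, and they are disjoint because positive examples lie in $\interp{\phiprog}$ while negative ones do not. Hence $|\eplus| + |\eminusmust|$ is a non-decreasing integer bounded by $|D|$, so it is eventually constant, from some iteration $T$ onward. For any iteration $t \ge T$, \cs must return $\bot$ (a returned positive example would be fresh by Invariant~1 and would enlarge $\eplus$, a contradiction), so every such iteration is sound and reaches \cp. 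If \cp there returned a negative example $\negex$, then by its specification $\phil(\negex)$ holds while $\phil$ rejects all of $\eminusmust$, forcing $\negex \notin \eminusmust$; the next (also post-$T$) iteration would merge $\{\negex\}$ into $\eminusmust$ at line~\ref{Li:MergeMayIntoMust} and strictly enlarge it, contradicting constancy. Thus the first iteration $\ge T$ has \cp return $\bot$ and the procedure returns. This framing deliberately sidesteps the bookkeeping of the revert branch (line~\ref{Li:RevertLastSound}) and of the ``non-growth'' iterations, which is exactly where I expect the main difficulty to lie if one instead tries to show that $\eplus$ or $\eminusmust$ grows on \emph{every} individual iteration, as the prose preceding the corollary suggests.

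Finally I would lift termination to \synthproperties. Its two inner calls to \synthproperty terminate by the previous paragraph, so it remains to bound the outer loop. Whenever the loop does not return, it conjoins a sound \lproperty $\phil$ onto $\phiand$ that rejects at least one negative example $\negex \in \interp{\phiand} \cap \interp{\neg\phiprog}$ previously admitted by $\phiand$: either $\negex$ comes from the nonempty $\eminusmust$ produced by the first call with $\psi = \phiand$, or from the satisfiability check at line~\ref{Li:CheckForImprovement}, and the second call (line~\ref{Li:CallSynthPropertyTwo}) guarantees the returned $\phil$ rejects it. Consequently $\interp{\phiand} \cap \interp{\neg\phiprog} \cap D$ strictly shrinks at each non-returning iteration; being a subset of the finite set $D$, it can shrink only finitely often, so the outer loop halts. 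The main obstacle, as flagged above, is the careful treatment of the monotone sets in \synthproperty under the revert step; once the stabilization framing is adopted, the remainder is routine.
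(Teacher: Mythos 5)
Your proof is correct, and its first half takes exactly the paper's route: the paper also discharges the finite-$\lang$ case by noting that $\Rightarrow$ (and its inverse) is trivially a well-quasi order on any finite set of formulas and then invoking Theorem~\ref{thm:completeness}. Where you genuinely diverge is the finite-domain case. The paper supplies no formal proof of this corollary; its entire justification is the sentence preceding the statement, which asserts that \emph{every} iteration of \synthproperty adds a new example to $\eplus$ or to $\eminusmust$. As you correctly flagged, that per-iteration claim is literally false: a sound iteration in which \cp returns a counterexample places it only in $\eminusmay$ (line~\ref{Li:ReinitializeMinusMay}), and the revert branch at line~\ref{Li:RevertLastSound} can later discard that example without it ever reaching $\eminusmust$. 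Your stabilization argument --- $|\eplus| + |\eminusmust|$ is monotone and bounded by the size of the example domain, hence eventually constant, after which \cs must return $\bot$ (any returned positive example would be fresh by Invariant~1) and any example returned by \cp would force a strict enlargement of $\eminusmust$ at the next iteration's line~\ref{Li:MergeMayIntoMust} --- repairs the gap cleanly; an amortized variant (non-growth iterations cannot occur consecutively, since a \cp-returned example leaves $\eminusmay$ nonempty at the next iteration boundary) would also work and recovers the paper's intended linear bound. You additionally supply the outer-loop argument for \synthproperties in the finite-domain case, which the paper leaves entirely implicit: each non-returning iteration strictly shrinks the finite set $\interp{\phiand} \cap \interp{\neg\phiprog}$, because the property added to $\Pi$ rejects a negative example of $\phiand$ obtained either from the nonempty $\eminusmust$ (whose elements satisfy $\psi = \phiand$ by Invariant~4) or from the satisfiability check at line~\ref{Li:CheckForImprovement}, and Lemma~\ref{lem:soundprecise} guarantees the second call's output still rejects it. In short: same two-case decomposition, but your finite-domain counting is more careful than --- and strictly fixes --- the paper's one-line justification.
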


\section{Implementation}
\label{se:implementation}



We implemented our framework in a tool called \name. 
Following \S\ref{se:problem-definition}, \name takes the following inputs:
\rone A query $\signatures$ for which \name is to find a best \lconjunction.
\rtwo The context-free grammar of the DSL $\lang$ in which properties are to be expressed.
\rthree A specification, as a logical formula, of the concrete semantics of the function symbols in $\signatures$ and $\lang$.
\synth and \cp may be undecidable synthesis problems in general, but we show that these primitives can be implemented in practice using program-synthesis tools that are capable of both finding solutions to synthesis problems and establishing that a problem is unrealizable (i.e., it has no solution).

We implemented two versions of \name: \name[\smt] supports problems in which semantics are definable as SMT formulas, and \name[\sketch] supports arbitrary problems but relies on the bounded/underapproximated encoding of program semantics of the \sketch language~\cite{DBLP:journals/sttt/Solar-Lezama13}. 
For the current implementations of \name[\smt] and \name[\sketch], it is necessary to give the inputs in slightly different forms.
In particular, input (iii) is provided to \name[\smt] in SMT-Lib format, whereas it is provided to \name[\sketch] as a piece of code in the \sketch programming language.

In \name[\smt], \cs is just an SMT query, and \synth and \cp can be expressed as \sygus problems.
For the latter two primitives, \name runs two \sygus solvers in parallel and returns the result of whichever terminates first:
(i) CVC5 (v.\ commit \texttt{b500e9d})~\cite{cvc5}, which is optimized for finding solutions to \sygus synthesis queries, and
(ii) a re-implementation of the constraint-based unrealizability-checking technique from~\cite{HuCDR20} that is specialized for finding whether the output of \synth and \cp is $\bot$.

In \name[\sketch], \synth, \cs, and \cp are all implemented by calling
the \sketch synthesizer (v.\ 1.7.6)~\cite{DBLP:journals/sttt/Solar-Lezama13}.
We describe how each primitive is encoded in \sketch in Appendix~\ref{app:sketch} and how \sketch's encoding affects soundness in Section~\ref{se:evaluation}.

\mypar{Timeouts}
We use a timeout threshold of 300 seconds for each call to \synth, \cs, and \cp.
If any such call times out, \synthproperties returns the current \lconjunction, together with an indication that it might not be a best \lconjunction.
(However, each of the individual conjuncts in the returned \lconjunction is a best \lproperty.)

%

\mypar{Additional Tooling}
In our evaluation, we used Dafny \cite{LeinoW14} to verify that the properties obtained by \name[\sketch] were sound for inputs beyond the bounds considered by \sketch.
Furthermore, for the \sygus benchmarks only, we invoked CVC5 to verify whether the properties obtained by \name[\sketch] \emph{exactly} characterized the function (which is a sufficient condition for an answer to be a ``best'' answer).

\section{Evaluation}
\label{se:evaluation}


We evaluated the effectiveness of \name through four case studies: specification mining (\S\ref{se:specmining}), synthesis of algebraic specifications for modular synthesis (\S\ref{se:algspec}), automating sensitivity analysis (\S\ref{se:sensitivity}), and enabling new abstract domains (\S\ref{se:bitvector}).
For each case study, we describe how we collected the benchmarks, present a \textit{quantitative analysis} of the running time and effectiveness of \name, and a \textit{qualitative analysis} of the synthesised \lconjunctions.
%
In \S\ref{se:furtheranalysis}, we describe additional experiments to identify what parameters affect \name's algorithm.

We ran all experiments on an Apple M1 8-core CPU with 8GB RAM. 
All results in this section are for the median run, selected from the results of three runs ranked by their overall synthesis time.

\subsection{Application 1: Specification Mining}
\label{se:specmining}

We considered a total of \numBenchmarks general \textit{specification-mining} problems to evaluate \name: 
\numSygusBenchmarks syntax-guided synthesis (\sygus) problems from the \sygus competition~\cite{https://doi.org/10.48550/arxiv.1904.07146}, where the semantics of operations is expressed using SMT formulas;
\numSynquidBenchmarks type-directed synthesis problems from \synquid~\cite{polikarpova2016program}, where the semantics of operations is expressed using \sketch; 
and \numOtherBenchmarks problems we designed to cover missing interesting types of properties
(11 had their semantics expressed using \sketch and 3 had semantics expressed using SMT formulas).
Cumulatively, we have \numcvcSygusBenchmarks benchmarks for which the semantics of operations is expressed using SMT formulas, and \numSketchBenchmarks benchmarks for which the semantics of operations is expressed using \sketch.
For the \sygus and \synquid benchmarks, we ``inverted'' the roles from the original benchmarks: given the reference implementation, \name synthesized a specification.
Each input problem consists of a set of functions (1 to 14 functions per problem, and the size of each function ranges from 1 to 30 lines of code per function).
The largest problem contains 14 functions (8 list functions and 6 queue functions) and the file contains a total of 140 lines of code.
Most functions are recursive and can call each other---e.g., $\qdequeue$ calls $\lreverse$, which calls $\lsnoc$, which calls $\lcons$.

For each set of similar benchmarks, we designed a DSL that contained operations that could describe interesting properties for the given set of problems. 
The construction of each DSL depended on syntactic information from the code:
the number, types, and names of input and output variables, constants, and function symbols used in the code.
We included operations that are commonly used in each category, such as equality, size primitives, and emptiness checking, but avoided problem-specific information.
For benchmarks involving data structures with structural invariants (e.g., stacks, queues, and binary search trees), we provided data-structure constructors that guaranteed that functions were only invoked with data-structure instances that satisfied the invariants.
The exact grammars are described in Appendix~\ref{App:SpecificationMiningBenchmarks}.
A DSL designed for a specific problem domain was often reused by modifying what function symbols could appear in the DSL.
Overall, we created 7 distinct grammars for 14 different \sygus and arithmetic problems; 10 grammars for 72 \synquid problems; and 3 grammars for 7 Stack and Queue problems.
Although all but one of the DSLs are finite, they are still large languages; our finite DSLs can express between 4 thousand and 14.8 trillion properties, thus making the problem of synthesizing specifications challenging.

\subsubsection{Quantitative Analysis, Part 1: Performance}
\label{se:evalq1}

\name[\smt] synthesized best \lconjunctions for 
{\numcvcSygusSolved/\numcvcSygusBenchmarks}\xspace benchmarks for which the semantics was expressed using SMT formulas.
It took less than 6 minutes each for it to solve the successful examples, and timed out on the remaining \numcvcSygusFailed benchmarks ($\maxfour$, $\arraythree$, $\iaabs$ with the grammar from Eq.\ \eqref{eq:arith-grammar-1}, and hyperproperties of $\diff$)---\name[\smt] typically times out when the synthesis algorithm requires many examples.

Although we did not consider this option in our initial set of benchmarks, for the \numcvcSygusFailed benchmarks on which \name[\smt] failed, we also encoded the semantics of the function symbols in $\signatures$ and $\lang$ using \sketch.
\name[\sketch] could synthesize properties for all \numcvcSygusFailed benchmarks, and guaranteed that {\numSygusNotimeout/\numcvcSygusFailed}\xspace were best \lconjunctions (with respect to \sketch's bounded semantics),
but for the other \numSygusTimeout benchmarks ($\maxfour$, $\arraythree$ and $\diff$) \name[\sketch] timed out on a call to \cp. 
However, the \numSygusTimeout \lconjunctions obtained by the time \cp timed out were indeed best \lconjunctions:
although most-preciseness was not shown by \name[\sketch] within the timeout threshold, we found---using an SMT solver---that the \lconjunctions in hand on the synthesis round on which the timeout occurred defined the \emph{exact} semantics of the functions of interest, which implies they were best \lconjunctions.
The \numSygusNotimeout problem for which \name[\sketch] established most-preciseness terminated within 5 minutes.
For the other \numSygusTimeout problems, if we disregard the last iteration---the one on which most-preciseness of the \lconjunction was to be established---\name[\sketch] found a best \lconjunction within 10 minutes.

\name[\sketch] could synthesize properties for {\numSketchSolved/\numSketchBenchmarks}\xspace
benchmarks for which the semantics was expressed using \sketch, and guaranteed that {\numSketchNotimeout/\numSketchBenchmarks} were best \lconjunctions.
It took less than 10 minutes to solve each List and Tree benchmark, except for the \tbranch\ problem---\name[\sketch] took about 30 minutes to find the best \lconjunction, but failed to show most-preciseness.
It took less than 15 minutes to solve each Stack, Queue, and Integer-Arithmetic benchmark. 
For \ianonlinsum, \name[\sketch] was able to synthesize in 900 seconds a best \lconjunction from a grammar that contains $\approx$14.8 trillion properties (see Eq.~\ref{eq:arith-grammar-2}).

As a baseline, we compared the running time of \name to an estimate of the running time of an algorithm that enumerates all sound properties in $\lang$.
For each benchmark, we estimated the cost of enumerating all terms in the grammar and checking for their soundness by multiplying the size $|\lang|$ of the language generated by the grammar by the average running time of each call to \cs observed when running \name on the same benchmark.
As shown in Table~\ref{tab:benchmarks-selected}, while \name demonstrated a small estimated speedup for smaller problems like $\qempty$ (i.e., 3.5$\times$, with $|\lang| = 64$), \name was 2-5 orders of magnitude faster than the baseline for problems with large languages---$10^4 \leq |\lang| \leq 6 \cdot 10^7$---and 8-10 orders of magnitude faster than the baseline for the two problems with very large languages---$10^{10} \leq |\lang| \leq 1.5 \cdot 10^{13}$.

Together, \name[\sketch] and \name[\smt] synthesized properties for {\numSolved/\numBenchmarks}\xspace  benchmarks ({45/\numBenchmarks} if we consider the 4 benchmarks rewritten using a \sketch semantics), and guaranteed that {\numNotimeout}\xspace were best \lconjunctions ({44} if we consider the 4 benchmarks rewritten using a \sketch semantics and our further analysis using an SMT solver).

\subsubsection{Quantitative Analysis, Part 2: Soundness}
%
To assess whether the properties synthesized by \name[\sketch] were indeed sound beyond the  given input bound considered by \sketch, we used an external verifier: 
Dafny~\cite{LeinoW14} (a general purpose semi-automatic verifier).
%
Dafny successfully verified that {\numDafnyProved/\numDafnyTotal}\xspace \lconjunctions synthesized by \name[\sketch] on non-\sygus benchmarks were sound without any manual input from us.
We could increase this number to {\numDafnyProvedLemma/\numDafnyTotal}\xspace by providing invariants or some logical axioms to Dafny---e.g., $(\forallm l.\, \size(l) \geq 0)$.
%
Dafny failed to verify properties synthesized from \qenqueue\xspace and \lreverse, which require a more expressive $\lang$ to describe the order of elements.

\subsubsection{Qualitative Analysis}
\label{Se:SpecificationMining:QualitativeAnalysis}

\begin{figure}
    \centering
    \scriptsize
    \[
    \arraycolsep=3pt
    \begin{array}[t]{l:l:l}
        \begin{array}[t]{l}
            \cellcolor{titlecol}\sygus\Tstrut\Bstrut\\
            \hlm{\varo {=} \maxtwo(\xone, \xtwo)} \Tstrut\Bstrut\\
            \varo {=} \xone \vee \varo {=} \xtwo\\
            \xtwo {=} \xone \vee \varo {>} \xone \vee \varo {>} \xtwo\\
            \hlm{\varo {=} \maxthree(\xone, \xtwo, \xthree)} \Tstrut\Bstrut\\
            (\xthree {\geq} \xone \wedge \xthree {\geq} \xtwo) \Rightarrow \varo {=} \xthree\\
            \xtwo {=} \xthree \vee \varo {>} \xtwo \vee \varo {>} \xthree\\
            \xone {<} \xtwo \vee \varo {=} \xone \vee \varo {=} \xthree\\
            \xtwo {=} \xone \vee \varo {>} \xtwo \vee \varo {>} \xone\\
            \xone {<} \xthree \vee \varo {=} \xone \vee \varo {=} \xtwo\\
            \hlm{\varo {=} \diff(\x, \y)} \Tstrut\Bstrut\\
            \x {=} \y {+} \varo \vee \y {=} \x {+} \varo\\
            \x {\geq} \y \Rightarrow \x {=} \varo {+} \y\\
            \x {>} \y {+} \y \vee \y {=} \x {+} \varo \vee 0 {<} \x {+} \varo \\
            \x {\leq} \y \Rightarrow \y {=} \varo {+} \x\\
            \hlm{\oone {=} \diff(\xone, \yone)} [\sketch] \Tstrut \\
            \hlm{\otwo {=} \diff(\xtwo, \ytwo)} \Bstrut\\
            (\ytwo {=} \xtwo \wedge \xone {=} \yone) \Rightarrow \otwo {=} \oone\\
            (\xtwo {=} \xone \wedge \yone {=} \ytwo) \Rightarrow \oone {=} \otwo\\
            (\xone {=} \yone \wedge \oone {=} \otwo) \Rightarrow \ytwo {=} \xtwo\\
            (\oone {=} \otwo \wedge \xtwo {=} \ytwo) \Rightarrow \yone {=} \xone\\
            (\xtwo {=} \yone \wedge \ytwo {=} \xone) \Rightarrow \otwo {=} \oone\\
            \hlm{\varo {=} \arraytwo(\xone, \xtwo, \vark)} \Tstrut\Bstrut\\
            \xtwo {\leq} \vark \vee \varo {=} 0 \vee \varo {=} 1 \\
            \vark {<} \xtwo \vee \varo {=} 0 \vee \varo {=} 2 \\
            (\vark {\geq} \xone \wedge \vark {<} \xtwo) \Rightarrow \varo {=} 1\\
            (\vark {\geq} \xone \wedge \vark {\geq} \xtwo) \Rightarrow \varo {=} 2\\
            \vark {<} \xone \Rightarrow \varo {=} 0\\
            \hlm{\varo {=} \iaabs(\x)}  \quad (\text{Eq.}~\ref{eq:arith-grammar-1}) \Tstrut\Bstrut\\
            {-} \x {\geq} \x \Rightarrow {-} \varo {=} x \quad\quad \x {\geq} {-}\x \Rightarrow \varo {=} \x \\
            \hlm{\varo {=} \iaabs(\x)} \quad (\text{Eq.}~\ref{eq:arith-grammar-2} \Tstrut\Bstrut ) [\sketch]\\
            {-}\x {+} 2\varo {=} 0  \vee \x {-} \varo {+} 2 {=} 0 \\ 
            \quad \vee \x {+} 2\varo {-}1 {>} 0 \\
            {-}\x {+} \varo {=} 0 \vee 2\x {+} 2\varo {=} 0\\
            \hlm{\varo {=} \iaabs(\x)} \quad (\text{Eq.}~\ref{eq:arith-grammar-3} \Tstrut\Bstrut  )\\
            {-}20 {\leq} \x {\leq} 10 \Rightarrow \varo {\leq} 20\\
            \cellcolor{titlecol}\text{Other: integer arithmetic}\Tstrut\Bstrut\\  
            \hlm{\varo {=} \ialinsum(\x)}\quad  (\text{Eq.}~\ref{eq:arith-grammar-1})\Tstrut\Bstrut \\
            \varo {=} \x  \vee {-} \varo {=} 0 \\
            {-}\x {=} \x \vee 1 {\leq} \varo \vee \varo {>} \x\\
            \hlm{\varo {=} \ialinsum(\x)}\quad (\text{Eq.}~\ref{eq:arith-grammar-2}) \Tstrut\Bstrut  \\
            4 \varo {=} 0 \vee 3\x {-} 3\varo {=} 0\\
            {-} 3\x {-} \varo {+} 4 {=} 0 \vee -3 \x {+} 4\varo {-} 1 {>} 0  \\
            \quad \vee 4 \x {-} \varo {=} 0 \\
            \hlm{\varo {=} \ianonlinsum(\x)}\quad  (\text{Eq.}~\ref{eq:arith-grammar-1}) \Tstrut\Bstrut\\
            \x {>} 1 \vee \varo {=} 1 \vee \varo {=} 0 \\
            {-} \varo {<} {-} \x \vee \varo {=} 1 \vee \x {=} 0 \\
            {-} \varo {=} \varo \vee \x {=} 1 \vee \varo {>} 0 \\
            \hlm{\varo {=} \ianonlinsum(\x)} \quad  (\text{Eq.}~\ref{eq:arith-grammar-2})\Tstrut\Bstrut \\
            2 \varo {=} 0 \vee - \x {+} 2\varo {-} \x^2 {=} 0 \\
            {-} 4\x {+} 2\varo {-} x^2 {=} 0 \vee 3\x {-}4\varo {+} x^2 {\leq} 0 \\
            \quad {-} \x {+} 2 \varo > 4  \\ 
            3 \x {+} \varo {>} 4 \vee {-}3\varo {=} 0
            \vee {-} \x {+} 2 \varo {-} 4x^2 {=} {-}3  \\ 
            \cellcolor{titlecol}\text{Queue}\Tstrut\Bstrut\\
            \hlm{\qout {=} \qempty()} \Tstrut\Bstrut\\
            \isempty(\qtolist(\qout))\\
            \hlm{\qout {=} \qenqueue(\varq, \x)} \\
            \equal(\qtolist(\qout), \lsnoc(\qtolist(\varq), \x))\Tstrut\Bstrut\\
            \hlm{(\qout, \x) {=} \qdequeue(\varq)} \\
            \equal(\qtolist(\varq), \lcons(\x, \qtolist(\qout)))\Tstrut\Bstrut\\
        \end{array}
         & 
        \begin{array}[t]{l}
            \cellcolor{titlecol}\text{List}\Tstrut\Bstrut\\
            \hlm{\lout {=} \lappend(\lone, \ltwo)} \Tstrut\Bstrut\\
            \size(\lout) {=} \size(\lone) {+} \size(\ltwo)\\
            \equal(\ltwo, \lout) \vee \equal(\lone, \lout) \vee \size(\lout) {>} 1 \\
            \equal(\ltwo, \lout) \vee \size(\lout) {=} \size(\ltwo) {+} 1 \vee \size(\lone) {>} 1 \\
            \equal(\lone, \lout) \vee \size(\lout) {>} \size(\lone) {+} 1 \vee \size(\ltwo) {=} 1 \\
            \hlm{\lout {=} \ldelete(\varl, \varval)} \Tstrut\Bstrut\\
            \equal(\varl, \lout) \vee \size(\varl) {=} \size(\lout) {+} 1 \\
            (\existsm  \x {\in} \varl.\, \x {<} \varval) \Rightarrow (\existsm \x {\in} \lout.\, \x {<} \varval) \\
            (\forallm  \x {\in} \varl.\, \x {\neq} \varval) \Rightarrow \equal(\varl, \lout)  \\
            (\forallm \x {\in} \varl.\, \x {\geq} \varval) \Rightarrow (\forallm  \x {\in} \lout.\, \x {\geq} \varval) \\            
            (\existsm \x {\in} \varl.\, \x {<} \varval) \vee \size(\varl) {=} \size(\lout) {+} 1 \\
            \quad \vee  (\forallm  \x {\in} \lout.\, \x {\neq} \varval)\\
            (\existsm \x {\in} \varl.\, \x {>} \varval) \Rightarrow (\existsm  \x {\in} \lout.\, \x {>} \varval) \\            
            (\existsm \x {\in} \varl.\, \x {=} \varval) \Rightarrow \size(\varl) {=} \size(\lout) {+} 1\\
            (\forallm \x {\in} \varl.\, \x {\leq} \varval) \Rightarrow (\forallm  \x {\in} \lout.\, \x {\leq} \varval) \\
            \hlm{\lout {=} \ldeleteall(\varl, \varval)} \Tstrut\Bstrut\\
            (\forallm \x {\in} \varl.\, \x {\geq} \varval) \Rightarrow (\forallm \x {\in} \lout.\, \x {>} \varval) \\
            (\forallm \x {\in} \varl.\, \x {\neq}  \varval) \Rightarrow \equal(\varl, \lout) \\
            (\existsm \x {\in} \varl.\, \x {>} \varval) \Rightarrow (\exists  \x {\in} \lout.\, \x {>} \varval) \\
            \equal(\varl, \lout) \vee \size(\varl) {\geq} \size(\lout) + 1 \\
            \forallm \x {\in} \lout.\, \x {\neq}  \varval \\
            (\existsm  \x {\in} \varl.\, \x {<} \varval) \Rightarrow (\exists \x {\in} \lout.\, \x {<} \varval)\\            
            (\forallm \x {\in} \varl.\, \x {\leq} \varval) \Rightarrow (\forallm  \x {\in} \lout.\, \x {<} \varval) \\
            \hlm{\varb {=} \ldrop(\varl, \varval)}\Tstrut\Bstrut\\
            \size(\lout) {=} \varn {+} \size(\varl) \vee \varn {>} 1 \\ 
            \quad \vee \size(\varl) {=} \size(\lout) {+} 1 \\
            \size(\varl) {=} \varn {+} \size(\lout)\\
            \equal(\lout, \varl) \vee \varn {=} 1 \vee \size(\varl) {>} \size(\lout) {+} 1\\
            \hlm{\varb {=} \lelem(\varl, \varval)}\Tstrut\Bstrut\\
            b \Leftrightarrow (\existsm  \x {\in} \varl.\, \x {=} \varval)\\
            \hlm{\idx {=} \lelemidx(\varl, \varval)}\Tstrut\Bstrut\\
            \idx {<} \size(\varl)\\
            \idx {=} {-}1 \vee (\existsm  \x {\in} \varl.\, \x {=}  \varval)\\
            \idx {>} {-}1 \vee (\forallm  \x {\in} \varl.\, \x {\neq}  \varval)\\
            (\existsm  \x {\in} \varl.\, \x {\neq} \varval) \vee 0 {=} \idx {+} 1 \vee 0 {=} \idx \\
            \hlm{\varo {=} \lmin(\varl)}\Tstrut\Bstrut\\
            \existsm  \x {\in} \varl.\, \x {=} \varo \quad\quad
            \forallm  \x {\in} \varl.\, \x {\geq} \varo\\
            \hlm{\lout {=} \lreplicate(\varn, \varval)}\Tstrut\Bstrut\\
            \size(\lout) {=} \varn \quad\quad
            \forallm \x {\in} \lout.\, \x {=} \varval\\
            \hlm{\lout {=} \lreverse(\varl)}\Tstrut\Bstrut\\
            \equal(\varl, \lout) \vee \size(\varl) {>} 1 \quad\quad
            \size(\varl) {=} \size(\lout)\\
            \hlm{\loneout {=} \lreverse(\lone)}\Tstrut\\
            \hlm{\ltwoout {=} \lreverse(\ltwo)}\Bstrut\\
            \equal(\loneout, \ltwo) \Rightarrow \equal(\ltwoout, \lone) \\
            \equal(\ltwoout, \lone) \Rightarrow \equal(\loneout, \ltwo)\\
            \equal(\ltwoout, \loneout) \Rightarrow \equal(\lone, \ltwo)\\
            \equal(\lone, \ltwo) \Rightarrow \equal(\loneout, \ltwoout)\\
            \hlm{\varval {=} \lith(\varl, \idx)}\Tstrut\Bstrut\\
            0 {<} \idx {+} 1 \quad\quad
            \size(l) {=} \idx {+} 1 \vee \size(l) {>} 1\\
            \idx {<} \size(l) \quad\quad 
            \existsm  \x {\in} \varl.\, \x {=} \varval \\
            \hlm{\lout {=} \lsnoc(\varl, \varval)}\Tstrut\Bstrut\\
            \size(\lout) {=} \size(\varl) {+} 1 \quad\quad
            \existsm  \x {\in} \varl.\, \x {=} \varval \\
            (\forallm \x {\in} \varl.\, \x {\geq} \varval) \Rightarrow (\forallm  \x {\in} \lout.\, \x {\geq} \varval)\\            
            (\existsm  \x {\in} \varl.\, \x {<} \varval)  \Rightarrow (\existsm \x {\in} \lout.\, \x {<} \varval)\\
            (\existsm  \x {\in} \varl.\, \x {\neq} \varval) \Rightarrow (\existsm  \x {\in} \lout.\, \x {\neq} \varval)\\ 
            (\existsm  \x {\in} \varl.\, \x {>} \varval) \Rightarrow (\existsm  \x {\in} \lout.\, \x {>} \varval)\\            
            (\forallm  \x {\in} \varl.\, \x {\leq} \varval)  \Rightarrow (\forallm \x {\in} \lout.\, \x {\leq} \varval) \\
            \hlm{\lout {=} \lstutter(\varl)}\Tstrut\Bstrut\\
            \size(\lout) {=} \size(\varl) {+} 1 \vee \size(\varl) {>} 1 
            \vee \equal(\varl, \lout)\\
            \isempty(\varl) \vee \size(\lout) {>} \size(l) \\
            \equal(\varl, \lout) \vee \size(\lout) {>} \size(\varl) {+} 1 \vee \size(\varl) {=} 1 \\
            \hlm{\lout {=} \ltake(\varl, \varn)}\Tstrut\Bstrut\\
            \varn {=} \size(\lout)\quad \varn {=} \size(\varl) {\vee} \size(\varl) {\geq} \size(\lout) {+} 1\\
            \equal(\varl, \lout) {\vee} \size(\varl) {=} \varn {+} 1  {\vee} \size(\varl) {>} \size(\lout) {+} 1\\        
        \end{array}         
         &  
        \begin{array}[t]{l}
            \cellcolor{titlecol}\text{Binary Tree}\Tstrut\Bstrut\\
            \hlm{\tout {=} \tempty()}\Tstrut\Bstrut\\
            \isempty(\tout)\\
            \hlm{\tout {=} \tbranch(\varval, \tone, \ttwo)}\Tstrut\Bstrut\\
            \size(\tout) {=} \size(\tone) {+} \size(\ttwo) {+} 1\\
            (\existsm  \x {\in} \tone.\, \x {>} \varval) \Rightarrow (\existsm \x {\in} \tout.\, \x {>} \varval)\\
            \existsm  \x {\in} \tout.\, \x {=} \varval\\
            (\existsm  \x {\in} \tone.\, \x {\neq} \varval) \Rightarrow (\existsm \x {\in} \tout.\, \x {\neq}  \varval)\\
            (\forallm  \x {\in} \tone.\, \x {\leq} \varval) \wedge (\forallm  \x {\in} \ttwo.\, \x {\leq} \varval)  \\
            \quad\Rightarrow (\forallm \x {\in} \tout.\, \x {\leq} \varval) \\
            (\existsm  \x {\in} \tone.\, \x {<} \varval) \Rightarrow (\existsm \x {\in} \tout.\, \x {<} \varval)\\
            (\existsm  \x {\in} \ttwo.\, \x {\neq} \varval) \Rightarrow (\existsm \x {\in} \tout.\, \x {\neq}  \varval)\\        
            (\forallm  \x {\in} \tone.\, \x {\geq} \varval) \wedge (\forallm  \x {\in} \ttwo.\, \x {\geq} \varval) \\
            \quad\Rightarrow (\forallm \x {\in} \tout.\, \x {\geq} \varval) \\
            (\existsm  \x {\in} \ttwo.\, \x {<} \varval) \Rightarrow (\existsm \x {\in} \tout.\, \x {<} \varval)\\
            (\existsm  \x {\in} \ttwo.\, \x {>} \varval) \Rightarrow (\existsm \x {\in} \tout.\, \x {>} \varval)\\
            \hlm{\varb {=} \telem(\vart, \varval)}\Tstrut\Bstrut\\
            b \Leftrightarrow (\existsm  \x {\in} \vart.\, \x {=} \varval)\\
            \hlm{\toneout {=} \tbranch(\varval, \tone, \ttwo)}\Tstrut\\
            \hlm{\ttwoout {=} \tleft(\vart)}\Bstrut\\
            \equal(\ttwo, \vart) \Rightarrow \neg \equal(\toneout, \ttwoout)\\
            \equal(\toneout, \vart) \Rightarrow \equal(\tone, \ttwoout)\\
            \equal(\tone, \vart) \Rightarrow \neg \equal(\toneout, \ttwoout)\\
            \hlm{\toneout {=} \tbranch(\varval, \tone, \ttwo)}\Tstrut\\
            \hlm{\ttwoout {=} \tright(\vart)}\Bstrut\\
            \equal(\ttwo, \vart) \Rightarrow \neg \equal(\toneout, \ttwoout)\\
            \equal(\toneout, \vart) \Rightarrow \equal(\ttwo, \ttwoout)\\
            \equal(\tone, \vart) \Rightarrow \neg \equal(\toneout, \ttwoout)\\
            \hlm{\toneout {=} \tbranch(\varval, \tone, \ttwo)}\Tstrut\\
            \hlm{\valout {=} \trootval(\vart)}\Bstrut\\
            \equal(\toneout, \vart) \Rightarrow \equal(\varval, \valout)\\
            \cellcolor{titlecol}\text{Binary Search Tree}\Tstrut\Bstrut\\
            \hlm{\tout {=} \bstempty()}\Tstrut\Bstrut\\
            \isempty(\tout)\\
            \hlm{\tout {=} \bstinsert(\vart, \varval)}\Tstrut\Bstrut\\
            \size(\tout) {=} \size(\vart) {+} 1 \vee \equal(\vart, \tout) \\
            (\existsm  \x {\in} \tout.\, \x {=} \varval) \\
            (\forallm  \x {\in} \vart.\, \x {\geq} \varval) \Rightarrow (\forallm \x {\in} \tout.\, \x {\geq} \varval)\\
            (\forallm  \x {\in} \vart.\, \x {<} \varval) {\vee} (\existsm \x {\in} \tout.\, \x {>} \varval) {\vee} \equal(\vart, \tout) \\
            (\forallm  \x {\in} \vart.\, \x {\leq} \varval) \Rightarrow (\forallm \x {\in} \tout.\, \x {\leq} \varval)\\
            (\forallm  \x {\in} \vart.\, \x {>} \varval) {\vee} (\existsm \x {\in} \tout.\, \x {<} \varval) {\vee} \equal(\vart, \tout) \\
            \hlm{\tout {=} \bstdelete(\vart, \varval)}\Tstrut\Bstrut\\
            (\forallm  \x {\in} \vart.\, \x {\neq} \varval) \Rightarrow \equal(\vart, \tout)\\
            (\exists  \x {\in} \vart.\, \x {=} \varval) \Rightarrow \size(\vart) {=} \size(\tout) {+} 1\\
            (\forallm \x {\in} \vart.\, \x {\geq} \varval) \Rightarrow (\forallm  \x {\in} \tout.\, \x {>} \varval) \\
            \forallm \x {\in} \tout.\, \x {\neq} \varval\\
            (\forallm \x {\in} \vart.\, \x {\leq} \varval) \Rightarrow (\forallm  \x {\in} \tout.\, \x {<} \varval) \\
            (\existsm  \x {\in} \vart.\, \x {<} \varval) \Rightarrow (\existsm \x {\in} \tout.\, \x {<} \varval) \\
            (\existsm  \x {\in} \vart.\, \x {>} \varval) \Rightarrow (\existsm \x {\in} \tout.\, \x {>} \varval) \\
            \hlm{\varb {=} \bstfind(\vart, \varval)}\Tstrut\Bstrut\\
            b \Rightarrow (\existsm  \x {\in} \vart.\, \x {=} \varval)\\
            (\existsm  \x {\in} \vart.\, \x {=} \varval) \Rightarrow b\\
            \cellcolor{titlecol}\text{Stack}\Tstrut\Bstrut\\ 
            \hlm{\sout {=} \stempty()} \Tstrut\Bstrut\\
            \isempty(\sout)\\
            \hlm{\sout {=} \stpush(\vars, \x)} \Tstrut\Bstrut\\
            \size(\sout) {=} \size(\vars) {+} 1\\
            \hlm{(\sout, \x) {=} \stpop(\vars)}\Tstrut\Bstrut \\
            \size(\vars) {=} \size(\sout) {+} 1\\
            \hlm{\soneout {=} \stpush(\sone, \xone)} \Tstrut\\
            \hlm{(\stwoout, \xtwo) {=} \stpop(\stwo)} \Bstrut\\
            \equal(\soneout, \stwo) \Rightarrow \xone {=} \xtwo\\
            \equal(\soneout, \stwo) \Rightarrow \equal(\sone, \stwoout)\\
            \equal(\stwoout, \sone) \wedge \xone {=} \xtwo \Rightarrow \equal(\soneout, \stwo)\\
        \end{array}         
    \end{array}
    \]
    \vspace{-3mm}
    \caption{\lproperties synthesized by \name. 
    Some properties are rewritten as implications for readability.
    }
    \label{fig:properties}
\end{figure}

Fig.~\ref{fig:properties} shows the properties synthesized by \name on one of our three runs.
In the \sygus benchmarks, ``$[\sketch]$'' denotes cases in which \name[\sketch] terminated with a semantics defined using \sketch, but \name[\smt] did not with a semantics defined using SMT formulas.
Due to space constraints, we omit $\maxfour$ and $\arraythree$.
They are similar to $\maxthree$ and $\arraytwo$, respectively, but result in many properties.

\mypar{\sygus Benchmarks}
The \lconjunctions synthesized by \name[\smt] and \name[\sketch] are more precise or equivalent to the original specifications given in the \sygus problems themselves.
In fact, \name found \lconjunctions that define the exact semantics of 
the given queries.
Inspired by this equivalence, we attempted to use a \sygus solver (CVC5) on the \sygus benchmarks to synthesize an exact formula:
we used a grammar of conjunctive properties (including the ``and'' operator, unlike the grammars used by \name), and the specification was the semantics of the function.
For {\numSygusEquivFail/\numSygusEquivBenchmarks}\xspace cases, CVC5 timed out, thus showing that our approach (of synthesizing one \lproperty at a time) is beneficial even in the artificial situation in which an oracle supplies the semantics of the best \lconjunction.
Moreover, directly synthesizing an \lconjunction---as CVC5 attempts---can yield a set of conjuncts of which some are not most-precise \lconjuncts.

%

\mypar{Synquid Benchmarks}
To evaluate the synthesized properties, we provided the synthesized \lconjunctions to \synquid and asked it to re-synthesize the reference implementation from which we extracted the properties.
In \numSynquidResynth/\numSynquidSynth cases, \synquid could re-synthesize the reference implementation.
%
%
In \numSynquidResynthFailed/\numSynquidSynth cases---\lelemidx, \lith, \lreverse, and \lstutter---the synthesized properties were not precise enough to re-synthesize the reference implementation.
For example, as stated at the end of \S\ref{se:problem-definition}, for the $\lstutter$ benchmark our DSL did not contain multiplication by $2$ and \name could not synthesize a property stating that the length of the output list is twice the length of the input list.
After modifying the DSL to contain multiplication by 2 and the ability to describe when an element appears in both the input and output lists, \name successfully synthesized 7 properties in 154.71 seconds (see Eq.~\ref{eq:stutter-newdsl}).
From the augmented set of properties, \synquid could synthesize the reference implementation of $\lstutter$.
This experiment shows how the ability to modify the DSL empowers the user of \name with ways to customize the type of properties they are interested in synthesizing.

\mypar{Other Benchmarks}
%
A core property of Stack is the principle of Last-In First-Out (LIFO).
%
\name was able to synthesize a simple formula that captures LIFO by looking at the relationship between \stpush\xspace and \stpop:
given the query
$\soneout {=} \stpush(\sone, \xone)$ and $(\stwoout, \xtwo) {=} \stpop(\stwo)$,
\name synthesized the properties
$\equal(\soneout, \stwo) \Rightarrow \xone {=} \xtwo$ and $\equal(\soneout, \stwo) \Rightarrow \equal(\sone, \stwoout)$.

A Queue is a data structure whose formal behavior is somewhat hard to describe.
Unlike Stack, the behavior of a Queue is not expressible by a simple combination of input and output variables.
\name could synthesize formulas describing the behavior of each Queue operation by providing a conversion function from a Queue consisting of two Lists into a List.
For the query
$(\qout, \x) = \qdequeue(\varq)$, \name synthesized the property $\equal(\qtolist(\varq), \lcons(\x, \qtolist(\qout)))$.


\textbf{Finding:} \name can synthesize 
sound best \lproperties and mine specifications for most of the  programs in our three categories.
Furthermore, \name synthesizes desirable properties that can be easily inspected by a user, who can then modify the DSL $\lang$ to obtain other properties if desired.

\subsection{Application 2: Synthesizing Algebraic Specifications for Modular Synthesis} 
\label{se:algspec}

In many applications of program synthesis, one has to synthesize a function that uses an existing implementation of certain external functions such as data-structure operations---i.e., synthesis is to be carried out in a modular fashion.
Even if one has to synthesize a small function implementation, the synthesizer will need to reason about the large amount of code required to represent the external functions, which can hamper performance.
\citet{mariano2019algspecs} recently proposed a new approach to modular synthesis---i.e., functions are arranged in modules---where instead of providing the synthesizer with an explicit implementation of the external functions, one provides an algebraic specification---i.e., one that does not reveal the internals of the module---of how the functions in a module operate.
For example, to describe the semantics of the functions $\setempty$, $\setadd$, $\setremove$, $\setcontains$ and $\setsize$ in a \texttt{HashSet} module, one would provide the algebraic properties in Eq.~\ref{eq:alg-spec-set}, which describe appropriate data-structure invariants, such as handling of
duplicate elements.
\begin{equation}
\label{eq:alg-spec-set}
\begin{array}{c} 
    \setcontains(\setempty, \x) = \fls \qquad
    \xone = \xtwo \Rightarrow \setcontains(\setadd(\vars, \xone), \xtwo) = \tru \\
    \xone \neq \xtwo \Rightarrow \setcontains(\setadd(\vars, \xone), \xtwo) = \setcontains(\vars, \xtwo) 
    \\
    \setremove(\setempty, \x) = \setempty \qquad
    \xone = \xtwo \Rightarrow \setremove(\setadd(\vars, \xone), \xtwo) = \setremove(\vars, \xtwo) \\
    \xone \neq \xtwo \Rightarrow \setremove(\setadd(\vars, \xone), \xtwo) = \setadd(\setremove(\vars, \xtwo), \xone) \\    
\end{array}
\end{equation}

While their approach has shown promise in terms of scalability, to use this idea in practice one has to  provide the algebraic specifications to the synthesizer \textit{manually}, a tricky task because these specifications typically define how multiple functions interact with each other.

In our case study, we used \name to synthesize algebraic specifications for benchmarks used in the evaluation of
\jlibsketch, an extension of the \sketch tool that supports algebraic specifications \cite{mariano2019algspecs}.
We considered the 3 modules---\texttt{ArrayList}, \texttt{HashSet}, and \texttt{HashMap}---that provided  algebraic specifications, did not use \texttt{string} operations
(our current implementation does not support strings), and did not require auxiliary functions that were not present in the implementation to describe the algebraic properties.
For each module, \jlibsketch contained both the algebraic specification of the module and its mock implementation---i.e., a simplified implementation that mimics the intended library's behavior (e.g., \texttt{HashSet} is implemented using an array).
Given the mock implementation of the module, we asked \name to synthesize most-precise algebraic specifications. 

For this case study, designing a grammar that accepted all possible algebraic specifications but avoided search-space explosion proved to be challenging.
Instead, we opted to create multiple grammars for each module to target different parts of the algebraic specifications, and called \name separately for each grammar.
For example, if the \jlibsketch benchmark contained an algebraic specification $\setsize(\setadd(\vars, \x)) = \setsize(\vars) + 1$, we considered the grammar to contain properties of the form $guard \Rightarrow \setsize(\setadd(\vars, \x)) = exp$.
%
All the DSLs designed for algebraic specifications synthesis were reused by modifying what function symbols could appear in the DSL.
%
The detailed grammars are presented in Appendix~\ref{App:AlgebraicSpecificationBenchmarks}.

\name terminated with a best \lconjunction for all the benchmarks (and grammars) in less than 800 seconds per benchmark.
\name was slower than the enumerative baseline presented in Section~\ref{se:evalq1} for very small languages ($|\lang| < 5$)  but faster in all other cases.
The speedups were not as prominent as for Application 1.

\begin{table}[tp]
\caption{
Evaluation results of \name. 
A few representatives benchmarks are selected from each application. 
A (*) indicates a timeout when attempting to prove precision in the last iteration.
In that case, we report as total time the time at which \name timed out. 
The Enum. column reports the estimated time required to run \cs for all formulas in the DSL $\lang$. This estimation is achieved by multiplying the size of the grammar by the average running time of the \cs.
\label{tab:benchmarks-selected}}
{\footnotesize
\setlength{\tabcolsep}{2pt}
\begin{tabular}{cccrrrrrrrrrr} 
\toprule[.1em]
\multicolumn{3}{c}{\multirow{2}{*}[-0.4ex]{Problem}}
& \multicolumn{1}{c}{\multirow{2}{*}{$|\lang|$}}
    & \multicolumn{2}{c}{\synth} & \multicolumn{2}{c}{\cs} & \multicolumn{2}{c}{\cp} & Last Iter. & Enum. & Total \\
    \cmidrule{5-13}
    &  & & & Num & T(sec) & Num & T(sec) & Num & T(sec) & T(sec) & T(sec) & T(sec) \\
\midrule[.1em]
\parbox[t]{3mm}{\multirow{22}{*}{\rotatebox[origin=c]{90}{Application 1}}} &
\parbox[t]{2mm}{\multirow{4}{*}{\rotatebox[origin=c]{90}{\sygus}}}
    & \maxtwo & $1.57 \cdot 10^5$ & 6 & 0.13 & 12 & 0.06 & 9 & 1.37 & 0.05 & 787.32 & 1.55 \\
   & & \maxthree & $8.85 \cdot 10^5$ & 19 & 13.83 & 48 & 3.53 & 36 & 131.44 & 0.46 & $6.51 \cdot 10^4$ & 148.79 \\
    & & \diff & $5.66 \cdot 10^7$ & 18 & 19.41 & 41 & 1.06 & 28 & 236.61 & 0.48 & $1.46 \cdot 10^6$ & 257.08 \\
    & & \arraytwo & $3.37 \cdot 10^6$ & 16 & 3.44 & 41 & 1.44 & 31 & 60.97 & 0.28 & $1.19 \cdot 10^5$ & 65.84 \\
\cmidrule{2-13}
 & \parbox[t]{2mm}{\multirow{2}{*}{\rotatebox[origin=c]{90}{LIA}}}
    & \iaabs\xspace (Eq.~\eqref{eq:arith-grammar-1}) & $3.37 \cdot 10^6$ & 6 & 1.14 & 14 & 0.76 & 11 & 10.16 & 0.38 & $1.83 \cdot 10^5$ & 12.05 \\
    & & \iaabs\xspace (Eq.~\eqref{eq:arith-grammar-3}) & $\infty$ & 22 & 3.98 & 23 & 0.15 & 3 & 0.67 & 0.70 & $\infty$ & 4.80 \\
\cmidrule{2-13}
& \parbox[t]{2mm}{\multirow{5}{*}{\rotatebox[origin=c]{90}{List}}}
    & \lappend & $4.97 \cdot 10^8$ & 29 & 46.62 & 68 & 81.50 & 43 & 91.46 & 57.59 & $5.95 \cdot 10^8$ & 219.58 \\    
    & & \ldeleteall & $2.99 \cdot 10^6$ & 24 & 39.49 & 66 & 87.15 & 49 & 107.07 & 18.41 & $3.94 \cdot 10^6$ & 233.71 \\
    & & \lmin & $2.38 \cdot 10^5$ & 5 & 6.02 & 14 & 13.71 & 11 & 14.42 & 2.53 & $2.38 \cdot 10^5$ & 34.15 \\
    & & \lreverse & $1.73 \cdot 10^6$ & 6 & 7.28 & 16 & 18.11 & 12 & 16.84 & 7.83 & $1.96 \cdot 10^6$ & 42.23 \\
    & & \lreverse, \lreverse & $6.40 \cdot 10^4$ & 20 & 26.64 & 52 & 96.65 & 40 & 76.37 & 32.21 & $1.19 \cdot 10^5$ & 199.66 \\
\cmidrule{2-13}
& \parbox[t]{2mm}{\multirow{4}{*}{\rotatebox[origin=c]{90}{Stack}}}
    & \stempty & $1.25 \cdot 10^5$ & 2 & 2.12 & 6 & 5.61 & 5 & 5.89 & 2.05 & $1.17 \cdot 10^5$ & 13.63 \\
    & & \stpush & $1.73 \cdot 10^6$ & 4 & 4.9 & 10 & 11.16 & 7 & 9.73 & 2.50 & $1.91 \cdot 10^6$ & 25.79 \\
    & & \stpop & $1.73 \cdot 10^6$ & 4 & 4.84 & 11 & 12.16 & 8 & 10.97 & 12.02 & $1.93 \cdot 10^6$ & 27.98 \\
    & & \stpush, \stpop & $2.62 \cdot 10^5$ & 32 & 51.12 & 77 & 95.72 & 53 & 107.13 & 27.66 & $3.26 \cdot 10^5$ & 253.97 \\
\cmidrule{2-13}
& \parbox[t]{2mm}{\multirow{3}{*}{\rotatebox[origin=c]{90}{Queue}}}
    & \qempty & 64 & 2 & 2.18 & 6 & 5.86 & 5 & 9.81 & 2.07 & 62.51 & 17.85 \\
    & & \qenqueue & $5.93 \cdot 10^5$ & 4 & 5.4 & 11 & 16.82 & 8 & 270.26 & 200.15 & $9.06 \cdot 10^5$ & 292.48 \\
    & & \qdequeue & $5.93 \cdot 10^5$ & 4 & 5.51 & 9 & 12.51 & 6 & 290.28 & 192.02 & $8.24 \cdot 10^5$ & 308.30 \\
\cmidrule{2-13}
& \parbox[t]{2mm}{\multirow{4}{*}{\rotatebox[origin=c]{90}{Arithmetic}}}    
    & \ialinsum\xspace (Eq.~\eqref{eq:arith-grammar-1}) & $1.01 \cdot 10^7$ & 7 & 6.69 & 20 & 14.50 & 15 & 15.33 & 5.33 & $7.31 \cdot 10^6$ & 36.52 \\
    & & \ialinsum\xspace (Eq.~\eqref{eq:arith-grammar-2}) & $2.90 \cdot 10^{10}$ & 15 & 15.31 & 99 & 71.72 & 88 & 112.87 & 2.70 & $2.10 \cdot 10^{10}$ & 199.90 \\
    & & \ianonlinsum\xspace (Eq.~\eqref{eq:arith-grammar-1}) & $1.01 \cdot 10^7$ & 8 & 7.95 & 30 & 21.58 & 25 & 26.48 & 2.53 & $7.25 \cdot 10^6$ & 56.01 \\
    & & \ianonlinsum\xspace (Eq.~\eqref{eq:arith-grammar-2}) & $1.48 \cdot 10^{13}$ & 17 & 16.82 & 121 & 102.71 & 107 & 734.23 & 48.21 & $1.26 \cdot 10^{13}$ & 853.77 \\
\midrule[.2em]
\parbox[t]{3mm}{\multirow{6}{*}{\rotatebox[origin=c]{90}{Application 2}}} &
\parbox[t]{2mm}{\multirow{6}{*}{\rotatebox[origin=c]{90}{HashSet}}}
    & \setsize, \setempty & 3 & 2 & 0.62 & 6 & 1.80 & 5 & 1.62 & 0.62 & 0.90 & 4.04 \\ 
    & & \setsize, \setadd & 1,315 & 7 & 2.39 & 14 & 4.56 & 9 & 3.40 & 0.70 &  428.31 & 10.36 \\ 
    & & \setcontains, \setempty & 3 & 2 & 0.60 & 6 & 1.80 & 5 & 1.64 & 0.64 & 0.90 & 4.04 \\ 
    & & \setcontains, \setadd & 55 & 5 & 1.64 & 12 & 3.94 & 9 & 3.28 & 0.69 & 18.06 & 8.86 \\ 
    & & \setremove, \setempty & 5 & 1 & 0.34 & 5 & 1.05 & 5 & 1.78 & 0.55 & 1.05 & 3.17 \\ 
    & & \setremove, \setadd & 181 & 7 & 3.38 & 15 & 6.20 & 10 & 13.23 & 3.86 & 74.81 & 22.81 \\ 
\midrule[.2em]
\parbox[t]{3mm}{\multirow{9}{*}{\rotatebox[origin=c]{90}{Application 3}}} &
\parbox[t]{2mm}{\multirow{9}{*}{\rotatebox[origin=c]{90}{Sensitivity Edit Dist.}}}
    & \lappend & 16,385 & 43 & 29.18 & 82 & 118.52 & 43 & 55.27 & 7.95 & $2.37 \cdot 10^4$ & 202.97 \\
    & & \lcons & 769 & 6 & 2.49 & 10 & 4.18 & 4 & 2.02 & 8.69 & 321.44 & 8.69 \\
    & & \lconsdelete & 769 & 19 & 9.00 & 36 & 15.80 & 19 & 12.70 & 11.75 & 337.51 & 37.50 \\ 
    & & \ldelete & 769 & 33 & 17.9 & 68 & 115.43 & 42 & 31.53 & 1.31 & 1385.38 & 164.86 \\
    & & \ldeleteall & 769 & 26 & 13.42 & 55 & 190.48 & 33 & 23.55 & 3.64 & 2663.26 & 227.45 \\
    & & \lreverse & 257 & 13 & 6.00 & 27 & 61.43 & 16 & 10.11 & 3.32 & 594.24 & 77.53 \\
    & & \lsnoc & 769 & 29 & 15.44 & 61 & 37.21 & 36 & 25.72 & 5.80 & 700.07 & 78.38 \\
    & & \lstutter & 257 & 14 & 6.29 & 26 & 13.66 & 14 & 8.51 & 3.03 & 135.02 & 28.47 \\
    & & \ltail & 257 & 14 & 6.17 & 29 & 12.83 & 17 & 10.01 & 9.66 & 113.70 & 29.01 \\
\midrule[.2em]
\parbox[t]{3mm}{\multirow{5}{*}{\rotatebox[origin=c]{90}{Application 4}}} &
\parbox[t]{2mm}{\multirow{5}{*}{\rotatebox[origin=c]{90}{BV Polyhedra}}}
    & \bvsquare & $1.68 \cdot 10^7$ & 20 & 14.79 & 76 & 24.65 & 60 & 136.76 & 68.99 & $5.44 \cdot 10^6$  & 176.19 \\
    & & \bvhalf & $1.68 \cdot 10^7$ & 19 & 16.17 & 56 & 19.27 & 40 & 353.14 & $\ast$ & $5.77 \cdot 10^6$ & 388.57 \\
    & & \bvsquareineq & $1.68 \cdot 10^7$ & 45 & 105.24 & 70 & 20.24 & 28 & 95.41 & 98.74 & $4.85 \cdot 10^6$ & 220.89 \\
    & & \bvconj & $1.68 \cdot 10^7$ & 46 & 73.08 & 67 & 20.49 & 23 & 102.32 & 117.45 & $5.13 \cdot 10^6$ & 195.89 \\
    & & \bvdisj & $1.68 \cdot 10^7$ & 48 & 178.05 & 62 & 22.69 & 15 & 102.73 & 105.71 & $6.14 \cdot 10^6$ & 303.47 \\
\bottomrule[.1em]
\bottomrule[.1em]
\end{tabular}
}
\end{table}

For all but one benchmark, the \lconjunctions synthesized by \name were equivalent to the algebraic properties manually designed by the authors of \jlibsketch.
For the implementation of \texttt{HashMap} provided in \jlibsketch, for one specific grammar, \name synthesized an empty \lconjunction (i.e., the predicate \textit{true}) instead of the algebraic specification provided by the authors of \jlibsketch---i.e., $\kone = \ktwo \Rightarrow \mapget(\mapput(\varm, \kone, \varval), \ktwo) = \varval$.
Upon further inspection, we discovered that the implementation of \texttt{HashMap} used in \jlibsketch was incorrect and did not satisfy the specification the authors provided, due to an incorrect handling of hash collision!
After fixing the bug in the implementation of HashMap, we were able to synthesize the algebraic specification.
Because algebraic properties often involve multiple functions, we were not able to separately verify their correctness on all inputs using the Dafny verifier, but the fact that we obtained the same properties that the authors of \jlibsketch specified in their benchmarks is a strong signal that our properties are indeed sound.

\textbf{Finding:} \name can help automate modular synthesis by synthesizing precise algebraic specifications that existing synthesis tools can use to speed up modular synthesis.
Thanks to \name's provable guarantees, we were able to uncover a bug in one of \jlibsketch's module implementations.

\subsection{Application 3: Automating Sensitivity Analysis}
\label{se:sensitivity}

Automatically reasoning about quantitative properties such as differential privacy~\cite{DAntoniGAHP13} in programs requires one to analyze how changes to a program input affect the program output---e.g., differential privacy typically requires that bounded changes to a function's input cause bounded changes to its output.
A common and inexpensive approach to tackle this kind of problem is to use a compositional \emph{sensitivity analysis} (either in the form of an abstract interpretation or of a type system~\cite{DAntoniGAHP13}) in which one tracks how sensitive each operation in a program is to changes in its input.
For example, one can say that the function $f(x)=\texttt{abs}(2x)$, when given two inputs $x_1$ and $x_2$ that differ by $k$, produces two outputs that differ by at most $2k$.

While for the previous function, it was pretty easy to identify a precise sensitivity property, it is generally tricky to do so for functions involving data structures, such as lists, which are of interest in differential privacy when the list represents a database of individuals~\cite{dphamming}.
In this case study, we considered \numSensitivityBenchmarks list-manipulating functions ($\lappend$, $\lcons$, $\ldelete$, $\ldeleteall$, $\lreverse$, $\lsnoc$, $\lstutter$, $\ltail$, and $\lconsdelete$) and used \name[\sketch] to synthesize precise sensitivity properties describing how changes to the input lists affect the outputs.
The $\lconsdelete$ benchmark uses the query $\ldelete(\lcons(\x, \varl), \x)$ involving the composition of two functions.

For each function $f$, we used \name to synthesize a property of the form
$$guard(x_1,x_2) \wedge \textit{dist}(\lone, \ltwo) \leq d \Rightarrow \textit{dist}(f(\lone, \xone), f(\ltwo, \xtwo)) \leq \textit{exp}$$
where $guard$ can be the predicate true or an equality/inequality between $\xone$ and $\xtwo$ (the grammars vary across benchamrks), $\textit{dist}$ is the function computing the distance between two lists (we run experiments using both edit and Hamming distance), and the expression $\textit{exp}$ (the part to synthesize) can be any linear combinations of $\size(x)$, $\size(y)$, $d$, and constants in the range -1 to 2.
When considering all combinations of functions, guards, and distances, we obtained \numSensitivityBenchmarksTotal benchmarks.
All the DSLs designed for algebraic specifications synthesis were reused by modifying what function symbols could appear in the DSL.
The complete grammars are shown in Appendix~\ref{se:sensitivity-benchmark}.

\name terminated with a best \lconjunction for all the benchmarks (and grammars) in less than 250 seconds per benchmark. 
\name outperformed the enumerative baseline presented in Section~\ref{se:evalq1} for every problem (3.14$\times$ speedup for Hamming-distance sensitivity problems and 11.93$\times$ speedup for edit-distance sensitivity problems---geometric mean).

We observed that even for simple functions, sensitivity properties are fairly complicated and hard to reason about manually. 
For example, \name synthesizes the following sensitivity \lproperty for the function $\ldelete$ ($\edist$ denotes the edit distance):
$$\edist(\lone, \ltwo) \leq d \Rightarrow \edist(\ldelete(\lone, \xone), \ldelete(\ltwo, \xtwo)) \leq d + 2$$
However, if we add a condition that the element removed from the two lists is the same, \name can synthesize the following \lproperty that further bounds the edit distance on the output:
$$\xone = \xtwo \wedge \edist(\lone, \ltwo) \leq d \Rightarrow \edist(\ldelete(\lone, \xone), \ldelete(\ltwo, \xtwo)) \leq d + 1$$
When inspecting this property, we were initially confused because we had thought that the edit distance should not increase at all if identical elements are removed.
However, that is false as illustrated by the following tricky counterexample $\lone = [1; 2; 3]$, $\ltwo = [3; 2; 3]$ and $\xone = \xtwo = 3$.
Besides the \lproperty shown above with bound $d+1$, \name also synthesized (incomparable) best \lproperties with bounds $\size(\lone) - \size(\ltwo) + 2d$ and $\size(\ltwo) - \size(\lone) + 2d$ for the same query.
All combined, these \lproperties imply that the edit distance should not increase when $d = 0$.

Because of the complexity added by the programs that compute the edit and Hamming distances, by the use of unbounded data structures, and by the fact that sensitivity properties are hyperproperties, we were not able to separately verify the soundness of they synthesized \lconjunctions on all inputs using the Dafny verifier.
However, we believe that the synthesized properties are indeed sound given that they hold for lists up to length 7---i.e., the bound imposed by \sketch.

\textbf{Finding:} \name can synthesize precise sensitivity properties for functions involving lists; the synthesized function would be challenging for a human to handcraft.

\subsection{Application 4: Enabling New Abstract Domains}
\label{se:bitvector}

One of the most powerful relational abstract domains is the domain of \emph{convex polyhedra} \cite{DBLP:conf/popl/CousotH78,DBLP:journals/scp/BagnaraHZ08}.
While programs typically operate over \texttt{int}-valued program variables for which arithmetic is performed modulo a power of 2, such as $2^{16}$ or $2^{32}$,
existing implementations of polyhedra are based on conjunctions of linear inequalities with rational coefficients over rational-valued variables.
This disconnect prevents polyhedra from precisely modeling how values wrap around in \texttt{int}/bit-vector arithmetic when arithmetic operations overflow.


Heretofore, it has not been known how to create an analog of polyhedra that is appropriate for bit-vector arithmetic.
\citeauthor{DBLP:journals/pacmpl/YaoSHZ21} \cite{DBLP:journals/pacmpl/YaoSHZ21} recently defined two domains (\ref{It:BitVectorFormulas} and \ref{It:IntVectorFormulas} below), but have only devised algorithms to support \ref{It:IntVectorFormulas}.
\begin{enumerate}[label=Version \emph{\arabic*}, align=left]
  \item
    \label{It:BitVectorFormulas}
    (bit-vector-polyhedra domain):
    conjunctions of linear bit-vector inequalities
  \item
    \label{It:IntVectorFormulas}
    (integral-polyhedra domain):
    conjunctions of linear integer inequalities
\end{enumerate}

The case study described in this section shows that \name provides a way to enable precise polyhedra operations for \ref{It:BitVectorFormulas}.
The main reason why operations for bit-vector polyhedra have not been proposed previously is that it is challenging to work with relations over bit-vector-valued variables.
For example, let $x$ and $y$ be 4-bit bit-vectors.
Fig.\ \ref{Fi:BitVectorPolyhedraExamples}(a) depicts the satisfying assignments of the inequality $x + y + 4 \le 7$ interpreted over $4$-bit unsigned modular arithmetic.
As seen in the plot, the set of points that satisfy a single bit-vector inequality can be a non-contiguous region.

\begin{figure}[tb!]
\centering
\begin{tabular}{ccc}
\resizebox{0.28\textwidth}{!}{
  \begin{tabular}{r|*{16}{c}|}
    \hhline{~----------------}
    15&\cellcolor{dgreen}&\cellcolor{dgreen}&\cellcolor{dgreen}&\cellcolor{dgreen}&\cellcolor{dgreen}& & & & & & & & &\cellcolor{dgreen}&\cellcolor{dgreen}&\cellcolor{dgreen} \\
    14&\cellcolor{dgreen}&\cellcolor{dgreen}&\cellcolor{dgreen}&\cellcolor{dgreen}&\cellcolor{dgreen}&\cellcolor{dgreen}& & & & & & & & &\cellcolor{dgreen}&\cellcolor{dgreen} \\
    13&\cellcolor{dgreen}&\cellcolor{dgreen}&\cellcolor{dgreen}&\cellcolor{dgreen}&\cellcolor{dgreen}&\cellcolor{dgreen}&\cellcolor{dgreen}& & & & & & & & &\cellcolor{dgreen} \\
    12&\cellcolor{dgreen}&\cellcolor{dgreen}&\cellcolor{dgreen}&\cellcolor{dgreen}&\cellcolor{dgreen}&\cellcolor{dgreen}&\cellcolor{dgreen}&\cellcolor{dgreen}& & & & & & & &  \\
    11& &\cellcolor{dgreen}&\cellcolor{dgreen}&\cellcolor{dgreen}&\cellcolor{dgreen}&\cellcolor{dgreen}&\cellcolor{dgreen}&\cellcolor{dgreen}&\cellcolor{dgreen}& & & & & & &  \\
    10& & &\cellcolor{dgreen}&\cellcolor{dgreen}&\cellcolor{dgreen}&\cellcolor{dgreen}&\cellcolor{dgreen}&\cellcolor{dgreen}&\cellcolor{dgreen}&\cellcolor{dgreen}& & & & & &  \\
     9& & & &\cellcolor{dgreen}&\cellcolor{dgreen}&\cellcolor{dgreen}&\cellcolor{dgreen}&\cellcolor{dgreen}&\cellcolor{dgreen}&\cellcolor{dgreen}&\cellcolor{dgreen}& & & & &  \\
     8& & & & &\cellcolor{dgreen}&\cellcolor{dgreen}&\cellcolor{dgreen}&\cellcolor{dgreen}&\cellcolor{dgreen}&\cellcolor{dgreen}&\cellcolor{dgreen}&\cellcolor{dgreen}& & & &  \\
     7& & & & & &\cellcolor{dgreen}&\cellcolor{dgreen}&\cellcolor{dgreen}&\cellcolor{dgreen}&\cellcolor{dgreen}&\cellcolor{dgreen}&\cellcolor{dgreen}&\cellcolor{dgreen}& & &  \\
     6& & & & & & &\cellcolor{dgreen}&\cellcolor{dgreen}&\cellcolor{dgreen}&\cellcolor{dgreen}&\cellcolor{dgreen}&\cellcolor{dgreen}&\cellcolor{dgreen}&\cellcolor{dgreen}& &  \\
     5& & & & & & & &\cellcolor{dgreen}&\cellcolor{dgreen}&\cellcolor{dgreen}&\cellcolor{dgreen}&\cellcolor{dgreen}&\cellcolor{dgreen}&\cellcolor{dgreen}&\cellcolor{dgreen}&  \\
     4& & & & & & & & &\cellcolor{dgreen}&\cellcolor{dgreen}&\cellcolor{dgreen}&\cellcolor{dgreen}&\cellcolor{dgreen}&\cellcolor{dgreen}&\cellcolor{dgreen}&\cellcolor{dgreen} \\
     3&\cellcolor{dgreen}& & & & & & & & &\cellcolor{dgreen}&\cellcolor{dgreen}&\cellcolor{dgreen}&\cellcolor{dgreen}&\cellcolor{dgreen}&\cellcolor{dgreen}&\cellcolor{dgreen} \\
     2&\cellcolor{dgreen}&\cellcolor{dgreen}& & & & & & & & &\cellcolor{dgreen}&\cellcolor{dgreen}&\cellcolor{dgreen}&\cellcolor{dgreen}&\cellcolor{dgreen}&\cellcolor{dgreen} \\
     1&\cellcolor{dgreen}&\cellcolor{dgreen}&\cellcolor{dgreen}& & & & & & & & &\cellcolor{dgreen}&\cellcolor{dgreen}&\cellcolor{dgreen}&\cellcolor{dgreen}&\cellcolor{dgreen} \\
     0&\cellcolor{dgreen}&\cellcolor{dgreen}&\cellcolor{dgreen}&\cellcolor{dgreen}& & & & & & & & &\cellcolor{dgreen}&\cellcolor{dgreen}&\cellcolor{dgreen}&\cellcolor{dgreen} \\
    \hhline{~----------------}
    \multicolumn{1}{c}{ }&0&1&2&3&4&5&6&7&8&9&\hspace{-4pt}10\hspace{-4pt}&\hspace{-4pt}11\hspace{-4pt}&\hspace{-4pt}12\hspace{-4pt}&\hspace{-4pt}13\hspace{-4pt}&\hspace{-4pt}14\hspace{-4pt}&\multicolumn{1}{c}{\hspace{-4pt}15\hspace{-4pt}}
  \end{tabular}
 }
&
\resizebox{0.28\textwidth}{!}{
  \begin{tabular}{r|*{16}{c}|}
    \hhline{~----------------}
    15& & & & & & & & & & & & & & & &  \\
    14& & & & & & & & & & & & & & & &  \\
    13& & & & & & & & & & & & & & & &  \\
    12& & & & & & & & & & & & & & & &  \\
    11& & & & & & & & & & & & & & & &  \\
    10& & & & & & & & & & & & & & & &  \\
     9& & & &\cellcolor{dgreen}& &\cellcolor{dgreen}& & & & & &\cellcolor{dgreen}& &\cellcolor{dgreen}& &  \\
     8& & & & & & & & & & & & & & & &  \\
     7& & & & & & & & & & & & & & & &  \\
     6& & & & & & & & & & & & & & & &  \\
     5& & & & & & & & & & & & & & & &  \\
     4& & &\cellcolor{dgreen}& & & &\cellcolor{dgreen}& & & &\cellcolor{dgreen}& & & &\cellcolor{dgreen}&  \\
     3& & & & & & & & & & & & & & & &  \\
     2& & & & & & & & & & & & & & & &  \\
     1& &\cellcolor{dgreen}& & & & & &\cellcolor{dgreen}& &\cellcolor{dgreen}& & & & & &\cellcolor{dgreen} \\
     0&\cellcolor{dgreen}& & & &\cellcolor{dgreen}& & & &\cellcolor{dgreen}& & & &\cellcolor{dgreen}& & &  \\
    \hhline{~----------------}
    \multicolumn{1}{c}{ }&0&1&2&3&4&5&6&7&8&9&\hspace{-4pt}10\hspace{-4pt}&\hspace{-4pt}11\hspace{-4pt}&\hspace{-4pt}12\hspace{-4pt}&\hspace{-4pt}13\hspace{-4pt}&\hspace{-4pt}14\hspace{-4pt}&\multicolumn{1}{c}{\hspace{-4pt}15\hspace{-4pt}}
  \end{tabular}
}
&
\resizebox{0.28\textwidth}{!}{
  \begin{tabular}{r|*{16}{c}|}
    \hhline{~----------------}
    15& & & & & & & & & & & & & & & &  \\
    14& & & & & & & & & & & & & & & &  \\
    13& & & & & & & & & & & & & & & &  \\
    12& & & & & & & & & & & & & & & &  \\
    11& & & & & & & & & & & & & & & &  \\
    10& & & & & & & & & & & & & & & &  \\
     9& & & &\cellcolor{dgreen}& &\cellcolor{dgreen}& & & & & &\cellcolor{dgreen}& &\cellcolor{dgreen}& &  \\
     8& & & &\cellcolor{dgreen}& &\cellcolor{dgreen}& & & & & &\cellcolor{dgreen}& &\cellcolor{dgreen}& &  \\
     7& & & &\cellcolor{dgreen}& &\cellcolor{dgreen}& & & & & &\cellcolor{dgreen}& &\cellcolor{dgreen}& &  \\
     6& & & &\cellcolor{dgreen}& &\cellcolor{dgreen}& & & & & &\cellcolor{dgreen}& &\cellcolor{dgreen}& &  \\
     5& & & &\cellcolor{dgreen}& &\cellcolor{dgreen}& & & & & &\cellcolor{dgreen}& &\cellcolor{dgreen}& &  \\
     4& & &\cellcolor{dgreen}&\cellcolor{dgreen}& &\cellcolor{dgreen}&\cellcolor{dgreen}& & & &\cellcolor{dgreen}&\cellcolor{dgreen}& &\cellcolor{dgreen}&\cellcolor{dgreen}&  \\
     3&\cellcolor{dred}& &\cellcolor{dgreen}&\cellcolor{dgreen}&\cellcolor{dred}&\cellcolor{dgreen}&\cellcolor{dgreen}& &\cellcolor{dred}& &\cellcolor{dgreen}&\cellcolor{dgreen}&\cellcolor{dred}&\cellcolor{dgreen}&\cellcolor{dgreen}&  \\
     2&\cellcolor{dred}& &\cellcolor{dgreen}&\cellcolor{dgreen}&\cellcolor{dred}&\cellcolor{dgreen}&\cellcolor{dgreen}& &\cellcolor{dred}& &\cellcolor{dgreen}&\cellcolor{dgreen}&\cellcolor{dred}&\cellcolor{dgreen}&\cellcolor{dgreen}&  \\
     1&\cellcolor{dred}&\cellcolor{dgreen}&\cellcolor{dgreen}&\cellcolor{dgreen}&\cellcolor{dred}&\cellcolor{dgreen}&\cellcolor{dgreen}&\cellcolor{dgreen}&\cellcolor{dred}&\cellcolor{dgreen}&\cellcolor{dgreen}&\cellcolor{dgreen}&\cellcolor{dred}&\cellcolor{dgreen}&\cellcolor{dgreen}&\cellcolor{dgreen} \\
     0&\cellcolor{dgreen}&\cellcolor{dgreen}&\cellcolor{dgreen}&\cellcolor{dgreen}&\cellcolor{dgreen}&\cellcolor{dgreen}&\cellcolor{dgreen}&\cellcolor{dgreen}&\cellcolor{dgreen}&\cellcolor{dgreen}&\cellcolor{dgreen}&\cellcolor{dgreen}&\cellcolor{dgreen}&\cellcolor{dgreen}&\cellcolor{dgreen}&\cellcolor{dgreen} \\ 
    \hhline{~----------------}
    \multicolumn{1}{c}{ }&0&1&2&3&4&5&6&7&8&9&\hspace{-4pt}10\hspace{-4pt}&\hspace{-4pt}11\hspace{-4pt}&\hspace{-4pt}12\hspace{-4pt}&\hspace{-4pt}13\hspace{-4pt}&\hspace{-4pt}14\hspace{-4pt}&\multicolumn{1}{c}{\hspace{-4pt}15\hspace{-4pt}}
  \end{tabular}
}
\\
 {\small $x + y + 4 \le 7$} &
 {\small $\begin{array}{r@{\hspace{1.0ex}}c@{\hspace{1.0ex}}l}
  10x + 11y + 0 & \le & 2x + 3y + 7 \\
  12x + 5y + 7  & \le & 2x + 8y + 15 \\
  0x + 5y + 3   & \le & 8x + 2y + 4 \\
  12x + 7y + 1  & \le & 12x + 3y + 8 
\end{array}$}
&
{\small $\begin{array}{r@{\hspace{1.0ex}}c@{\hspace{1.0ex}}l}
   2x + 15y + 3 & \le & 0x + 15y + 15 \\
  10x + 0y + 4  & \le & 0x + 15y + 15 \\
   8x + 15y + 4 & \le & 8x + 0y + 4
\end{array}$}
\\
{\small (a) {\color{dgreen}$x + y + 4 \le 7$}} & {\small (b) {\color{dgreen}$y = x * x$}} & {\small (c) {\color{dgreen}$y \le x * x$}}
 \end{tabular}
\vspace{-1mm}
\caption{\label{Fi:BitVectorPolyhedraExamples}
Each subfigure illustrates a bit-vector formula (in {\color{dgreen}green}) and the most precise bit-vector polyhedron computed by \name (i.e., the inequalities above the formulas).
Each colored cell in the plots represents a solution in $4$-bit unsigned modular arithmetic of the conjunction of the inequalities found by \name: {\color{dgreen}green} cells represent solutions to the original formula, whereas {\color{dred}red} cells are points that are solution to the inequalities, but do not satisfy the original formula.
In (a) and (b), the conjunctive formula represents the original formula exactly (there are only {\color{dgreen}green} cells).
In (b), the twelve occurrences of {\color{dred}red} cells are points that do not satisfy the original formula, but are needed for a conjunctive formula to over-approximate the original formula.}
\end{figure}


In general, a bit-vector inequality over unsigned bit-vector variables $X = \{ x_1, \ldots, x_n \}$ has the form
$\sum_{i=1}^n p_i x_i + q \leq \sum_{j=1}^n r_j x_j + s$,
where $\{ p_i \} \cup \{ q \} \cup \{ r_j \} \cup \{ s \}$ are unsigned bit-vector constants, and $\leq$ is unsigned comparison.
Conjunctions of inequalities are a fragment of quantifier-free bit-vector logic ($\mathcal{L}_\BV$).
Let $\interp{\varphi}_\BV$ denote the set of assignments to $X$ that satisfy formula $\varphi \in \mathcal{L}_\BV$.

We instantiated \name to take as input a formula $\varphi \in \mathcal{L}_\BV$ and return a conjunction $\psi$ of bit-vector inequalities---i.e., the \emph{symbolic abstraction} \cite[\S5]{DBLP:conf/vmcai/RepsT16} of $\varphi$ in the conjunctive fragment $\mathcal{L}_\BVCONJ$.
Because \name computes best \lproperties, in this setting it computes the most-precise symbolic abstraction---i.e., the formula $\alphaHat$ computed by \name is one representation of the \emph{most-precise abstraction} of $\varphi$ that is expressible as a conjunction of bit-vector inequalities.

As known from the literature (\cite{DBLP:conf/vmcai/RepsSY04,DBLP:conf/cav/ThakurR12,DBLP:conf/sas/ThakurER12} and \cite[\S5]{DBLP:conf/vmcai/RepsT16}), operations needed for abstract interpretation, such as (i) the creation and/or application of abstract transformers, and (ii) taking the join of two abstract-domain elements, can be performed via an algorithm for symbolic abstraction.
For instance, if $\psi_a$ and $\psi_b$ are two formulas in $\mathcal{L}_\BVCONJ$, we can perform the join $\psi_a \sqcup \psi_b$ by $\alphaHat(\psi_a \lor \psi_b)$.
(Note that $\psi_a \lor \psi_b$ is not a formula of $\mathcal{L}_\BV$.)
For this reason, we say that \name \textit{enables} this new abstract domain.

In our experiments, we limited inequalities to two variables $x$ and $y$ on each side, and used $4$-bit unsigned arithmetic.
Our benchmarks were taken from an earlier study conducted by one of the authors, which on each example used brute force to consider all 16,762,320 non-tautologies of the 16,777,216 $4$-bit inequalities of the form $ax + by + c \le dx + ey + f$.
That study found that some example formulas had hundreds of thousands of inequalities as consequences.
We selected $\numBVBenchmarks$ interesting-looking formulas to use as benchmarks, including linear/nonlinear operations, equalities and inequalities, Boolean combinations, and one pair of formulas on which to perform the join operation.
(See \S\ref{se:BitVectorInequalities:q2} and Appendix~\ref{App:BitVectorPolyhedraBenchmarks}.)

\subsubsection{Quantitative Analysis}
\label{se:BitVectorInequalities:q1}

\name[\sketch] computed a sound best \lconjunction for \numBVSolved/\numBVBenchmarks formulas, and guaranteed that \numBVNotimeout/\numBVBenchmarks were best \lconjunctions.
For the query $y = x / 2$, \name[\sketch] timed out on a call to \cp, but the obtained \lconjunction was indeed a best \lconjunction because it defined the exact semantics of the query.
\name computed an \lconjunction for all the \numBVBenchmarks benchmarks in less than 400 seconds per benchmark, which is 2-5 orders of magnitude faster than the enumerative baseline presentd in Section~\ref{se:evalq1}.
%
Each output \lconjunction contained between 1 and 6 \lproperties.
For this domain, \sketch---and hence \name[\sketch]---is sound and precise because we are working with bit-vector arithmetic of fixed bit-width.

\name[\sketch] could not terminate for most of our benchmarks when considering 8-bit arithmetic.
Because our examples contain several multiplications, this limitation is not surprising because multiplication is one of the known weaknesses of \sketch and its underlying SAT solver.
Recently, there have been promising advances in SAT solving for multiplication circuits \cite{beame} that, if integrated with \sketch, we believe would help \name scale to larger bit-vectors.

\subsubsection{Qualitative Analysis}
\label{se:BitVectorInequalities:q2}
Examples of results obtained by \name are shown in Fig.\ \ref{Fi:BitVectorPolyhedraExamples}.
The result that the most-precise abstraction of these formulas could be expressed using only a small number of inequalities was surprising to the authors.
In Fig.\ \ref{Fi:BitVectorPolyhedraExamples}(c), the formula we are abstracting is $\varphi =_{\textit{df}} y \le x * x$).
Fig.\ \ref{Fi:BitVectorPolyhedraExamples}(c) shows that, in addition to the {\color{dgreen}green} points that satisfy the non-linear inequality $y \le x * x$, the bit-vector-polyhedral abstraction found for $\varphi$ includes twelve ``extra'' points, indicated by the {\color{dred}red} cells.
Because \name finds a
\emph{most-precise} sound bit-vector-polyhedral abstraction of $\varphi$, every sound bit-vector-polyhedral abstraction of $\varphi$ must also include those twelve points.
In an earlier study conducted by one of the authors, they used brute force to consider all 16,762,320 non-tautologies of the 16,777,216 $4$-bit inequalities of the form $ax + by + c \le dx + ey + f$.
That study found that the following numbers of inequalities over-approximated the original formula:
564 for Fig.~\ref{Fi:BitVectorPolyhedraExamples}(a), 109,008 for Fig.~\ref{Fi:BitVectorPolyhedraExamples}(b), and 456 for Fig.~\ref{Fi:BitVectorPolyhedraExamples}(c).
Thus, \name showed that a most-precise abstraction could be $2$-$4$ orders of magnitude smaller than one obtained by brute force.

\textbf{Finding:} \name can synthesize the most-precise sound bit-vector-polyhedral abstraction of a given bit-vector formula $\varphi$ over $4$-bit arithmetic.
Furthermore, \name surprised the authors by showing that the most-precise sound bit-vector-polyhedral abstraction for the presented examples could be precisely expressed with only a handful of bit-vector inequalities.

\subsection{Further Analysis of \name's Performance}
\label{se:furtheranalysis}

In the previous sections, we have shown that \name can synthesize best \lconjunctions for a variety of case studies.
In this section, we analyze what parameters affect \name's running time.


\subsubsection*{Q1: How do Different Primitives of the Algorithm Contribute to the Running Time?}
\label{se:evalq2}

On average, \name spends \ratioSynth\% of the time performing \synth, \ratioSound\% performing \cs, and \ratioPrecision\% performing \cp (details in Table~\ref{tab:benchmarks} in App.~\ref{App:data}).

It usually takes longer for \cs and \cp to show the nonexistence of an example---i.e., to return $\bot$---than to find an example.
\cs is one of the simplest queries, but occupies a large portion of the running time because it is expected to return $\bot$ many times, whereas \cp needs to return $\bot$ only once for each call to \synthproperty.
%
The last call to \cp (i.e., the one that returns $\bot$) often takes a significant amount of time to complete (on average \ratioLastCall\% of the time spent on each run of \synthproperty).

\textbf{Finding:} \name spends most of the time checking soundness and precision.

\subsubsection*{Q2: What Parts of the Input Affect the Running Time?}
\label{se:evalq3}
The number of \lproperties in the language $\lang$ has a large impact on the time taken by \synth (Fig.~\ref{fig:time-vs-size}) and \cp. 
%
%
%

The complexity of the code defining the semantics of various operators has a large impact on how long \cs takes.
\bstinsert\xspace, \bstdelete\xspace of BST and edit distance have relatively complicated implementations, and \cs takes longer for these problems.

The size and complexity of the example space also affect the running time. 
The biggest factor contributing to the size of the example space is the number of input and output variables used. 
The number of possible examples, i.e., variable assignments, increases exponentially with the number of variables.
The size of the example space affects not only the number of total queries but also the time that each query takes.
Specifically, the number of positive and negative examples affects the time taken by \synth or \cp (notice that \cs does not take the examples as input), as shown in Figure~\ref{fig:time-vs-example}.
%


\textbf{Finding:}
The running time of \name is affected by the sizes of
(i) the property search space, (ii) the programs that describe the semantics of the operators, and (iii) the example search space.

\begin{figure}[t]
\begin{subfigure}[valign=t]{0.325\linewidth}
\captionsetup{width=.9\linewidth}
\begin{tikzpicture}[every mark/.append style={mark size=1.5pt}]
	\begin{axis}[%
	legend style={nodes={scale=0.5, transform shape},at={(0.62, 0.18)},anchor=west},
	ylabel absolute, ylabel style={yshift=-4mm},
	xlabel absolute, xlabel style={yshift=1mm},
	xmode=log,
	ymode=log,
	xtick ={1e4, 1e8, 1e12, 1e16},
	ytick ={0.01, 1, 1e1, 1e2, 1e3},
	width=0.95\linewidth,
	height=0.95\linewidth,
	scatter/classes={%
		a={mark=square,draw=black,
			style={solid, fill=black},
			mark size=1.5pt},
		b={mark=star,draw=dgreen,
		    style={solid, fill=black},
		    mark size=2pt},
		c={mark=diamond,draw=violet,
			style={solid, fill=black},
			mark size=2pt}},
	xmin = 1,
	xmax = 1e17,
	ymin = 1,
	ymax = 1500,
	xlabel={\footnotesize{Grammar size}},
	ylabel={\footnotesize{Time (s)}}],
	\addplot[scatter,only marks,mark size=1pt,%
	scatter src=explicit symbolic]
	table[meta=label] {
		x y label
28991029248	49.975	a
10077696	18.26	a
14843406974976	284.59	a
10077696	18.67	a
2515456	33.69166667	b
125000	15.21	b
262144	27.395	b
2515456	29.08571429	b
496793088	54.895	b
2985984	25.825	b
2985984	33.38714286	b
474552000	58.67666667	b
4096	21.06	b
8741816	33.8625	b
1906624	30.335	b
238328	17.075	b
1815848	23.935	b
64000	24.9575	b
1728000	21.115	b
2985984	21.03714286	b
1728000	20.38	b
8489664	32.17666667	b
592704	308.3	a
64	17.85	a
592704	292.48	a
125000	13.63	a
1728000	27.98	a
262144	31.74625	a
1728000	25.79	a
216000	44.86125	b
216000	44.28	b
262144	52.362	b
567663552	376.598	b
262144	36.085	b
140608	14.73	b
3375000	6.025	c
3375000	13.168	c
56623104	64.27	c
157464	0.775	c
884736	24.79833333	c
	};
	\legend{Other, \synquid, \sygus}
	\end{axis}
\end{tikzpicture}
\caption{Time to synthesize a best \lproperty vs. grammar size}
\label{fig:time-vs-size}
\end{subfigure}
\begin{subfigure}[valign=t]{0.325\linewidth}
\captionsetup{width=.9\linewidth}
\vspace{2mm}
\begin{tikzpicture}[every mark/.append style={mark size=1.5pt}]
	\begin{axis}[%
	legend style={nodes={scale=0.5, transform shape},at={(0.05, 0.78)},anchor=west},
	ylabel absolute, ylabel style={yshift=-4mm},
	xlabel absolute, xlabel style={yshift=1mm},
	width=0.95\linewidth,
	height=0.95\linewidth,
	scatter/classes={%
		a={mark=*,draw=red,
			mark options={solid},
			style={solid, fill=white},
			mark size=2pt},
		c={mark=triangle,draw=orange,
			mark options={solid},
			style={solid, fill=white},
			mark size=2pt}},
	xlabel={\footnotesize{Examples}},
	ylabel={\footnotesize{\cp (s)}}],
	\addplot[scatter,only marks,mark size=1pt,%
	scatter src=explicit symbolic]
	table[meta=label] {
		x y label
0	1.622852802	a
1	1.693539143	a
4	1.848130941	a
5	1.933784008	a
6	2.002374887	a
7	2.2131598	a
9	2.253421068	a
11	2.302711248	a
12	2.464223862	a
13	2.417783976	a
14	2.656901836	a
14	2.901232004	a
6	2.009962797	a
7	2.229377985	a
8	2.090982914	a
9	2.218266964	a
10	2.319853067	a
11	2.572346687	a
12	2.656720877	a
13	2.751822233	a
14	2.552029133	a
14	2.459115028	a
13	2.808489084	a
14	2.81709981	a
15	2.913047075	a
17	3.249284029	a
18	3.067202806	a
18	3.364037037	a
20	3.716299057	a
22	4.118217945	a
23	4.932993889	a
23	4.23415184	a
24	4.234435081	a
27	4.625747919	a
28	4.624039173	a
31	5.562054873	a
32	5.608935833	a
35	6.820699692	a
35	7.874405146	a
44	7.733325005	a
45	8.969686985	a
46	9.212382078	a
47	9.288352013	a
47	10.24044609	a
44	8.027467012	a
46	8.359483004	a
48	8.914690971	a
49	10.92411017	a
50	10.38342619	a
50	10.14040089	a
47	8.324429035	a
49	8.930686235	a
49	11.17447901	a
48	8.882997036	a
49	9.142050982	a
51	9.806297064	a
53	10.25165081	a
56	10.962605	a
57	13.68678713	a
59	14.13837981	a
62	14.37795305	a
63	15.76964092	a
63	15.98415995	a
65	15.37469816	a
66	14.29191184	a
67	15.63994074	a
70	16.67337322	a
74	17.83886623	a
75	17.60687733	a
76	17.87613201	a
78	19.69236469	a
80	19.13138628	a
82	22.82573271	a
86	22.24416375	a
89	23.22312808	a
90	22.73412085	a
92	26.05378604	a
93	25.78535604	a
97	27.92346811	a
99	27.8820951	a
100	29.56081676	a
101	28.95943189	a
102	29.08925414	a
103	30.78888917	a
104	31.77130222	a
105	34.30780315	a
106	32.99144292	a
107	33.81336808	a
108	32.294065	a
109	34.33150005	a
110	39.03782916	a
111	36.06873512	a
113	40.57947707	a
114	36.60527086	a
115	44.15833497	a
116	36.90641165	a
117	38.19565129	a
118	46.47421694	a
120	39.39947534	a
121	49.42901611	a
122	42.51989412	a
123	74.17347789	a
124	45.86305571	a
125	59.12397909	a
126	47.2665391	a
127	54.56939912	a
128	72.29722524	a
129	55.421381	a
130	77.12007189	a
131	300.0064471	a
0	1.751425028	c
4	2.43398881	c
7	2.427947044	c
8	2.047446012	c
11	3.037776709	c
12	4.65690279	c
13	2.810189009	c
14	8.981709957	c
15	6.923151731	c
16	9.628410816	c
17	5.826683044	c
19	2.549660206	c
20	4.143041134	c
21	4.216393948	c
22	5.276009798	c
23	9.462685347	c
24	7.284902811	c
25	3.788697958	c
26	28.32117915	c
26	9.566752434	c
9	2.745138884	c
10	3.562884092	c
12	3.122756958	c
13	3.499140978	c
14	5.978044748	c
15	2.783586979	c
16	3.487811089	c
17	5.652858019	c
18	9.708377123	c
19	13.73902893	c
19	2.815600157	c
20	4.244663	c
21	3.065221071	c
22	4.383856058	c
23	8.030562162	c
10	2.526991129	c
11	3.826797009	c
12	5.537379265	c
14	10.96054602	c
15	10.83151174	c
16	27.92000699	c
17	72.9649632	c
17	3.300433159	c
18	3.323971987	c
19	3.779036999	c
20	2.492965221	c
21	12.39938188	c
15	45.93947005	c
	};
	\legend{\tbranch, \ianonlinsum}
	\end{axis}
\end{tikzpicture}
\vspace{-1mm}
\caption{\cp time vs. number of examples}
\label{fig:time-vs-example}
\end{subfigure}
\begin{subfigure}[valign=t]{0.325\linewidth}
\captionsetup{width=.9\linewidth}
\vspace{-3mm}
\begin{tikzpicture}[]
  \tikzset{mark options={mark size=1, opacity=0.3}}
  \begin{axis}[
	xmode=log,
	ymode=log,
    xlabel style={yshift=1mm},
    ylabel style={yshift=-2mm},
    height=0.95\linewidth,
    width=0.95\linewidth,
    xlabel= {\footnotesize{Without line~\ref{Li:MergeMayIntoMust} (s)}}, 
    ylabel= {\footnotesize{With line~\ref{Li:MergeMayIntoMust} (s)}},
    xmin = 0,
	xmax = 15000,
	ymin = 0,
	ymax = 15000,
	xtick ={1, 1e1, 1e2, 1e3, 1e4},
	ytick ={1, 1e1, 1e2, 1e3, 1e4},
  ]
  \addplot+[only marks]
table {
152.6	199.9
50.8	36.52
584.8	853.77
67.95	56.01
260.35	202.15
15.79	15.21
67.54	54.79
223.33	203.6
223.43	219.58
258.54	206.6
278.68	233.71
190.56	176.03
41.8	42.12
162.87	135.45
141.59	121.34
36	34.15
55.01	47.87
216.85	199.66
46.85	42.23
159.06	147.26
63.27	61.14
100.01	96.53
409.7	308.3
18.34	17.85
322.7	292.48
13.66	13.63
28.29	27.98
265.62	253.97
30.66	25.79
378.08	358.89
349.03	354.24
274.3	261.81
4134.47	3765.98
66.2	72.17
14.63	14.73
12.81	12.05
4.82	4.8
68.71	65.84
172.7	257.08
1.47	1.55
157.64	148.79
15.76	16.06
3.13	3.12
5.86	5.88
5.7	5.6
55.09	55.26
700.29	704.32
4.16	4.04
11.46	10.36
4.12	4.04
9.77	8.86
3.25	3.17
27.18	22.81
59.47	56.58
28.37	27.84
22.36	20
24.06	21.48
22.73	24.24
17.22	16.39
38.52	32.39
13.91	13.79
14.92	14.52
234.58	202.97
8.63	8.69
38.67	37.5
187.34	164.86
267.9	227.45
77.21	77.53
83.92	78.38
28.29	28.47
30.65	29.01
177.29	176.19
172.38	232.14
353.77	388.57
119.65	220.89
143.97	155.43
218.68	195.89
165.69	238.84
106.2	88.53
249.47	303.47
    };
\addplot[mark=none, red] coordinates {(1,1) (12000,12000)};
\end{axis}
\end{tikzpicture}
\vspace{-1mm}
\caption{Time with/without line~\ref{Li:MergeMayIntoMust}.}
\label{fig:freeze_comparison}
\end{subfigure}

\caption{Evaluation of the running time of \name for different input sizes and optimizations.}
\end{figure} 

\subsubsection*{Q3: How effective is line \ref{Li:MergeMayIntoMust} in Algorithm~\ref{alg:SynthesizeProperty}?}
\label{se:evalq4}

We compared the running times of \name with and without line~\ref{Li:MergeMayIntoMust} (Figure \ref{fig:freeze_comparison}).
%
When line~\ref{Li:MergeMayIntoMust} is present, \name is \ratioFreeze\% faster (geometric mean) than when line~\ref{Li:MergeMayIntoMust} is absent, but both versions can solve the same problems.
The optimization is only effective when the language has many incomparable properties that do not imply each other and cause \synth to often return $\bot$, thus triggering Line~\ref{Li:RevertLastSound} in Algorithm~\ref{alg:SynthesizeProperty}---e.g., in all \sygus benchmarks the language $\lang$ is such that the optimization is not used.

\textbf{Finding:} Freezing negative examples is slightly effective.

\section{Related Work}
\label{se:related-work}


\mypar{Abstract-interpretation techniques}
Many static program-analysis techniques are pitched as tools for checking safety properties, but behind the scenes they construct an artifact that abstracts the behavior of a program (in the sense of abstract interpretation \cite{DBLP:conf/popl/CousotC77}).
One such kind of artifact is a procedure summary \cite{DBLP:conf/popl/CousotH78,Chapter:CC78,Chapter:SP81,DBLP:conf/cav/GopanR07}, which abstracts a procedure's transition relation with an abstract value from an abstract domain, the elements of which denote transition relations.
%

%

Our problem is an instance of the \emph{strongest-consequence problem} \cite{DBLP:conf/vmcai/RepsT16}.
Existing techniques for solving this problem rely on properties of the language $\lang$ that are typical of abstract interpretation.
Some techniques work from ``below,'' identifying a chain of successively weaker implicants, until one is a consequence of $\varphi$ \cite{DBLP:conf/vmcai/RepsSY04}.
Other techniques work from ``above,'' identifying a chain of successively stronger implicates, until no further strengthening is possible \cite{DBLP:conf/cav/ThakurR12,DBLP:conf/sas/ThakurER12}.
\citet{DBLP:conf/vmcai/OzeriPRS17} explored a different approach, which works from above by repeatedly applying a semantic-reduction operation \cite{DBLP:conf/popl/CousotC77}. (A semantic reduction operation finds a less-complicated description of a given set of states if one exists.)
Our work differs from methods that use abstract interpretation in several aspects.
First, our algorithm is the first to use both positive and negative examples to achieve precision.
Second, while our work supports a variety of DSLs specified via a grammar, existing methods require that certain operations can be performed on concrete states and elements of the language $\lang$ (e.g., joins~\cite{DBLP:conf/cav/ThakurR12}),
thus limiting the language that can serve as the DSL.

\mypar{Type inference}
Liquid type inference~\cite{RondonKJ08,Hashimoto2015RefinementTI,vazou2014refinement} can infer a weakest precondition from a given postcondition or a strongest postcondition from a given precondition.
To make the problem tractable, properties must be specified in a user-given restricted set of predicates that are closed under Boolean operations.
Although our work shares some similarities---e.g., looking for properties over a restricted DSL---we tackled a fundamentally different problem because no pre- or post-condition is given as an input, and our algorithm instead looks for best \lproperties.
Furthermore, our work is not restricted to functional languages.

\mypar{Invariant inference}
Several data-driven, CEGIS-style algorithms can synthesize program invariants.
These techniques look for any invariant that is satisfactory for a client verification problem, whereas we do not assume there is a client for whom the properties are synthesized.
Without a client, ``true'' is a sound (and also weakest) but  useless specification.
The absence of a client requires synthesized specifications to be precise, therefore requiring our new \cp primitive.

A closely related system is Elrond \cite{DBLP:journals/pacmpl/ZhouDDJ21}, which synthesizes weakest library specifications that make verification possible in a client program.
Elrond allows one to specify a set of target predicates of interest, and finds quantified Boolean formulas with equalities over the variables and the predicates.
Because soundness (with respect to the library function) is only checked on a set of inputs, Elrond tries to synthesize weakest specifications using an iterative weakening approach.
Their algorithm takes advantage of the structure of supported formulas (they can contain disjunctions), but has some limitations (they can only contain equalities).

The key differences between Elrond and our work are:
\rone
    Our work supports a user-supplied DSL, which enables more generality, but prevents the use of techniques that rely on access to arbitrary Boolean operations.
\rtwo
    Our work uses a parametric DSL that can contain complex user-given functions, whereas Elrond only allows parametric atoms (i.e., user-defined Boolean function), equality over variables, and Boolean combinations of them.
    Our tool \name can synthesize the \lproperty ``$2 \varo = 0 \vee - \x + 2\varo - \x^2 = 0$,'' whereas Elrond does not consider arithmetic predicates.
\rthree
    The specifications generated by \name are most precise with respect to the DSL $\lang$, allowing for their reuse in multiple problem instances that use the same function. Such reuse is possible because \name ensures soundness of the synthesized properties. In contrast, Elrond operates in a closed-box setting and uses a random sampler for soundness, intentionally weakening the synthesized specifications to enhance the likelihood of soundness.  
\rfour
Our work can synthesize arithmetic properties efficiently (e.g., the ones considered in Sections \ref{se:sensitivity} and \ref{se:bitvector}) as well as complex algebraic properties (e.g., the ones in Section \ref{se:algspec}). In theory, Elrond can describe all the necessary components to express algebraic properties such as the property $(\vars, \x) = \stpop(\stpush(\vars, \x))$, but one would need to provide a function that combines $\stpop$ and $\stpush$. This approach is feasible if one is interested in only a few possible ways of combining functions, but becomes infeasible once more combinations are possible (e.g., for an arithmetic formula).
In short, the two approaches have different goals.

The presence of a user-supplied DSL and absence of a client distinguish our work from other prior work, e.g., abductive inference \cite{dillig2012automated}, ICE-learning \cite{DBLP:conf/cav/0001LMN14}, LoopInvGen \cite{DBLP:conf/pldi/PadhiSM16}, Hanoi~\cite{hanoi}, and Data-Driven CHC Solving \cite{DBLP:conf/pldi/ZhuMJ18}.
%



\mypar{Dynamic techniques}
Daikon \cite{DBLP:journals/tse/ErnstCGN01,DBLP:journals/scp/ErnstPGMPTX07} is a system for identifying likely invariants by inspecting program traces.
Invariants generally involve at most two program quantities and are checked at procedure entry and exit points (i.e., invariants form precondition/postcondition pairs).
In Daikon, the default is to check 75 different forms of invariants, instantiated for the program variables of interest.
The language of invariants can be extended by the user.

\name differs from Daikon (and follow-up work, e.g., ~\cite{DBLP:journals/tse/BeckmanNRSTT10}) in two ways: \rone The language $\lang$ is not limited to a set of predicates, and \name scales to languages containing millions of properties; \rtwo the properties that \name synthesizes are sound and provably \textit{best} \lproperties.
Furthermore, while Daikon's dynamic approach can scale to large programs, we could not find a way to encode our case studies as instances that Daikon could receive as input.
%

A similar tool to Daikon is QuickSpec \cite{smallbone2017quick}, which generates equational properties of Haskell programs from random tests.
\name differs from QuickSpec in two ways: \rone The language $\lang$ is not limited to equational properties; \rtwo the properties that \name synthesizes are sound and provably \textit{best} \lproperties.

\citet{DBLP:journals/pacmpl/AstorgaSDWMX21} synthesize contracts that are sound with respect to positive examples generated by a test generator.
We see two main differences between that work and ours:
\rone They do not use negative examples, whereas we do. 
Negative examples are the key to synthesizing best \lproperties.
\rtwo Their work does not allow a parametrized DSL and their notion of ``tight'' is with respect to a syntactic restriction on the logic in which the contract is to be specified.


\mypar{Synthesis of best $\lang$-transformers}
The paper that inspired our work synthesizes most-precise abstract transformers in a user-given DSL \cite{DBLP:journals/pacmpl/KalitaMDRR22}.
We realized that their basic insight---use both positive and negative examples; treat positive examples as hard constraints and negative examples as ``maybe'' constraints---had broader applicability than just creating abstract transformers.


%
Our work differs in two key ways.
First, because our goal is to obtain a formula rather than a piece of code (their setting), we could take advantage of the structure of formulas---in particular, conjunctions---to decompose the problem into \rone an ``inner search'' to find a best \lproperty that is an individual conjunct (Alg.~\ref{alg:SynthesizeProperty}), and \rtwo an ``outer search'' to accumulate best \lproperties to form a best \lconjunction (Alg.~\ref{alg:SynthesizeAllProperties}).
Second, Alg.~\ref{alg:SynthesizeProperty} exploits monotonicity---i.e., once a sound \lproperty is found, there must exist a best \lproperty that implies it (Lemma~\ref{lem:monotonicity}).
This observation
allows us to use a simplified set of primitives:
our algorithm uses a \synth primitive, whereas theirs requires a \maxsynth primitive---i.e.,
one that synthesizes a program that accept all the positive examples and rejects as many negative examples as possible.
%
Our ideas could be back-ported to provide improvements in their setting as well:
if the abstract domain supports meet ($\sqcap$),
they could run their algorithm multiple times to create a kind of ``conjunctive'' transformer, which would run multiple, incomparable best $\lang$-transformers, 
and then take the meet of the results.

\section{Conclusion}
\label{se:conclusion}

This paper presents a formal framework for the problem of synthesizing a best \lconjunction---i.e., a conjunctive specification of a program with respect to a user-defined logic $\lang$---and an algorithm for automatically synthesizing a  best \lconjunction.
The innovations in the algorithm are three-fold:
(i) it identifies individual conjuncts that are themselves strongest consequences;
(ii) it balances negative examples that \emph{must} be rejected by the \lproperty being synthesized and ones that \emph{may} be rejected;
and
(iii) it guarantees progress via monotonic constraint hardening.
%

Our work opens up many avenues for further study.
One is to harness other kinds of synthesis engines to implement \synth, \cs, and \cp. 
Recent work on Semantics-Guided Synthesis (\semgus) \cite{DBLP:journals/pacmpl/KimHDR21} provides an expressive synthesis framework for expressing complex synthesis problems like the ones discussed in this paper. 
\semgus solvers, such as \messy \cite{MessySemGuSTool-2022}, are able to produce two-sided answers to a problem:
either synthesizing a solution, or proving that the problem is unrealizable---i.e., has no solution---exactly what is needed for \cp.
On the theoretical side, while we have used first-order logic, it would be interesting to try other logics, such as separation logic \cite{DBLP:conf/lics/Reynolds02} or effectively propositional logic \cite{Thesis:Itzhaky14,Thesis:Padon18}.
On the practical side, our work could find applications in invariant generation~\cite{DBLP:journals/pacmpl/PadonWKMA22} and code deobfuscation~\cite{203640}.

\subsection*{Acknowledgement}

Supported, in part, by
a Microsoft Faculty Fellowship;
a gift from Rajiv and Ritu Batra;
NSF under grants CCF-\{1750965,1763871,1918211,2023222,2211968,2212558\};
and ONR under grant N00014-17-1-2889.
Any opinions, findings, and conclusions or recommendations expressed in this publication are those of the authors, and do not necessarily reflect the views of the sponsoring entities.

\subsection*{Data-Availability Statement}

We provide a comprehensive Docker image on Zenodo that containing the source code of \name, Dafny proofs, and all necessary dependencies for the experiments~\cite{kanghee20238327699}.


\bibliographystyle{ACM-Reference-Format}
\bibliography{main.bib}

\newpage
\appendix
\section{Proofs}
\label{App:Proofs}

\bestconjunction*
\begin{proof}
We first show the $\subseteq$ direction. The interpretation of the conjunction of all possible best properties must include $\phiand$ because $\phiand$ is semantically minimal---i.e., for every best \lproperty $\phil \in \mathcal{P}(\signatures)$ we have $\interp{\phiand} \subseteq \interp{\phil}$.
The $\supseteq$ direction holds because each conjunct of $\phiand$ is a best \lproperty. 
\end{proof}

\monotonicity*
\begin{proof}
For any $\varphi\in\lang$, because $\Rightarrow$ is a well-quasi order on the set of \lproperties, the set of distinct properties $\Pi=\{\varphi'\mid \varphi' \Rightarrow\varphi \wedge \varphi\neq \varphi'\}$ is finite. 
Then, either $\varphi$ is a best \lproperty or at least one of the properties in $\Pi$ is.
\end{proof}

\soundprecise*
\begin{proof}
The fact that $\phil$ accepts all the examples in $\eplus$ and rejects all the examples in $\eminusmay \cup \eminusmust$ follows from Invariant 1. The condition for $\phiinit$ guarantees that Invariant 1 holds on
the first iteration, and the definitions of \synth and \cp guarantee that Invariant 1 holds whenever $\phil$ is updated.
The \lproperty $\phil$ at line \ref{Li:SynthPropReturn} must be sound and precise with respect to $\psi$ because both \cs and \cp have returned $\bot$.
Returning $\bot$ from $\cs(\phil, \phiprog)$ (line~\ref{Li:CallCheckSoundness}) guarantees that $\phil$ is sound.
Returning $\bot$ from $\cp(\phil, \psi, \eplus, \eminusmust)$ (line~\ref{Li:CallCheckPrecision}) guarantees that $\phil$ is precise for $\phiprog$ with respect to $\psi$.
\end{proof}

\soundnessconj*
\begin{proof}
Notice that $\Pi$ is a best \lconjunction if Invariant 4 holds and $\Pi$ is semantically minimal at line \ref{Li:SynthPropertiesReturn}.
We can give an inductive argument to show that Invariant 4 always holds.
At the beginning, Invariant 4 holds trivially just after line \ref{Li:InitEminusMayEmpty}.
For the inductive step, assuming that Invariant 4 holds at line \ref{Li:CallSynthPropertyTwo} of \synthproperties, we show that it still holds after $\Pi \gets \Pi \cup \{ \phil \}$ in line \ref{Li:UpdatePhi}.
First, Lemma \ref{lem:soundprecise} guarantees that each $\phil$ returned from \synthproperty at line \ref{Li:CallSynthPropertyTwo} of \synthproperties is a best \lproperty for $\signatures$.
Also, $\eminusmust$ must be non-empty for the control to reach line \ref{Li:CallSynthPropertyTwo} of \synthproperties.
The examples in $\eminusmust$ must have been discovered by the call to \synthproperty in line \ref{Li:CallSynthPropertyOne}, where the  domain from which examples can be drawn is $\phiand$ (the second argument of the call).
Thus, $\phil$ rejects all of the negative examples in $\eminusmust$ while $\phiand$ accepts them, which establishes that $\phil$ is not weaker than each $\phil' \in \Pi$.
Consequently, Invariant 4 holds for $\Pi \cup \{ \phil \}$.

We now prove that $\Pi$ is semantically minimal at line \ref{Li:SynthPropertiesReturn}---i.e., when \synthproperties returns, there is no
best (most importantly, no sound) \lproperty $\phil'$ such that $\interp{\phiand \wedge \phil'} \subset \interp{\phiand}$.
For \synthproperties to return, \textsc{IsSat($\phiand \wedge \neg \phil$)} at line \ref{Li:CheckForImprovement} must return $\bot$, which would imply that $\interp{\phiand} = \interp{\phiand \wedge \phil}$.
If $\Pi$ is not semantically minimal, then there would exist a sound \lproperty $\phil'$ such that $\interp{\phiand \wedge \phil'} \subset \interp{\phiand} = \interp{\phiand \wedge \phil}$, but this is not possible because the property $\phil$  returned by \synthproperty at line~\ref{Li:CallSynthPropertyOne} is precise with respect to $\phiand$ (Lemma \ref{lem:soundprecise}).
\end{proof}

\completeness*
\begin{proof}
Intuitively, the theorem states that no infinite sequence of sound (resp.\ unsound) properties exists, thus guaranteeing that in finitely many steps \synthproperty will find a best \lproperty and \synthproperties will find a best \lconjunction.
For instance, because each conjunct of a best \lconjunction must be incomparable via $\Rightarrow$, conjuncts of an infinite best \lconjunction would have to form an infinite anti-chain of \lproperties, which contradicts the assumption that $\Rightarrow$ is a well-quasi order on sound \lproperties.

A key property of \synthproperty is that any property stronger than a property that fails \cs at line \ref{Li:CallCheckSoundness} is never considered again, because positive examples in $\eplus$ are never removed.
Consequently, the sequence of unsound \lproperties in an execution of \synthproperty is non-strengthening.
Because $\Leftarrow$ is a well-quasi order on the set of unsound \lproperties, any non-strengthening sequence of unsound \lproperties can only be finite.

We can now move to the case where an \lproperty $\phil$ reaches \cp at line \ref{Li:CallCheckPrecision}.
In this case, we have two possibilities (illustrated in Fig.~\ref{fig:precision}): 
\rone \cp returns an unsound property $\phil''$ (and a negative example $\negex_1$);
\rtwo \cp returns a sound property $\phil''$ (and a negative example $\negex_2$).
Case \rone can only happen finitely many times because only finitely many unsound \lproperties can be found during an execution of \synthproperty.
For case \rtwo, \cs will return $\bot$ in the next iteration, and the negative example $\negex_2$ will be added to $\eminusmust$ (from $\eminusmay$); therefore, any property weaker than $\phil$ will never be considered during this execution of \synthproperty.
Therefore, the sequence $\phil_1=\phil'', \phil_2, \ldots, \phil_n$ of sound \lproperties obtained by \cp is non-weakening.
Because $\Rightarrow$ is a well-quasi order on the set of sound \lproperties, any non-weakening sequence of sound \lproperties must be finite.

The algorithm is also complete if we remove line \ref{Li:MergeMayIntoMust}, but in this case the proof is a little more involved.
Again, we need to worry about case \rtwo.
Because $\phil''$ is sound, in the next iteration, \synthproperty will call \cp again on $\phil''$.
If $\cp$ returns another property (i.e., $\phil''$ is not precise), it may look like the algorithm can get stuck in a loop.
However, that cannot happen for the following reason.
Let $\phil_1=\phil'', \phil_2, \ldots, \phil_n$ be a sequence of \lproperties obtained by consecutive calls to \cp, each of which has returned a precision example $\negex_1,\ldots,\negex_{n}$.
Note that for every set of examples $\{\negex_1, \ldots, \negex_{i}\}$, every $\phil_j$ (for $j> i$) rejects those examples, while every $\phil_j$ (for $j\leq i$) does not. 
Because $\Rightarrow$ is a well-quasi order on the set of sound \lproperties, such a sequence of \lproperties obtained by consecutive calls to \cp must be finite.
The sequence can be terminated by either finding an unsound property (see point \rone for that case), or by \cp returning $\bot$, in which case the algorithm terminates.

Turning to \synthproperties, each best \lproperty is a sound \lproperty;
thus, if there were an infinite sequence of best \lproperties, there would be an infinite sequence of sound \lproperties, violating the assumption that $\Rightarrow$ is a well-quasi order on the set of sound \lproperties.
Consequently, because $\Pi$ in \synthproperties is a set of incomparable sound \lproperties (Invariant 4), the loop on lines \ref{Li:BeginWhileLoop}--\ref{Li:UpdateConjunction} cannot repeat indefinitely.
Moreover, the fact that each call on \synthproperty terminates guarantees that \synthproperties always terminates.
\end{proof}

\section{The Encoding of our Primitives in \sketch}
\label{app:sketch}

We discuss below how \name uses the features of \sketch to implement the DSL $\lang$ specified by the given grammar, and the operations \synth, \cs, and \cp.

The grammar of the DSL $\lang$ is compiled to a \sketch generator.
A generator is a function with holes that allows one to build complex programs via recursion.
Intuitively, holes are used to allow the synthesizer to make choices about what to output.
In our case, holes are used to select which productions are expanded at each point in a derivation tree, and recursion is used to move down the derivation tree.
\sketch allows one to provide a recursion bound to make synthesis tractable, and \name's grammar syntax allows one to provide such bounds for each of the nonterminals in the grammar.
Fig.~\ref{fig:DSLencoding} shows how the generator \texttt{genAP} for the productions of nonterminal $\nonap$ in Equation~\ref{eq:grammar-rev2} can be specified. 

\lstset{%
    language=C,
    basewidth=0.5em,
    xleftmargin=5mm,
    commentstyle=\color{dgreen}\ttfamily,
    basicstyle=\footnotesize\ttfamily,%
    numbers=left, numbersep=5pt,%
    emph={%
    assert, harness, void, if, return, generator, int, list, boolean, minimize%
    },emphstyle={\color{blue}}%
    }%

\begin{figure}
\begin{subfigure}{0.9\textwidth}
\begin{lstlisting}
generator boolean genAP(list l1, list l2, list ol1, list ol2) {
    int t = ??;
    list var_L_1 = genL(l1, l2, ol1, ol2);
    if (t == 0) return isEmpty(var_L_1);
    if (t == 1) return !isEmpty(var_L_1);
    ...
    int var_S_1 = genS(l1, l2, ol1, ol2);
    int var_S_2 = genS(l1, l2, ol1, ol2);
    return {|var_S_1 (< | <= | > | >= | == | !=) var_S_2 + ??(1)|};
}
\end{lstlisting}
\end{subfigure}
\caption{
\sketch generator \texttt{genAP} for the productions of nonterminal $\nonap$ in the DSL in Equation~\ref{eq:grammar-rev2}. Lines 3, 7, and 8 are calls to generators of other nonterminals, and lines 4, 5, and 11 mimic productions in the grammar. 
A production is selected based on the value synthesized by \sketch for \texttt{t} at line 2. (\sketch can synthesize a different value for each recursive call.)
}
\label{fig:DSLencoding}
\end{figure}

\sketch uses assertions (which are often written within so-called harnesses) to impose constraints on the programs produced by a generator. 
In our setting, assertions can be used to produce properties that accept all positive examples and reject all must negative examples.
    

\synth is implemented using a call to \sketch that finds an \lproperty produced by the generator that satisfies all the assertions.
%

\begin{figure}
\begin{subfigure}{0.45\textwidth}
\begin{lstlisting}[numbers=left,basicstyle=\footnotesize\ttfamily,basewidth=0.5em]
harness void checkPrecision() {
    // find a negative example in psi 
    list lnegex = list_constructor();
    boolean out;
    psi(lnegex, ..., out); 
    assert out;
    // lnegex accepted by current property
    phi(lnegex, ..., out);
    assert out;
    // lnegex not accepted by new property
    phiprime(lnegex, ..., out);
    assert !out;
}
\end{lstlisting}
\end{subfigure}
\quad
\begin{subfigure}{0.41\textwidth}
\begin{lstlisting}
harness void pos_example_1() {
    ... // example description
    boolean out;
    phiprime(lposex1, ..., out);
    assert out;
}
\end{lstlisting}

\begin{lstlisting}
harness void neg_example_1() {
    ... // example description
    boolean out;
    phiprime(lnegex1, ..., out);
    assert !out;
}
\end{lstlisting}
\end{subfigure}
\caption{
\sketch code for \cp with positive and negative examples. \texttt{lnegex} is the negative example $\negex$ generated by the generator \texttt{list\_constructor} that would witness imprecision, \texttt{phi} is a function describing the current property $\phil$, \texttt{psi} is 
a function describing the predicate $\psi$,
and \texttt{phiprime} is a function calling the generator \texttt{genD} to find an \lproperty $\phil'$ that together with \texttt{lnegex} would form a witness for imprecision.}
\label{fig:precisionencoding}
\end{figure}

We also use generators to write data-structure constructors, which are functions that can generate data structures that satisfy desired invariants---e.g., \texttt{list\_constructor()} can generate any valid list up to a certain length.
We bound the size of the inputs in our experiments.

Given a logical specification of a concrete function, we use \sketch as a satisfiability solver.
For a given candidate \lproperty, \cs uses \sketch to attempt to synthesize a soundness counterexample.

\cp asks \sketch to find a property that satisfies all the assertions, together with a new negative example that the newly synthesized property can reject and the previous property accepts (see Figure~\ref{fig:precisionencoding}).

\noindent\textit{Term minimization:}
  To avoid generating unnecessarily complex formulas, we use the minimization feature of \sketch to look for terms of smallest size (both for \synth and \cp).
  Because we are only interested in the final formula, it is enough to apply minimization only in the second call to \synthproperty in Alg.\ \ref{alg:SynthesizeAllProperties} (line \ref{Li:CallSynthPropertyTwo}).

\section{Evaluation Details}
\label{App:data}

In this section, we first describe the benchmarks in detail and then present detailed evaluation metrics.

\subsection{Specification-Mining Benchmarks}
\label{App:SpecificationMiningBenchmarks}

\subsubsection{SyGuS}
We ``inverted'' the roles from the original benchmarks.
We selected a few representative reference implementations of functions from the \sygus repository, and applied \name to synthesize specifications for them.
In this case, we have a ``ground-truth'' property against which we can compare, namely, the specification that a \sygus solver would start from to synthesize the function.
%
%
We bounded the size of integers to 5 bits.

\mypar{Max} 
\maxtwo, \maxthree, and \maxfour\xspace are problems where the goal is to find a function that computes the maximum of 2, 3, or 4 integers, respectively.
For each problem, we provided the reference implementation given in the \sygus repository, and synthesized properties expressed in the grammar 
\begin{equation}
\begin{array}{rcl}
    \nonprop  & :{=} & \top \mid \nonap \mid \nonap \vee \nonap \mid \nonap \vee \nonap \vee \nonap \mid \ldots \\
    \nonap & :{=} & \nonint ~\{ {\leq} \mid {<} \mid {\geq} \mid {>} \mid {=} \mid {\neq}  \}~ \nonint\\
    \nonint  & :{=} & \xone \mid \xtwo \mid \ldots \mid \varo
\end{array}
\label{eq:max-grammar}
\end{equation}
The nonterminal $\nonint$ derives all input and output variables appearing in the
given query.
The grammar for \maxtwo\xspace and \maxthree\xspace allows up to three disjuncts in $\nonprop$, the grammar for \maxfour\xspace allows four disjuncts, and the grammar for \maxfive\xspace allows five disjuncts.

\mypar{Diff} $\diff$ is a \sygus problem where the goal is to find a commutative function that returns the difference of two input variables---i.e. $\varo {=} \x {-} \y \vee \varo {=} \y {-} \x$ whenever $\varo {=} \diff(\x, \y)$.

First, the $\diffone$ benchmark synthesizes consequences of the query $\varo {=} \diff(\x, \y)$ using the following grammar:
\begin{equation}
\begin{array}{rcl}
    \nonprop  & :{=} & \top \mid \nonap \mid \nonap \vee \nonap \mid \nonap \vee \nonap \vee \nonap \\
    \nonap & :{=} & \nonint ~\{ {\leq} \mid {<} \mid {\geq} \mid {>} \mid {=} \mid {\neq}  \}~ \nonint {+} \nonint\\
    \nonint  & :{=} & \x \mid \y \mid \varo \mid 0
\end{array}
\label{eq:diff1-grammar}
\end{equation}
The nonterminal $\nonint$ derives all input and output variables appearing in the
given query,
and also the constant 0, which appears in the implementation of the function.

Second, the $\difftwo$ benchmark synthesizes consequences of the query $\oone {=} \diff(\xone, \yone) \land \otwo {=} \diff(\xtwo, \ytwo)$, using the following grammar:
\begin{equation}
\begin{array}{rcl}
    \nonprop  & :{=} & \top \mid \nonap \mid \nonap \vee \nonap \mid \nonap \vee \nonap \vee \nonap \\
    \nonap & :{=} & \nonint ~\{ {=} \mid {\neq}  \}~ \nonint\\
    \nonint  & :{=} & \xone \mid \yone \mid \oone \mid \xtwo \mid \ytwo \mid \otwo \mid 0
\end{array}
\label{eq:diff2-grammar}
\end{equation}
Using a query in which a function symbol appears twice allows one to express properties such as commutativity.

\mypar{Array}
$\arraytwo$ and $\arraythree$ are \sygus problems where the goal is to synthesize an expression that finds the index of an element in a sorted tuple of size 2 and 3, respectively. 
For example, $\varo = \arraythree(\xone, \xtwo, \xthree, \vark)$ implies that $\varo$ is the index of the value $\vark$ in the sorted tuple $(\xone, \xtwo, \xthree)$.
For each problem, we synthesize properties expressed in the following grammar:
\begin{equation}
\begin{array}{rcl}
    \nonprop  & :{=} & \top \mid \nonap \mid \nonap \vee \nonap \mid \nonap \vee \nonap \vee \nonap \mid \ldots \\
    \nonap & :{=} & \nonint ~\{ {\leq} \mid {<} \mid {\geq} \mid {>} \mid {=} \mid {\neq}  \}~ \nonint \mid 
        \nonsize ~\{ {\leq} \mid {<} \mid {\geq} \mid {>} \mid {=} \mid {\neq}  \}~ \nonsize\\
    \nonint  & :{=} & \xone \mid \xtwo \mid \ldots \mid \vark\\
    \nonsize  & :{=} & 0 \mid 1 \mid 2 \mid \ldots \mid \varo
\end{array}
\label{eq:array-grammar}
\end{equation}
The nonterminal $\nonint$ derives all input variables, 
while the nonterminal $\nonsize$ derives the output and all index constants appearing in the implementation.
The grammar for $\arraytwo$ (resp. $\arraythree$) allows up to three (resp. four) disjuncts.

\subsubsection{Synquid}

We collected all the problems from the \synquid paper \cite{polikarpova2016program} that involve Lists, Binary Trees, and Binary Search Trees (BSTs), and that do not use other datatypes or higher-order functions. 
In these cases, we have a ``ground-truth'' property provided by the polymorphic refinement type used to synthesize the function.
The reference code that we used for each problem is a version of the code synthesized by \synquid, rewritten in a C-style language.
We modified \lelemidx\xspace to return -1 instead of raising an exception when there is no matched element in the list, and added a new problem \ldelete\xspace that deletes only the first occurrence of an element.

\mypar{List}
We considered 13 different methods for Lists constructed by $\exname{Nil}$ and $\exname{Cons}$ (we bounded the length of lists generated by list constructors to 10):
$\lappend$, $\ldelete$, $\ldeleteall$, $\ldrop$, $\lelem$, $\lelemidx$, $\lith$, $\lmin$, $\lreplicate$, $\lreverse$, $\lsnoc$, $\lstutter$, and $\ltake$.
Given the reference implementation from the \synquid benchmark, we synthesized properties in the grammar:
\begin{equation}
\begin{array}{rcl}
    \nonprop  & :{=} & \top \mid \nonap \mid \nonap \vee \nonap \mid \nonap \vee \nonap \vee \nonap \\
    \nonap & :{=} & \isempty(\nonlist) \mid \neg \isempty(\nonlist) \mid \equal(\nonlist, \nonlist) \mid \neg \equal(\nonlist, \nonlist)\\
        &   & \mid \nonsize ~\{ {\leq} \mid {<} \mid {\geq} \mid {>} \mid {=} \mid {\neq}  \}~ \nonsize {+} \ldots {+} \nonsize {+} \{0 \mid 1\}\\
        &   & \mid \{ \forallm \mid \existsm \} \x {\in} \nonlist.~ \x ~\{ {\leq} \mid {<} \mid {\geq} \mid {>} \mid {=} \mid {\neq}  \}~ \nonint \\
        &   & \mid \inputvars \mid \outputvars \\
    \nonint  & :{=} & \inputvars \mid \outputvars \\
    \nonsize  & :{=} & 0 \mid \size(\nonlist) \mid \inputvars \mid \outputvars \mid \progconsts \\
    \nonlist & :{=} & \inputvars \mid \outputvars \\
\end{array}
\label{eq:list-grammar}
\end{equation}

The nonterminal $\nonint$ derives all input and output variables denoting list elements, 
the nonterminal $\nonsize$ derives all integer input and output variables and constants, and 
the nonterminal $\nonlist$ derives all list input and output variables in the implementation.
%
%
%
Size comparisons are used only when the output variable is a list, index, or size. 
The number of occurrences of the nonterminal $\nonsize$ on the right-hand side of a comparison is equal to the number of input variables of type list, index, or size.

In addition to the above 13 problems, 
$\lreversetwice$ synthesizes properties for the query, $\loneout {=} \lreverse(\lone) \land \ltwoout {=} \lreverse(\ltwo)$,
using the following grammar:\footnote{Notice that in the example described 
  in Section~\ref{se:problem-definition} we used a more complex grammar (Eq.~\eqref{eq:grammar-rev2}) that encompasses both the grammars above and can directly synthesize all the properties that are true for a single invocation of \lreverse\xspace in the first grammar as well as for a double invocation.
  We choose to use the grammar in Eq.~\eqref{eq:reverse-twice-grammar} because we are interested in detecting properties that relate two invocations of \lreverse.
}
\begin{equation}
\begin{array}{rcl}
    \nonprop  & :{=} & \top \mid \nonap \mid \nonap \vee \nonap \mid \nonap \vee \nonap \vee \nonap \\
    \nonap & :{=} & \isempty(\nonlist) \mid \neg \isempty(\nonlist)  \mid \equal(\nonlist, \nonlist) \mid \neg \equal(\nonlist, \nonlist)\\
    \nonlist & :{=} & \lone \mid \ltwo \mid \loneout \mid \ltwoout \\
\end{array}
\label{eq:reverse-twice-grammar}
\end{equation}

\mypar{Binary Tree}
The first set of Binary Tree problems consists of the method $\telem$ and the constructors $\tempty$ and $\tbranch$.
We provided the reference implementations from the \synquid benchmark, and synthesized properties expressed in the following grammar:
\begin{equation}
\begin{array}{rcl}
    \nonprop  & :{=} & \top \mid \nonap \mid \nonap \vee \nonap \mid \nonap \vee \nonap \vee \nonap \\
    \nonap & :{=} & \isempty(\nontree) \mid \neg \isempty(\nontree) \mid \equal(\nontree, \nontree) \mid \neg \equal(\nontree, \nontree)\\
        &   & \mid \nonsize ~\{ {\leq} \mid {<} \mid {\geq} \mid {>} \mid {=} \mid {\neq}  \}~ \nonsize {+} \ldots {+} \nonsize {+} \{0 \mid 1\}\\
        &   & \mid \{ \forallm \mid \existsm \} x {\in} \nontree.~ x ~\{ {\leq} \mid {<} \mid {\geq} \mid {>} \mid {=} \mid {\neq}  \}~ \nonint \\
        &   & \mid \inputvars \mid \outputvars \\
    \nonint  & :{=} & \inputvars \mid \outputvars \\
    \nonsize  & :{=} & 0 \mid \size(\nontree) \mid \inputvars \mid \outputvars \mid \progconsts \\
    \nontree & :{=} & \inputvars \mid \outputvars \\
\end{array}
\label{eq:tree-grammar}
\end{equation}

The nonterminal $\nonint$ derives all input and output variables for tree elements, 
the nonterminal $\nonsize$ derives all integer input and output variables and constants, and the nonterminal $\nontree$ derives all tree input and output variables that appear in the implementation.
%
%

In the second set of Binary Tree problems we synthesize properties that capture relationships between the constructor $\tbranch$ and the three methods $\trootval$, $\tleft$, and $\tright$, which return the respective elements of $\tbranch$.
Properties are expressed in the following grammar:
\begin{equation}
\begin{array}{rcl}
    \nonprop  & :{=} & \top \mid \nonap \mid \nonap \vee \nonap \mid \nonap \vee \nonap \vee \nonap \\
    \nonap & :{=} & \isempty(\nontree) \mid \neg \isempty(\nontree) \mid \equal(\nontree, \nontree) \mid \neg \equal(\nontree, \nontree) \mid \nonint ~\{ {=} \mid {\neq}  \}~ \nonint\\
    \nonint  & :{=} & \inputvars \mid \outputvars \\
    \nontree & :{=} & \inputvars \mid \outputvars \\
\end{array}
\label{eq:tree-grammar-2}
\end{equation}
The nonterminal $\nonint$ derives all input and output variables for tree elements, 
and the nonterminal $\nontree$ derives all input and output variables of type Tree.

\mypar{Binary Search Tree}
We considered four different methods for BSTs: $\bstempty$, $\bstinsert$, $\bstdelete$, and $\bstfind$.
We provided the reference implementations from the \synquid benchmarks, and synthesized properties expressed in the grammar in Eq.~\eqref{eq:tree-grammar}.
%
We provided a data-structure constructor that can generate all valid BSTs up to size 4 using the following operations ($n$ stands for an integer):
\begin{equation}
\begin{array}{rcl}
    BST  & :{=} & \bstempty() \mid \bstinsert(BST, \varn)\\
\end{array}
\label{eq:bst-constructor}
\end{equation}

\subsubsection{Other}

We designed {\numOtherBenchmarks} problems to cover missing categories.  
\rone Stack and Queue are data structures implemented using other data structures (i.e., Lists), \rtwo Integer Arithmetic cover more integer arithmetic problems, and in particular, cases where the input program uses loops.

\mypar{Stack}
We considered three Stack methods: $\stempty$, $\stpush$ and $\stpop$ and  synthesized properties in the following grammar:
\begin{equation}
\begin{array}{rcl}
    \nonprop  & :{=} & \top \mid \nonap \mid \nonap \vee \nonap \mid \nonap \vee \nonap \vee \nonap \\
    \nonap & :{=} & \isempty(\nonstack) \mid \neg \isempty(\nonstack) \mid \equal(\nonstack, \nonstack) \mid \neg \equal(\nonstack, \nonstack)\\
        &   & \mid \nonsize ~\{ {\leq} \mid {<} \mid {\geq} \mid {>} \mid {=} \mid {\neq}  \}~ \nonsize {+} \{0 \mid 1\}\\
    \nonsize  & :{=} & 0 \mid \size(\nonstack) \mid \inputvars \mid \outputvars \\
    \nonstack & :{=} & \inputvars \mid \outputvars \\
\end{array}
\label{eq:stack-grammar}
\end{equation}
The nonterminal $\nonsize$ derives all input and output variables for indices or size,
and the nonterminal $\nonstack$ derives all input and output variables of Stack type.

The data-structure constructor generates all stacks up to size 10 using the following operations:
\begin{equation}
\begin{array}{rcl}
    ST  & :{=} & \stempty() \mid \stpush(ST, \varn) \\
\end{array}
\label{eq:stack-constructor}
\end{equation}

Finally, the problem $\stpushpop$ synthesizes properties for the query $\soneout {=} \stpush(\sone, \xone) \land (\stwoout, \xtwo) {=} \stpop(\stwo)$ using the following grammar:
\begin{equation}
\begin{array}{rcl}
    \nonprop  & :{=} & \top \mid \nonap \mid \nonap \vee \nonap \mid \nonap \vee \nonap \vee \nonap \\
    \nonap & :{=} & \nonint ~\{ {=} \mid {\neq}  \}~ \nonint \mid \isempty(\nonstack) \mid \neg \isempty(\nonstack)  \mid \equal(\nonstack, \nonstack) \mid \neg \equal(\nonstack, \nonstack)\\
    \nonint & :{=} & \xone \mid \xtwo \\
    \nonlist & :{=} & \sone \mid \stwo \mid \soneout \mid \stwoout \\
\end{array}
\label{eq:stack-grammar-2}
\end{equation}

\mypar{Queue} 
We considered three Queue methods: $\qempty$, $\qenqueue$ and $\qdequeue$.
%
%
We defined a functional-style Queue using two Lists, and provided implementations of \qempty, \qenqueue\xspace, and \qdequeue\xspace using List methods.
We also provided a function $\qtolist$ that converts a Queue to a single List in which elements of Queue are in the intended order, and synthesized properties expressed in the following grammar:
\begin{equation}
\begin{array}{rcl}
    \nonprop  & :{=} & \top \mid \nonap \mid \nonap \vee \nonap \mid \nonap \vee \nonap \vee \nonap \\
    \nonap & :{=} & \isempty(\nonlist) \mid \neg \isempty(\nonlist)  \mid \equal(\nonlist, \nonlist) \mid \neg \equal(\nonlist, \nonlist)\\
    \nonlist & :{=} & \qtolist(\nonqueue) \mid \lcons(\nonint, \qtolist(\nonqueue)) \mid \lsnoc(\qtolist(\nonqueue), \nonint) \\
    \nonint & :{=} & \inputvars \mid \outputvars \\
    \nonqueue & :{=} & \inputvars \mid \outputvars \\
\end{array}
\label{queue-grammar}
\end{equation}
The nonterminal $\nonint$ derives all input and output variables for Queue elements,
and the nonterminal $\nonqueue$ derives all input and output variables of type Queue.
We provided a data-structure constructor that can generate all valid queues up to size 4 using the following operations:
\begin{equation}
\begin{array}{rcl}
    Q  & :{=} & \qempty() \mid \qenqueue(Q, \varn) \mid \qdequeue(Q) \\
\end{array}
\label{eq:queue-constructor}
\end{equation}

\mypar{Integer Arithmetic} 
We considered three additional integer-manipulating functions:
\rone $\iaabs(\x)$ computes the absolute value of the integer $\x$ (the semantics is expressible in SMT-Lib);
\rtwo $\ialinsum(\x)$ outputs the value of $\x$ using a loop that counts up to $\x$; 
\rthree $\ianonlinsum(\x)$ computes the sum of all integers from $0$ to the value of $\x$ using a loop.

First, we synthesize properties expressed in the following grammar where the nonterminal $\nonint$ derives all input and output variables, constants that appear in the code, and their negations:
\begin{equation}
\begin{array}{rcl}
    \nonprop  & :{=} & \top \mid \nonap \mid \nonap \vee \nonap \mid \nonap \vee \nonap \vee \nonap \\
    \nonap & :{=} & \nonint ~\{ {\leq} \mid {<} \mid {\geq} \mid {>} \mid {=} \mid {\neq}  \}~ \nonint\\
    \nonint  & :{=} & \x \mid {-} \x \mid \varo \mid {-} \varo \mid \progconsts \\
\end{array}
\label{eq:arith-grammar-1}
\end{equation}

Then we synthesize properties in the following grammar, which can express more general properties involving linear combinations of the integer terms of interest:
\begin{equation}
\begin{array}{rcl}
    \nonprop  & :{=} & \top \mid \nonap \mid \nonap \vee \nonap \mid \nonap \vee \nonap \vee \nonap \\
    \nonap & :{=} & \noncoeff \tone {+} \cdots {+}  \noncoeff \tn ~\{ {\leq} \mid {<} \mid {\geq} \mid {>} \mid {=} \mid {\neq}  \}~ 0\\
    \noncoeff  & :{=} & -3 \mid -2 \mid \ldots \mid 3 \mid 4 \\
\end{array}
\label{eq:arith-grammar-2}
\end{equation}
The terms from $\tone$ to $\tn$ are the input and output variables and constants that appear in the implementation.
In the case of $\ianonlinsum$, the quadratic term $\x \cdot \x$ was also included.

Finally, for the case of $\iaabs(\x)$, we also considered the following grammar, which was introduced in Section~\ref{subsub:monotonicity} to illustrate a case in which the given language forms a well-quasi-order but is not finite.
\begin{equation}
\begin{array}{rcl}
    \nonap & :{=} & -20\leq \x \leq 10 \Rightarrow \varo \leq \noncoeff \\
    \noncoeff  & :{=} & 0\mid 1+\noncoeff \\
\end{array}
\label{eq:arith-grammar-3}
\end{equation}

\subsection{Algebraic-Specification Benchmarks}
\label{App:AlgebraicSpecificationBenchmarks}

For each module, we considered queries that were relevant to synthesize the algebraic properties described in the \jlibsketch benchmarks.
We considered fairly small grammars as the module's implementations contained many lines of code and were otherwise hard to perform synthesis for.
For example, the implementation of \texttt{HashMap} contained 150 lines of code and involved arrays.

\subsubsection{ArrayList}
For the query
$\valout = \arrayget(\arrayadd(\arr, \varval), \vark)$, we considered the following grammar:
\begin{equation}
\begin{array}{rcl}
    \nonprop  & :{=} & \tru \mid \nonguard \Rightarrow \valout = \nonint \\
    \nonguard & :{=} & \tru \mid \nonsize ~\{ {\leq} \mid {<} \mid {\geq} \mid {>} \mid {=} \mid {\neq}  \}~ \nonsize \\
    \nonlist & :{=} & \arrayempty \mid \arr \\
    \nonsize & :{=} & \size(\nonlist) \mid \size(\nonlist) {+} 1 \mid \vark \mid \vark {+} 1 \mid 0 \mid 1 \\
    \nonint & :{=} & \varval \mid \arrayget(\arr, \vark) \\
\end{array}
\label{eq:array-grammar-1}
\end{equation}
In this case, \name synthesized the algebraic properties
\begin{equation}
\begin{array}{c}
    \size(\arr) = \vark \Rightarrow \valout = \varval \\
    \size(\arr) \neq \vark \Rightarrow \valout = \arrayget(\arr, \vark)
\end{array}
\label{eq:array-prop-1}
\end{equation}

For the query $\varn = \size(\arrayempty)$, we considered the following grammar:
\begin{equation}
\begin{array}{rcl}
    \nonprop  & :{=} & \tru \mid \varn = \nonsize \\
    \nonsize & :{=} & 0 \mid 1 \\
\end{array}
\label{eq:array-grammar-2}
\end{equation}
In this case, \name synthesized the algebraic property
\begin{equation}
\begin{array}{c}
    \varn = 0
\end{array}
\label{eq:array-prop-2}
\end{equation}

For the query $\varn = \size(\arrayadd(\arr, \varval))$, we considered the following grammar:
\begin{equation}
\begin{array}{rcl}
    \nonprop  & :{=} & \tru \mid \varn = \nonsize \\
    \nonsize & :{=} & \setsize(\nonlist) \mid \setsize(\nonlist) + 1 \mid 0 \mid 1 \\
    \nonlist & :{=} & \arrayempty \mid \arr \\
\end{array}
\label{eq:array-grammar-3}
\end{equation}
In this case, \name synthesized the algebraic property
\begin{equation}
\begin{array}{c}
    \varn = \size(\arr) + 1
\end{array}
\label{eq:array-prop-3}
\end{equation}

\subsubsection{HashMap}
We implemented the \texttt{KeyNotFound} exception of $\mapget$ using an additional output flag $\error$---i.e.,
$\mapget(\varm, \vark)$ refers the output value when the error flag is not set to true, and $\error$ is true when any of the functions that called has thrown a \texttt{KeyNotFound} exception. 
As mentioned in Section~\ref{se:algspec}, the initial mock implementation for HashMap provided in \jlibsketch was incorrect.
In the rest of this section, we assume the correct mock implementation we fixed before synthesizing properties.

For the query $\valout = \mapget(\mapempty, \vark)$, we considered the following grammar:
\begin{equation}
\begin{array}{rcl}
    \nonprop  & :{=} & \tru \mid \error
\end{array}
\label{eq:map-grammar-1}
\end{equation}
In this case, \name synthesized the algebraic property
\begin{equation}
\begin{array}{c}
    \error
\end{array}
\label{eq:map-prop-1}
\end{equation}

For the query $\valout = \mapget(\mapput(\varm, \kone, \varval), \ktwo)$, we considered the following grammar:
\begin{equation}
\begin{array}{rcl}
    \nonprop  & :{=} & \tru \mid \error \mid \valout = \nonvalue \\
    \nonguard & :{=} & \tru \mid \nonkey ~\{ {=} \mid {\neq}  \} ~\nonkey \\
    \nonmap & :{=} & \mapempty \mid \varm \\
    \nonkey & :{=} & \kone \mid \ktwo \\
    \nonvalue & :{=} & \varval \mid \mapget(\nonmap, \nonkey)  \\
\end{array}
\label{eq:map-grammar-2}
\end{equation}
In this case, \name synthesized the algebraic properties
\begin{equation}
\begin{array}{c}
    \kone = \ktwo \Rightarrow \valout = \varval \\
    \kone \neq \ktwo \Rightarrow \valout = \mapget(\varm, \ktwo)
\end{array}
\label{eq:map-prop-2}
\end{equation}

For the query $\mout = \mapput(\mapput(\varm, \kone, \valone), \ktwo, \valtwo)$, we considered the following grammar:
\begin{equation}
\begin{array}{rcl}
    \nonprop  & :{=} & \tru \mid \nonguard \Rightarrow \equal(\mout, \nonmap) \\
    \nonguard & :{=} & \tru \mid \nonkey ~\{ {=} \mid {\neq}  \} ~\nonkey \mid \nonvalue ~\{ {=} \mid {\neq}  \} ~\nonvalue\\
    \nonmap & :{=} & \mapempty \mid \varm \mid \mapput(\varm, \nonkey, \nonvalue) \\
    \nonkey & :{=} & \kone \mid \ktwo \\
    \nonvalue & :{=} & \valone \mid \valtwo  \\
\end{array}
\label{eq:map-grammar-3}
\end{equation}
In this case, \name synthesized the algebraic property
\begin{equation}
\begin{array}{c}
    \kone = \ktwo \Rightarrow \equal(\mout, \mapput(\varm, \kone, \valtwo)) \\
\end{array}
\label{eq:map-prop-3}
\end{equation}

\subsubsection{HashSet}
For the query $\varn = \setsize(\setempty)$, we considered the following grammar:
\begin{equation}
\begin{array}{rcl}
    \nonprop  & :{=} & \tru \mid \varn = \nonsize \\
    \nonsize & :{=} & 0 \mid 1 \\
\end{array}
\label{eq:hashset-grammar-1}
\end{equation}
In this case, \name synthesized the algebraic property
\begin{equation}
\begin{array}{c}
    \varn = 0
\end{array}
\label{eq:set-prop-1}
\end{equation}

For the query $\varn = \setsize(\setadd(\vars, \x))$, we considered the following grammar:
\begin{equation}
\begin{array}{rcl}
    \nonprop  & :{=} & \tru \mid \nonguard \Rightarrow \varn = \nonsize \\
    \nonguard & :{=} & \tru \mid \setcontains(\vars, \x) \mid \neg \setcontains(\vars, \x) \\
    \nonsize & :{=} & \setsize(\vars) \mid \setsize(\vars) + 1 \mid 0 \mid 1 \\
\end{array}
\label{eq:hashset-grammar-2}
\end{equation}
In this case, \name synthesized the algebraic properties
\begin{equation}
\begin{array}{c}
    \setcontains(\vars, \x) \Rightarrow \varn = \setsize(\vars) \\
    \neg \setcontains(\vars, \x) \Rightarrow \varn = \setsize(\vars) + 1 \\
\end{array}
\label{eq:set-prop-2}
\end{equation}

For the query $\varb = \setcontains(\setempty, \x)$, we considered the following grammar:
\begin{equation}
\begin{array}{rcl}
    \nonprop  & :{=} & \tru \mid \varb = \nonap \\
    \nonap & :{=} & \tru \mid \fls \\
\end{array}
\label{eq:hashset-grammar-3}
\end{equation}
In this case, \name synthesized the algebraic property
\begin{equation}
\begin{array}{c}
    \varb = \fls \\
\end{array}
\label{eq:set-prop-3}
\end{equation}

For the query $\varb =\setcontains(\setadd(\vars, \xone), \xtwo)$, we considered the following grammar:
\begin{equation}
\begin{array}{rcl}
    \nonprop  & :{=} & \tru \mid \nonguard \Rightarrow \varb = \nonap \\
    \nonguard & :{=} & \tru \mid \nonint ~\{ {=} \mid {\neq}  \}~ \nonint \\
    \nonint & :{=} & \xone \mid \xtwo \\
    \nonap & :{=} & \setcontains(\vars, \nonint) \mid \tru \mid \fls \\
\end{array}
\label{eq:hashset-grammar-4}
\end{equation}
In this case, \name synthesized the algebraic properties
\begin{equation}
\begin{array}{c}
    \xone = \xtwo \Rightarrow \varb = \tru \\
    \xone \neq \xtwo \Rightarrow \varb = \setcontains(\vars, \xtwo) \\
\end{array}
\label{eq:set-prop-4}
\end{equation}

For the query $\sout = \setremove(\setempty, \x)$, we considered the following grammar:
\begin{equation}
\begin{array}{rcl}
    \nonprop  & :{=} & \tru \mid \sout = \nonhash \\
    \nonhash & :{=} & \setempty \mid \setadd(\nonhash, \x) \\
\end{array}
\label{eq:hashset-grammar-5}
\end{equation}
In this case, \name synthesized the algebraic property
\begin{equation}
\begin{array}{c}
    \sout = \setempty
\end{array}
\label{eq:set-prop-5}
\end{equation}

For the query $\sout =\setremove(\setadd(\vars, \xone), \xtwo)$, we considered the following grammar:
\begin{equation}
\begin{array}{rcl}
    \nonprop  & :{=} & \tru \mid \nonguard \Rightarrow \sout = \nonhash \\
    \nonguard & :{=} & \tru \mid \nonint ~\{ {=} \mid {\neq}  \}~ \nonint \\
    \nonint & :{=} & \xone \mid \xtwo \\
    \nonhash & :{=} & \setempty \mid \vars \mid \setadd(\nonhash, \nonint) \mid \setremove(s, \nonint) \\
\end{array}
\label{eq:hashset-grammar-6}
\end{equation}
In this case, \name synthesized the algebraic properties
\begin{equation}
\begin{array}{c}
    \xone = \xtwo \Rightarrow \sout = \setremove(\vars, \xtwo) \\
    \xone \neq \xtwo \Rightarrow \sout = \setadd(\setremove(\vars, \xtwo), \xone) \\
\end{array}
\label{eq:set-prop-6}
\end{equation}

\subsection{Sensitivity-Analysis Benchmarks}
\label{se:sensitivity-benchmark}
We divide our benchmarks into ones that proved sensitivity properties with respect to the Hamming distance versus the edit distance over input and output lists.

\subsubsection{Hamming Distance}

In the following, $\hdist$ denotes the Hamming distance between two lists.
For queries of the form $\loneout = f(\lone, \xone) \wedge \ltwoout = f(\ltwo, \xtwo)$, where $f \in \{\lcons, \ldelete, \ldeleteall, \lsnoc\}$, we considered the following grammar:
\begin{equation}
\begin{array}{rcl}
    \nonprop  & :{=} & \tru \mid \nonguard \wedge \size(\lone) {=} \size(\ltwo)
                \wedge \hdist(\lone, \ltwo) {\leq} \vard \Rightarrow \hdist(\loneout, \ltwoout) {\leq} \nonsize \\
    \nonguard & :{=} & \top \mid \xone ~\{ {=} \mid {\neq} \}~ \xtwo \\
    \nonsize & :{=} &  \noncoeff*\size(\lone) {+} \noncoeff* \vard {+}  \noncoeff\\
    \noncoeff  & :{=} & -1 \mid 0 \mid 1 \mid 2 \\
\end{array}
\label{eq:hamming-grammar-1}
\end{equation}
In this case, \name synthesized the following sensitivity properties for $\lcons$:
\begin{equation}
\begin{array}{c}
    \size(\lone) {=} \size(\ltwo) \wedge \hdist(\lone, \ltwo) {\leq} \vard \Rightarrow \hdist(\loneout, \ltwoout) {\leq} \size(\lone) {+} 1 \\
    \size(\lone) {=} \size(\ltwo) \wedge \hdist(\lone, \ltwo) {\leq} \vard \Rightarrow \hdist(\loneout, \ltwoout) {\leq} \vard {+} 1 \\
    \xone {=} \xtwo \wedge \size(\lone) {=} \size(\ltwo) \wedge \hdist(\lone, \ltwo) {\leq} \vard \Rightarrow \hdist(\loneout, \ltwoout) {\leq} \size(\lone) \\
    \xone {=} \xtwo \wedge \size(\lone) {=} \size(\ltwo) \wedge \hdist(\lone, \ltwo) {\leq} \vard \Rightarrow \hdist(\loneout, \ltwoout) {\leq} \vard \\
\end{array}
\label{eq:hamming-prop-1}
\end{equation}
the following properties for $\ldelete$,
\begin{equation}
\begin{array}{c}
    \size(\lone) {=} \size(\ltwo) \wedge \hdist(\lone, \ltwo) {\leq} \vard \Rightarrow \hdist(\loneout, \ltwoout) {\leq} \size(\lone) \\
\end{array}
\label{eq:hamming-prop-2}
\end{equation}
the following properties for $\ldeleteall$, 
\begin{equation}
\begin{array}{c}
    \size(\lone) {=} \size(\ltwo) \wedge \hdist(\lone, \ltwo) {\leq} \vard \Rightarrow \hdist(\loneout, \ltwoout) {\leq} \size(\lone) \\
\end{array}
\label{eq:hamming-prop-3}
\end{equation}
and the following properties for $\lsnoc$:
\begin{equation}
\begin{array}{c}
    \size(\lone) {=} \size(\ltwo) \wedge \hdist(\lone, \ltwo) {\leq} \vard \Rightarrow \hdist(\loneout, \ltwoout) {\leq} \size(\lone) {+} 1 \\
    \size(\lone) {=} \size(\ltwo) \wedge \hdist(\lone, \ltwo) {\leq} \vard \Rightarrow \hdist(\loneout, \ltwoout) {\leq} \vard {+} 1 \\
    \xone {=} \xtwo \wedge \size(\lone) {=} \size(\ltwo) \wedge \hdist(\lone, \ltwo) {\leq} \vard \Rightarrow \hdist(\loneout, \ltwoout) {\leq} \size(\lone) \\
    \xone {=} \xtwo \wedge \size(\lone) {=} \size(\ltwo) \wedge \hdist(\lone, \ltwo) {\leq} \vard \Rightarrow \hdist(\loneout, \ltwoout) {\leq} \vard \\
\end{array}
\label{eq:hamming-prop-4}
\end{equation}

For queries of the form $\loneout = f(\lone) \wedge \ltwoout = f(\ltwo)$, with $f \in \{\lreverse, \lstutter, \ltail\}$, we considered the following grammar:
\begin{equation}
\begin{array}{rcl}
    \nonprop  & :{=} & \tru \mid  \size(\lone) {=} \size(\ltwo)
                \wedge \hdist(\lone, \ltwo) {\leq} \vard \Rightarrow \hdist(\loneout, \ltwoout) {\leq} \nonsize \\
    \nonsize & :{=} &  \noncoeff *\size(\lone) {+} \noncoeff *\vard {+}  \noncoeff\\
    \noncoeff  & :{=} & -1 \mid 0 \mid 1 \mid 2 \\
\end{array}
\label{eq:hamming-grammar-2}
\end{equation}
In this case, \name synthesized the following sensitivity properties for $\lreverse$,
\begin{equation}
\begin{array}{c}
    \size(\lone) {=} \size(\ltwo) \wedge \hdist(\lone, \ltwo) {\leq} \vard \Rightarrow \hdist(\loneout, \ltwoout) {\leq} \size(\lone) \\
    \size(\lone) {=} \size(\ltwo) \wedge \hdist(\lone, \ltwo) {\leq} \vard \Rightarrow \hdist(\loneout, \ltwoout) {\leq} d \\
\end{array}
\label{eq:hamming-prop-5}
\end{equation}
the following properties for $\lstutter$,
\begin{equation}
\begin{array}{c}
    \size(\lone) {=} \size(\ltwo) \wedge \hdist(\lone, \ltwo) {\leq} \vard \Rightarrow \hdist(\loneout, \ltwoout) {\leq} 2\size(\lone) \\
    \size(\lone) {=} \size(\ltwo) \wedge \hdist(\lone, \ltwo) {\leq} \vard \Rightarrow \hdist(\loneout, \ltwoout) {\leq} 2d \\
\end{array}
\label{eq:hamming-prop-6}
\end{equation}
and the following properties for $\ltail$: 
\begin{equation}
\begin{array}{c}
    \size(\lone) {=} \size(\ltwo) \wedge \hdist(\lone, \ltwo) {\leq} \vard \Rightarrow \hdist(\loneout, \ltwoout) {\leq} \size(\lone) {-} 1\\
    \size(\lone) {=} \size(\ltwo) \wedge \hdist(\lone, \ltwo) {\leq} \vard \Rightarrow \hdist(\loneout, \ltwoout) {\leq} d \\
\end{array}
\label{eq:hamming-prop-7}
\end{equation}

Properties for $\loneout = \ldelete(\lcons(\lone, \x), \x) \wedge \ltwoout = \ldelete(\lcons(\ltwo, \x), \x)$ are also expressed in the grammar Eq.~\eqref{eq:hamming-grammar-1}.
In this case, \name synthesized the following sensitivity properties: 
\begin{equation}
\begin{array}{c}
    \size(\lone) {=} \size(\ltwo) \wedge \hdist(\lone, \ltwo) {\leq} \vard \Rightarrow \hdist(\loneout, \ltwoout) {\leq} \size(\lone)\\
    \size(\lone) {=} \size(\ltwo) \wedge \hdist(\lone, \ltwo) {\leq} \vard \Rightarrow \hdist(\loneout, \ltwoout) {\leq} d \\
\end{array}
\label{eq:hamming-prop-8}
\end{equation}

For queries of the form  $\loneout = \lappend(\varl_{11}, \varl_{12})\wedge \ltwoout = \lappend(\varl_{21}, \varl_{22})$, we considered the following grammar:
\begin{equation}
\begin{array}{rcl}
    \nonprop  & :{=} & \tru \mid \size(\varl_{11}) {=} \size(\varl_{21}) \wedge \size(\varl_{12}) {=} \size(\varl_{22})
                \wedge \hdist(\varl_{11}, \varl_{21}) {\leq} \done  \\
                & & \qquad \wedge\, \hdist(\varl_{11}, \varl_{21}) {\leq} \dtwo \Rightarrow \hdist(\loneout, \ltwoout) {\leq} \nonsize \\
    \nonsize & :{=} &  \noncoeff *\size(\varl_{11}) {+} \noncoeff *\size(\varl_{12}) {+} \noncoeff* \done {+} \noncoeff* \dtwo {+}  \noncoeff\\
    \noncoeff  & :{=} & -1 \mid 0 \mid 1 \mid 2 \\
\end{array}
\label{eq:hamming-grammar-3}
\end{equation}
In this case, \name synthesized the following sensitivity properties:
\begin{equation}
\begin{array}{c}
\psi \wedge\, \hdist(\varl_{11}, \varl_{21}) {\leq} \dtwo \Rightarrow \hdist(\loneout, \ltwoout) {\leq} \size(\varl_{11}) {+} \dtwo \\
\psi \wedge\, \hdist(\varl_{11}, \varl_{21}) {\leq} \dtwo \Rightarrow \hdist(\loneout, \ltwoout) {\leq} \size(\varl_{21}) {+} \done \\
\psi \wedge\, \hdist(\varl_{11}, \varl_{21}) {\leq} \dtwo \Rightarrow \hdist(\loneout, \ltwoout) {\leq} \size(\varl_{11}) {+} \size(\varl_{21}) \\
\psi \wedge\, \hdist(\varl_{11}, \varl_{21}) {\leq} \dtwo \Rightarrow \hdist(\loneout, \ltwoout) {\leq} \done {+} \dtwo \\
\end{array}
\label{eq:hamming-prop-9}
\end{equation}
where $\psi := \size(\varl_{11}) {=} \size(\varl_{21}) \wedge \size(\varl_{12}) {=} \size(\varl_{22}) \wedge \hdist(\varl_{11}, \varl_{21}) {\leq} \done \wedge \hdist(\varl_{11}, \varl_{21}) {\leq} \dtwo$.

\subsubsection{Edit Distance}

For queries of the form $\loneout = f(\lone, \xone)$ and $\ltwoout = f(\ltwo, \xtwo)$, with $f \in \{\lcons, \ldelete, \ldeleteall, \lsnoc\}$, we considered the following grammar:
\begin{equation}
\begin{array}{rcl}
    \nonprop  & :{=} & \tru \mid \nonguard \wedge \edist(\lone, \ltwo) {\leq} \vard \Rightarrow \edist(\loneout, \ltwoout) {\leq} \nonsize \\
    \nonguard & :{=} & \top \mid \xone ~\{ {=} \mid {\neq} \}~ \xtwo \\
    \nonsize & :{=} &  \noncoeff *\size(\lone) {+} \noncoeff *\size(\ltwo) {+} \noncoeff *\vard {+}  \noncoeff\\
    \noncoeff  & :{=} & -1 \mid 0 \mid 1 \mid 2 \\
\end{array}
\label{eq:edit-grammar-1}
\end{equation}
In this case, \name synthesized the following sensitivity properties for $\lcons$:
\begin{equation}
\begin{array}{c}
    \tru
\end{array}
\label{eq:edit-prop-1}
\end{equation}
the following properties for $\ldelete$,
\begin{equation}
\begin{array}{c}
    \edist(\lone, \ltwo) {\leq} \vard \Rightarrow \edist(\loneout, \ltwoout) {\leq} \size(\lone) + \size(\ltwo) \\
    \edist(\lone, \ltwo) {\leq} \vard \Rightarrow \edist(\loneout, \ltwoout) {\leq} \size(\lone) + d \\
    \edist(\lone, \ltwo) {\leq} \vard \Rightarrow \edist(\loneout, \ltwoout) {\leq} \size(\ltwo) + d \\
    \edist(\lone, \ltwo) {\leq} \vard \Rightarrow \edist(\loneout, \ltwoout) {\leq} d + 2 \\
    \xone = \xtwo \wedge \edist(\lone, \ltwo) {\leq} \vard \Rightarrow \edist(\loneout, \ltwoout) {\leq} \size(\lone) - \size(\ltwo) + 2d \\
    \xone = \xtwo \wedge \edist(\lone, \ltwo) {\leq} \vard \Rightarrow \edist(\loneout, \ltwoout) {\leq} - \size(\lone) + \size(\ltwo) + 2d \\
    \xone = \xtwo \wedge \edist(\lone, \ltwo) {\leq} \vard \Rightarrow \edist(\loneout, \ltwoout) {\leq} d + 1 \\
\end{array}
\label{eq:edit-prop-2}
\end{equation}
the following properties for $\ldeleteall$, 
\begin{equation}
\begin{array}{c}
    \edist(\lone, \ltwo) {\leq} \vard \Rightarrow \edist(\loneout, \ltwoout) {\leq} \size(\lone) + \size(\ltwo) \\
    \edist(\lone, \ltwo) {\leq} \vard \Rightarrow \edist(\loneout, \ltwoout) {\leq} \size(\lone) + d \\
    \edist(\lone, \ltwo) {\leq} \vard \Rightarrow \edist(\loneout, \ltwoout) {\leq} \size(\ltwo) + d \\
    \xone = \xtwo \wedge \edist(\lone, \ltwo) {\leq} \vard \Rightarrow \edist(\loneout, \ltwoout) {\leq} d \\
\end{array}
\label{eq:edit-prop-3}
\end{equation}
and the following properties for $\lsnoc$:
\begin{equation}
\begin{array}{c}
    \xone = \xtwo \wedge \edist(\lone, \ltwo) {\leq} \vard \Rightarrow \edist(\loneout, \ltwoout) {\leq} \vard \\
    \xone = \xtwo \wedge \edist(\lone, \ltwo) {\leq} \vard \Rightarrow \edist(\loneout, \ltwoout) {\leq} \size(\lone) + \size(\ltwo) \\
    \edist(\lone, \ltwo) {\leq} \vard \Rightarrow \edist(\loneout, \ltwoout) {\leq} \vard + 1 \\
    \edist(\lone, \ltwo) {\leq} \vard \Rightarrow \edist(\loneout, \ltwoout) {\leq} \size(\lone) + \size(\ltwo) + 1 \\
\end{array}
\label{eq:edit-prop-4}
\end{equation}

For queries of the form  $\loneout = f(\lone)\wedge\ltwoout = f(\ltwo)$, with $f \in \{\lreverse, \lstutter, \ltail\}$, we considered the following grammar:
\begin{equation}
\begin{array}{rcl}
    \nonprop  & :{=} & \tru \mid \edist(\lone, \ltwo) {\leq} \vard \Rightarrow \edist(\loneout, \ltwoout) {\leq} \nonsize \\
    \nonsize & :{=} &  \noncoeff *\size(\lone) {+} \noncoeff *\size(\ltwo) {+} \noncoeff *\vard {+}  \noncoeff\\
    \noncoeff  & :{=} & -1 \mid 0 \mid 1 \mid 2 \\
\end{array}
\label{eq:edit-grammar-2}
\end{equation}
In this case, \name synthesized the following sensitivity properties for $\lreverse$,
\begin{equation}
\begin{array}{c}
    \edist(\lone, \ltwo) {\leq} \vard \Rightarrow \edist(\loneout, \ltwoout) {\leq} \size(\lone) + \size(\ltwo) \\
    \edist(\lone, \ltwo) {\leq} \vard \Rightarrow \edist(\loneout, \ltwoout) {\leq} d \\
\end{array}
\label{eq:edit-prop-5}
\end{equation}
the following properties for $\lstutter$,
\begin{equation}
\begin{array}{c}
    \edist(\lone, \ltwo) {\leq} \vard \Rightarrow \edist(\loneout, \ltwoout) {\leq} 2 \size(\lone) + 2 \size(\ltwo) \\
    \edist(\lone, \ltwo) {\leq} \vard \Rightarrow \edist(\loneout, \ltwoout) {\leq} 2 d \\
\end{array}
\label{eq:edit-prop-6}
\end{equation}
and the following properties for $\ltail$: 
\begin{equation}
\begin{array}{c}
    \edist(\lone, \ltwo) {\leq} \vard \Rightarrow \edist(\loneout, \ltwoout) {\leq} \size(\lone) + \size(\ltwo) - 1 \\
    \edist(\lone, \ltwo) {\leq} \vard \Rightarrow \edist(\loneout, \ltwoout) {\leq} d \\
\end{array}
\label{eq:edit-prop-7}
\end{equation}

Properties for $\loneout = \ldelete(\lcons(\lone, \x), \x)\wedge \ltwoout = \ldelete(\lcons(\ltwo, \x), \x)$ are also expressed in the grammar Eq.~\eqref{eq:edit-grammar-1}.
In this case, \name synthesized the following sensitivity properties:
\begin{equation}
\begin{array}{c}
    \edist(\lone, \ltwo) {\leq} \vard \Rightarrow \edist(\loneout, \ltwoout) {\leq} \size(\lone) + \size(\ltwo) \\
    \xone = \xtwo \wedge \edist(\lone, \ltwo) {\leq} \vard \Rightarrow \edist(\loneout, \ltwoout) {\leq} d \\
\end{array}
\label{eq:edit-prop-8}
\end{equation}

For queries of the form  $\loneout = \lappend(\varl_{11}, \varl_{12})\wedge\ltwoout = \lappend(\varl_{21}, \varl_{22})$, we considered the following grammar:
\begin{equation}
\begin{array}{rcl}
    \nonprop  & :{=} & \tru \mid \edist(\varl_{11}, \varl_{21}) {\leq} \done \wedge \edist(\varl_{12}, \varl_{22}) {\leq} \dtwo \Rightarrow \edist(\loneout, \ltwoout) {\leq} \nonsize \\
    \nonsize & :{=} &  \noncoeff* \size(\varl_{11}) {+} \noncoeff *\size(\varl_{12}) {+} \noncoeff *\size(\varl_{21}) {+} \noncoeff *\size(\varl_{22}) {+} \noncoeff *\done {+} \noncoeff *\dtwo {+}  \noncoeff\\
    \noncoeff  & :{=} & -1 \mid 0 \mid 1 \mid 2 \\
\end{array}
\label{eq:edit-grammar-3}
\end{equation}
In this case, \name synthesized the following sensitivity properties:
\begin{equation}
\begin{array}{c}
    \edist(\varl_{11}, \varl_{21}) {\leq} \done \wedge \edist(\varl_{11}, \varl_{21}) {\leq} \dtwo \Rightarrow \edist(\loneout, \ltwoout) {\leq} \done {+}  \dtwo \\
    \edist(\varl_{11}, \varl_{21}) {\leq} \done \wedge \edist(\varl_{11}, \varl_{21}) {\leq} \dtwo \Rightarrow \edist(\loneout, \ltwoout) {\leq} \size(\varl_{12}) {+}  \size(\varl_{22}) {+}  \done \\
    \edist(\varl_{11}, \varl_{21}) {\leq} \done \wedge \edist(\varl_{11}, \varl_{21}) {\leq} \dtwo \Rightarrow \edist(\loneout, \ltwoout) {\leq} \size(\varl_{11}) {+}  \size(\varl_{21}) {+}  \dtwo \\
    \edist(\varl_{11}, \varl_{21}) {\leq} \done \wedge \edist(\varl_{11}, \varl_{21}) {\leq} \dtwo \Rightarrow \edist(\loneout, \ltwoout) {\leq} \size(\varl_{11}) {+} \size(\varl_{12}) {+}  \size(\varl_{21}) {+}  \size(\varl_{22})  \\
\end{array}
\label{eq:edit-prop-9}
\end{equation}

\subsection{Bit-Vector Polyhedra Benchmarks}
\label{App:BitVectorPolyhedraBenchmarks}

We considered the following 9 queries, showcasing a variety of behaviors, when synthesizing \lconjunctions of linear bit-vector inequalities:
\begin{equation}
\hspace{-2mm}
\begin{array}{lcl}
    \bvsquare & {:} & y = x * x  \\
    \bvcube & {:} & y = x * x * x \\
    \bvhalf & {:} & y = x / 2 \\
    \bvsquareineq & {:} & y \leq x * x \\
    \bvconstneq & {:} & y \neq 7 \\
    \bvconj & {:} & (x + y + 4 \leq x + 15y + 7) \wedge (2x + 14y + 3 \leq x + 15y + 7) \\
    \bvsinglepoint & {:} & (x = 3 \wedge y=7) \\
    \bvfourpoints & {:} & (x = 3 \wedge y=7) \vee (x=9 \wedge y=10) \vee  (x=6 \wedge y=1) \vee (x=0 \wedge y=15) \\
    \bvdisj & {:} & (x + y + 4 \leq x + 15y + 7) \vee (2x + 14y + 3 \leq x + 15y + 7)
\end{array}
\label{eq:bv-problems}
\end{equation}
For each query, we used the following grammar:
\begin{equation}
\begin{array}{rcl}
    \nonprop  & :{=} & \noncoeff x {+} \noncoeff y {+} \noncoeff {\leq} \noncoeff x {+} \noncoeff y {+} \noncoeff  \\
    \noncoeff  & :{=} & 0 \mid 1 \mid \ldots \mid 15 \\
\end{array}
\label{eq:bv-grammar}
\end{equation}

\begin{figure}[tb!]
\centering
\begin{spacing}{1.2}
\begin{tabular}{cc}
\resizebox{0.4\textwidth}{!}{
  \begin{tabular}{r|*{16}{c}|}
    \hhline{~----------------}
    15&\cellcolor{dgreen}&\cellcolor{dgreen}&\cellcolor{dgreen}&\cellcolor{dgreen}& & &\cellcolor{dgreen}&\cellcolor{dgreen}& & & & & & &\cellcolor{dgreen}&\cellcolor{dgreen} \\
    14&\cellcolor{dgreen}&\cellcolor{dgreen}&\cellcolor{dgreen}& & &\cellcolor{dgreen}&\cellcolor{dgreen}& & & & & & &\cellcolor{dgreen}&\cellcolor{dgreen}&\cellcolor{dgreen} \\
    13&\cellcolor{dgreen}&\cellcolor{dgreen}& & &\cellcolor{dgreen}&\cellcolor{dgreen}& & & & & & &\cellcolor{dgreen}&\cellcolor{dgreen}&\cellcolor{dgreen}&\cellcolor{dgreen} \\
    12&\cellcolor{dgreen}& & &\cellcolor{dgreen}&\cellcolor{dgreen}& & & & & & &\cellcolor{dgreen}&\cellcolor{dgreen}&\cellcolor{dgreen}&\cellcolor{dgreen}&\cellcolor{dgreen} \\
    11& & &\cellcolor{dgreen}&\cellcolor{dgreen}& & & & & & &\cellcolor{dgreen}&\cellcolor{dgreen}&\cellcolor{dgreen}&\cellcolor{dgreen}&\cellcolor{dgreen}&\cellcolor{dgreen} \\
    10& &\cellcolor{dgreen}&\cellcolor{dgreen}& & & & & & &\cellcolor{dgreen}&\cellcolor{dgreen}&\cellcolor{dgreen}&\cellcolor{dgreen}&\cellcolor{dgreen}&\cellcolor{dgreen}&  \\
     9&\cellcolor{dgreen}&\cellcolor{dgreen}& & & & & & &\cellcolor{dgreen}&\cellcolor{dgreen}&\cellcolor{dgreen}&\cellcolor{dgreen}&\cellcolor{dgreen}&\cellcolor{dgreen}& &  \\
     8&\cellcolor{dgreen}& & & & & & &\cellcolor{dgreen}&\cellcolor{dgreen}&\cellcolor{dgreen}&\cellcolor{dgreen}&\cellcolor{dgreen}&\cellcolor{dgreen}& & &\cellcolor{dgreen} \\
     7& & & & & & &\cellcolor{dgreen}&\cellcolor{dgreen}&\cellcolor{dgreen}&\cellcolor{dgreen}&\cellcolor{dgreen}&\cellcolor{dgreen}& & &\cellcolor{dgreen}&\cellcolor{dgreen} \\
     6& & & & & &\cellcolor{dgreen}&\cellcolor{dgreen}&\cellcolor{dgreen}&\cellcolor{dgreen}&\cellcolor{dgreen}&\cellcolor{dgreen}& & &\cellcolor{dgreen}&\cellcolor{dgreen}&  \\
     5& & & & &\cellcolor{dgreen}&\cellcolor{dgreen}&\cellcolor{dgreen}&\cellcolor{dgreen}&\cellcolor{dgreen}&\cellcolor{dgreen}& & &\cellcolor{dgreen}&\cellcolor{dgreen}& &  \\
     4& & & &\cellcolor{dgreen}&\cellcolor{dgreen}&\cellcolor{dgreen}&\cellcolor{dgreen}&\cellcolor{dgreen}&\cellcolor{dgreen}& & &\cellcolor{dgreen}&\cellcolor{dgreen}& & &  \\
     3& & &\cellcolor{dgreen}&\cellcolor{dgreen}&\cellcolor{dgreen}&\cellcolor{dgreen}&\cellcolor{dgreen}&\cellcolor{dgreen}& & &\cellcolor{dgreen}&\cellcolor{dgreen}& & & &  \\
     2& &\cellcolor{dgreen}&\cellcolor{dgreen}&\cellcolor{dgreen}&\cellcolor{dgreen}&\cellcolor{dgreen}&\cellcolor{dgreen}& & &\cellcolor{dgreen}&\cellcolor{dgreen}& & & & &  \\
     1&\cellcolor{dgreen}&\cellcolor{dgreen}&\cellcolor{dgreen}&\cellcolor{dgreen}&\cellcolor{dgreen}&\cellcolor{dgreen}& & &\cellcolor{dgreen}&\cellcolor{dgreen}& & & & & &  \\
     0&\cellcolor{dgreen}&\cellcolor{dgreen}&\cellcolor{dgreen}&\cellcolor{dgreen}&\cellcolor{dgreen}& & &\cellcolor{dgreen}&\cellcolor{dgreen}& & & & & & &\cellcolor{dgreen} \\
    \hhline{~----------------}
    \multicolumn{1}{c}{ }&0&1&2&3&4&5&6&7&8&9&\hspace{-4pt}10\hspace{-4pt}&\hspace{-4pt}11\hspace{-4pt}&\hspace{-4pt}12\hspace{-4pt}&\hspace{-4pt}13\hspace{-4pt}&\hspace{-4pt}14\hspace{-4pt}&\multicolumn{1}{c}{\hspace{-4pt}15\hspace{-4pt}}
  \end{tabular}
}
&
\resizebox{0.4\textwidth}{!}{
  \begin{tabular}{r|*{16}{c}|}
    \hhline{~----------------}
    15&\cellcolor{dgreen}&\cellcolor{dgreen}&\cellcolor{dgreen}&\cellcolor{dgreen}&\cellcolor{dgreen}&\cellcolor{dgreen}&\cellcolor{dgreen}&\cellcolor{dgreen}& & & & & &\cellcolor{dgreen}&\cellcolor{dgreen}&\cellcolor{dgreen} \\
    14&\cellcolor{dgreen}&\cellcolor{dgreen}&\cellcolor{dgreen}&\cellcolor{dgreen}&\cellcolor{dgreen}&\cellcolor{dgreen}&\cellcolor{dgreen}& & & & & & & &\cellcolor{dgreen}&\cellcolor{dgreen} \\
    13&\cellcolor{dgreen}&\cellcolor{dgreen}&\cellcolor{dgreen}&\cellcolor{dgreen}&\cellcolor{dgreen}&\cellcolor{dgreen}& & & & & & & & & &\cellcolor{dgreen} \\
    12&\cellcolor{dgreen}&\cellcolor{dgreen}&\cellcolor{dgreen}&\cellcolor{dgreen}&\cellcolor{dgreen}& & & & & & & & & & &  \\
    11& &\cellcolor{dgreen}&\cellcolor{dgreen}&\cellcolor{dgreen}& & & & & & & & & & & &  \\
    10& & &\cellcolor{dgreen}& & & & & & & & & & & & &  \\
     9&\cellcolor{dgreen}&\cellcolor{dgreen}& &\cellcolor{dgreen}&\cellcolor{dgreen}&\cellcolor{dgreen}&\cellcolor{dgreen}&\cellcolor{dgreen}&\cellcolor{dgreen}&\cellcolor{dgreen}&\cellcolor{dgreen}&\cellcolor{dgreen}&\cellcolor{dgreen}&\cellcolor{dgreen}&\cellcolor{dgreen}&\cellcolor{dgreen} \\
     8&\cellcolor{dgreen}& & & &\cellcolor{dgreen}&\cellcolor{dgreen}&\cellcolor{dgreen}&\cellcolor{dgreen}&\cellcolor{dgreen}&\cellcolor{dgreen}&\cellcolor{dgreen}&\cellcolor{dgreen}&\cellcolor{dgreen}&\cellcolor{dgreen}&\cellcolor{dgreen}&\cellcolor{dgreen} \\
     7& & & & & &\cellcolor{dgreen}&\cellcolor{dgreen}&\cellcolor{dgreen}&\cellcolor{dgreen}&\cellcolor{dgreen}&\cellcolor{dgreen}&\cellcolor{dgreen}&\cellcolor{dgreen}&\cellcolor{dgreen}&\cellcolor{dgreen}&\cellcolor{dgreen} \\
     6& & & & & & &\cellcolor{dgreen}&\cellcolor{dgreen}&\cellcolor{dgreen}&\cellcolor{dgreen}&\cellcolor{dgreen}&\cellcolor{dgreen}&\cellcolor{dgreen}&\cellcolor{dgreen}&\cellcolor{dgreen}&  \\
     5& & & & & & & &\cellcolor{dgreen}&\cellcolor{dgreen}&\cellcolor{dgreen}&\cellcolor{dgreen}&\cellcolor{dgreen}&\cellcolor{dgreen}&\cellcolor{dgreen}& &  \\
     4& & & & & & & & &\cellcolor{dgreen}&\cellcolor{dgreen}&\cellcolor{dgreen}&\cellcolor{dgreen}&\cellcolor{dgreen}& & &  \\
     3& & & & & & & & & &\cellcolor{dgreen}&\cellcolor{dgreen}&\cellcolor{dgreen}& & & &  \\
     2& & & & & & & & & & &\cellcolor{dgreen}& & & & &  \\
     1&\cellcolor{dgreen}&\cellcolor{dgreen}&\cellcolor{dgreen}&\cellcolor{dgreen}&\cellcolor{dgreen}&\cellcolor{dgreen}&\cellcolor{dgreen}&\cellcolor{dgreen}&\cellcolor{dgreen}&\cellcolor{dgreen}& &\cellcolor{dgreen}&\cellcolor{dgreen}&\cellcolor{dgreen}&\cellcolor{dgreen}&\cellcolor{dgreen} \\
     0&\cellcolor{dgreen}&\cellcolor{dgreen}&\cellcolor{dgreen}&\cellcolor{dgreen}&\cellcolor{dgreen}&\cellcolor{dgreen}&\cellcolor{dgreen}&\cellcolor{dgreen}&\cellcolor{dgreen}& & & &\cellcolor{dgreen}&\cellcolor{dgreen}&\cellcolor{dgreen}&\cellcolor{dgreen} \\
    \hhline{~----------------}
    \multicolumn{1}{c}{ }&0&1&2&3&4&5&6&7&8&9&\hspace{-4pt}10\hspace{-4pt}&\hspace{-4pt}11\hspace{-4pt}&\hspace{-4pt}12\hspace{-4pt}&\hspace{-4pt}13\hspace{-4pt}&\hspace{-4pt}14\hspace{-4pt}&\multicolumn{1}{c}{\hspace{-4pt}15\hspace{-4pt}}
  \end{tabular}
}
\\
{\small $2 * x + 14 * y + 3 \le x + 15 * y + 7$} & {\small $x + y + 4 \le x + 15 * y + 7$} \\
{\small (a)} & {\small (b)}
\\
\\
\resizebox{0.4\textwidth}{!}{
  \begin{tabular}{r|*{16}{c}|}
    \hhline{~----------------}
    15&\cellcolor{dgreen}&\cellcolor{dgreen}&\cellcolor{dgreen}&\cellcolor{dgreen}&\cellcolor{dgreen}&\cellcolor{dgreen}&\cellcolor{dgreen}&\cellcolor{dgreen}& & & & & &\cellcolor{dgreen}&\cellcolor{dgreen}&\cellcolor{dgreen} \\
    14&\cellcolor{dgreen}&\cellcolor{dgreen}&\cellcolor{dgreen}&\cellcolor{dgreen}&\cellcolor{dgreen}&\cellcolor{dgreen}&\cellcolor{dgreen}& & & & & & &\cellcolor{dgreen}&\cellcolor{dgreen}&\cellcolor{dgreen} \\
    13&\cellcolor{dgreen}&\cellcolor{dgreen}&\cellcolor{dgreen}&\cellcolor{dgreen}&\cellcolor{dgreen}&\cellcolor{dgreen}& & & & & & &\cellcolor{dgreen}&\cellcolor{dgreen}&\cellcolor{dgreen}&\cellcolor{dgreen} \\
    12&\cellcolor{dgreen}&\cellcolor{dgreen}&\cellcolor{dgreen}&\cellcolor{dgreen}&\cellcolor{dgreen}& & & & & & &\cellcolor{dgreen}&\cellcolor{dgreen}&\cellcolor{dgreen}&\cellcolor{dgreen}&\cellcolor{dgreen} \\
    11& &\cellcolor{dgreen}&\cellcolor{dgreen}&\cellcolor{dgreen}& & & & & & &\cellcolor{dgreen}&\cellcolor{dgreen}&\cellcolor{dgreen}&\cellcolor{dgreen}&\cellcolor{dgreen}&\cellcolor{dgreen} \\
    10& &\cellcolor{dgreen}&\cellcolor{dgreen}& & & & & & &\cellcolor{dgreen}&\cellcolor{dgreen}&\cellcolor{dgreen}&\cellcolor{dgreen}&\cellcolor{dgreen}&\cellcolor{dgreen}&  \\
     9&\cellcolor{dgreen}&\cellcolor{dgreen}& &\cellcolor{dgreen}&\cellcolor{dgreen}&\cellcolor{dgreen}&\cellcolor{dgreen}&\cellcolor{dgreen}&\cellcolor{dgreen}&\cellcolor{dgreen}&\cellcolor{dgreen}&\cellcolor{dgreen}&\cellcolor{dgreen}&\cellcolor{dgreen}&\cellcolor{dgreen}&\cellcolor{dgreen} \\
     8&\cellcolor{dgreen}& & & &\cellcolor{dgreen}&\cellcolor{dgreen}&\cellcolor{dgreen}&\cellcolor{dgreen}&\cellcolor{dgreen}&\cellcolor{dgreen}&\cellcolor{dgreen}&\cellcolor{dgreen}&\cellcolor{dgreen}&\cellcolor{dgreen}&\cellcolor{dgreen}&\cellcolor{dgreen} \\
     7& & & & & &\cellcolor{dgreen}&\cellcolor{dgreen}&\cellcolor{dgreen}&\cellcolor{dgreen}&\cellcolor{dgreen}&\cellcolor{dgreen}&\cellcolor{dgreen}&\cellcolor{dgreen}&\cellcolor{dgreen}&\cellcolor{dgreen}&\cellcolor{dgreen} \\
     6& & & & & &\cellcolor{dgreen}&\cellcolor{dgreen}&\cellcolor{dgreen}&\cellcolor{dgreen}&\cellcolor{dgreen}&\cellcolor{dgreen}&\cellcolor{dgreen}&\cellcolor{dgreen}&\cellcolor{dgreen}&\cellcolor{dgreen}&  \\
     5& & & & &\cellcolor{dgreen}&\cellcolor{dgreen}&\cellcolor{dgreen}&\cellcolor{dgreen}&\cellcolor{dgreen}&\cellcolor{dgreen}&\cellcolor{dgreen}&\cellcolor{dgreen}&\cellcolor{dgreen}&\cellcolor{dgreen}& &  \\
     4& & & &\cellcolor{dgreen}&\cellcolor{dgreen}&\cellcolor{dgreen}&\cellcolor{dgreen}&\cellcolor{dgreen}&\cellcolor{dgreen}&\cellcolor{dgreen}&\cellcolor{dgreen}&\cellcolor{dgreen}&\cellcolor{dgreen}& & &  \\
     3& & &\cellcolor{dgreen}&\cellcolor{dgreen}&\cellcolor{dgreen}&\cellcolor{dgreen}&\cellcolor{dgreen}&\cellcolor{dgreen}& &\cellcolor{dgreen}&\cellcolor{dgreen}&\cellcolor{dgreen}& & & &  \\
     2& &\cellcolor{dgreen}&\cellcolor{dgreen}&\cellcolor{dgreen}&\cellcolor{dgreen}&\cellcolor{dgreen}&\cellcolor{dgreen}& & &\cellcolor{dgreen}&\cellcolor{dgreen}& & & & &  \\
     1&\cellcolor{dgreen}&\cellcolor{dgreen}&\cellcolor{dgreen}&\cellcolor{dgreen}&\cellcolor{dgreen}&\cellcolor{dgreen}&\cellcolor{dgreen}&\cellcolor{dgreen}&\cellcolor{dgreen}&\cellcolor{dgreen}& &\cellcolor{dgreen}&\cellcolor{dgreen}&\cellcolor{dgreen}&\cellcolor{dgreen}&\cellcolor{dgreen} \\
     0&\cellcolor{dgreen}&\cellcolor{dgreen}&\cellcolor{dgreen}&\cellcolor{dgreen}&\cellcolor{dgreen}&\cellcolor{dgreen}&\cellcolor{dgreen}&\cellcolor{dgreen}&\cellcolor{dgreen}& & & &\cellcolor{dgreen}&\cellcolor{dgreen}&\cellcolor{dgreen}&\cellcolor{dgreen} \\
    \hhline{~----------------}
    \multicolumn{1}{c}{ }&0&1&2&3&4&5&6&7&8&9&\hspace{-4pt}10\hspace{-4pt}&\hspace{-4pt}11\hspace{-4pt}&\hspace{-4pt}12\hspace{-4pt}&\hspace{-4pt}13\hspace{-4pt}&\hspace{-4pt}14\hspace{-4pt}&\multicolumn{1}{c}{\hspace{-4pt}15\hspace{-4pt}}
  \end{tabular}
}
&
\resizebox{0.4\textwidth}{!}{
  \begin{tabular}{r|*{16}{c}|}
    \hhline{~----------------}
    15&\cellcolor{dgreen}&\cellcolor{dgreen}&\cellcolor{dgreen}&\cellcolor{dgreen}&\cellcolor{dgreen}&\cellcolor{dgreen}&\cellcolor{dgreen}&\cellcolor{dgreen}& & \cellcolor{dred}& \cellcolor{dred}& \cellcolor{dred}& \cellcolor{dred}&\cellcolor{dgreen}&\cellcolor{dgreen}&\cellcolor{dgreen} \\
    14&\cellcolor{dgreen}&\cellcolor{dgreen}&\cellcolor{dgreen}&\cellcolor{dgreen}&\cellcolor{dgreen}&\cellcolor{dgreen}&\cellcolor{dgreen}& & \cellcolor{dred}& \cellcolor{dred}& \cellcolor{dred}& \cellcolor{dred}& \cellcolor{dred}&\cellcolor{dgreen}&\cellcolor{dgreen}&\cellcolor{dgreen} \\
    13&\cellcolor{dgreen}&\cellcolor{dgreen}&\cellcolor{dgreen}&\cellcolor{dgreen}&\cellcolor{dgreen}&\cellcolor{dgreen}& & \cellcolor{dred}& \cellcolor{dred}& \cellcolor{dred}& \cellcolor{dred}& \cellcolor{dred}&\cellcolor{dgreen}&\cellcolor{dgreen}&\cellcolor{dgreen}&\cellcolor{dgreen} \\
    12&\cellcolor{dgreen}&\cellcolor{dgreen}&\cellcolor{dgreen}&\cellcolor{dgreen}&\cellcolor{dgreen}& & \cellcolor{dred}& \cellcolor{dred}& \cellcolor{dred}& \cellcolor{dred}& \cellcolor{dred}&\cellcolor{dgreen}&\cellcolor{dgreen}&\cellcolor{dgreen}&\cellcolor{dgreen}&\cellcolor{dgreen} \\
    11&\cellcolor{dred}&\cellcolor{dgreen}&\cellcolor{dgreen}&\cellcolor{dgreen}& & \cellcolor{dred}& \cellcolor{dred}& \cellcolor{dred}& \cellcolor{dred}& \cellcolor{dred}&\cellcolor{dgreen}&\cellcolor{dgreen}&\cellcolor{dgreen}&\cellcolor{dgreen}&\cellcolor{dgreen}&\cellcolor{dgreen} \\
    10&\cellcolor{dred}&\cellcolor{dgreen}&\cellcolor{dgreen}& & \cellcolor{dred}& \cellcolor{dred}& \cellcolor{dred}& \cellcolor{dred}& \cellcolor{dred}&\cellcolor{dgreen}&\cellcolor{dgreen}&\cellcolor{dgreen}&\cellcolor{dgreen}&\cellcolor{dgreen}&\cellcolor{dgreen}&\cellcolor{dred} \\
     9&\cellcolor{dgreen}&\cellcolor{dgreen}& &\cellcolor{dgreen}&\cellcolor{dgreen}&\cellcolor{dgreen}&\cellcolor{dgreen}&\cellcolor{dgreen}&\cellcolor{dgreen}&\cellcolor{dgreen}&\cellcolor{dgreen}&\cellcolor{dgreen}&\cellcolor{dgreen}&\cellcolor{dgreen}&\cellcolor{dgreen}&\cellcolor{dgreen} \\
     8&\cellcolor{dgreen}& & \cellcolor{dred}& \cellcolor{dred}&\cellcolor{dgreen}&\cellcolor{dgreen}&\cellcolor{dgreen}&\cellcolor{dgreen}&\cellcolor{dgreen}&\cellcolor{dgreen}&\cellcolor{dgreen}&\cellcolor{dgreen}&\cellcolor{dgreen}&\cellcolor{dgreen}&\cellcolor{dgreen}&\cellcolor{dgreen} \\
     7& & \cellcolor{dred}& \cellcolor{dred}& \cellcolor{dred}& \cellcolor{dred}&\cellcolor{dgreen}&\cellcolor{dgreen}&\cellcolor{dgreen}&\cellcolor{dgreen}&\cellcolor{dgreen}&\cellcolor{dgreen}&\cellcolor{dgreen}&\cellcolor{dgreen}&\cellcolor{dgreen}&\cellcolor{dgreen}&\cellcolor{dgreen} \\
     6& \cellcolor{dred}& \cellcolor{dred}& \cellcolor{dred}& \cellcolor{dred}& \cellcolor{dred}&\cellcolor{dgreen}&\cellcolor{dgreen}&\cellcolor{dgreen}&\cellcolor{dgreen}&\cellcolor{dgreen}&\cellcolor{dgreen}&\cellcolor{dgreen}&\cellcolor{dgreen}&\cellcolor{dgreen}&\cellcolor{dgreen}&  \\
     5& \cellcolor{dred}& \cellcolor{dred}& \cellcolor{dred}& \cellcolor{dred}&\cellcolor{dgreen}&\cellcolor{dgreen}&\cellcolor{dgreen}&\cellcolor{dgreen}&\cellcolor{dgreen}&\cellcolor{dgreen}&\cellcolor{dgreen}&\cellcolor{dgreen}&\cellcolor{dgreen}&\cellcolor{dgreen}& & \cellcolor{dred} \\
     4& \cellcolor{dred}& \cellcolor{dred}& \cellcolor{dred}&\cellcolor{dgreen}&\cellcolor{dgreen}&\cellcolor{dgreen}&\cellcolor{dgreen}&\cellcolor{dgreen}&\cellcolor{dgreen}&\cellcolor{dgreen}&\cellcolor{dgreen}&\cellcolor{dgreen}&\cellcolor{dgreen}& & \cellcolor{dred}& \cellcolor{dred} \\
     3& \cellcolor{dred}& \cellcolor{dred}&\cellcolor{dgreen}&\cellcolor{dgreen}&\cellcolor{dgreen}&\cellcolor{dgreen}&\cellcolor{dgreen}&\cellcolor{dgreen}& \cellcolor{dred}&\cellcolor{dgreen}&\cellcolor{dgreen}&\cellcolor{dgreen}& & \cellcolor{dred}& \cellcolor{dred}& \cellcolor{dred} \\
     2& \cellcolor{dred}&\cellcolor{dgreen}&\cellcolor{dgreen}&\cellcolor{dgreen}&\cellcolor{dgreen}&\cellcolor{dgreen}&\cellcolor{dgreen}& \cellcolor{dred}& \cellcolor{dred}&\cellcolor{dgreen}&\cellcolor{dgreen}& & \cellcolor{dred}& \cellcolor{dred}& \cellcolor{dred}& \cellcolor{dred} \\
     1&\cellcolor{dgreen}&\cellcolor{dgreen}&\cellcolor{dgreen}&\cellcolor{dgreen}&\cellcolor{dgreen}&\cellcolor{dgreen}&\cellcolor{dgreen}&\cellcolor{dgreen}&\cellcolor{dgreen}&\cellcolor{dgreen}& &\cellcolor{dgreen}&\cellcolor{dgreen}&\cellcolor{dgreen}&\cellcolor{dgreen}&\cellcolor{dgreen} \\
     0&\cellcolor{dgreen}&\cellcolor{dgreen}&\cellcolor{dgreen}&\cellcolor{dgreen}&\cellcolor{dgreen}&\cellcolor{dgreen}&\cellcolor{dgreen}&\cellcolor{dgreen}&\cellcolor{dgreen}& & \cellcolor{dred}& \cellcolor{dred}&\cellcolor{dgreen}&\cellcolor{dgreen}&\cellcolor{dgreen}&\cellcolor{dgreen} \\
    \hhline{~----------------}
    \multicolumn{1}{c}{ }&0&1&2&3&4&5&6&7&8&9&\hspace{-4pt}10\hspace{-4pt}&\hspace{-4pt}11\hspace{-4pt}&\hspace{-4pt}12\hspace{-4pt}&\hspace{-4pt}13\hspace{-4pt}&\hspace{-4pt}14\hspace{-4pt}&\multicolumn{1}{c}{\hspace{-4pt}15\hspace{-4pt}}
  \end{tabular}
}
\\
{\small $\begin{array}{r@{\hspace{1.0ex}}l}
       & (2 * x + 14 * y + 3 \le x + 15 * y + 7) \\
  \lor & (x + y + 4 \le x + 15 * y + 7)
\end{array}$}
&
{\small $\begin{array}{r@{\hspace{1.0ex}}l}
         &  \left(\begin{array}{@{\hspace{0ex}}r@{\hspace{1.0ex}}l@{\hspace{0ex}}}
                     & (2 * x + 14 * y + 3 \le x + 15 * y + 7) \\
              \sqcup & (x + y + 4 \le x + 15 * y + 7)
            \end{array}\right) \\
       = & 0 * x + 0 * y + 1 \le 7 * x + 9 * y + 1
\end{array}$}
\\
{\small (c)} & {\small (d)}
\end{tabular}
\end{spacing}
\caption{\label{Fi:BitVectorPolyhedraJoinExample}
{\small Each {\color{dgreen} green} and {\color{dred} red} cell represents a solution in $4$-bit unsigned modular arithmetic of the indicated formula.
In (d), the occurrences of {\color{dred} red} cells are points that do not satisfy (c), but are needed for a conjunctive formula to over-approximate (c).
}
}
\end{figure}

\mypar{Additional experiment}
Fig.\ \ref{Fi:BitVectorPolyhedraJoinExample} illustrates the application of join ($\sqcup$) to two formulas in $\mathcal{L}_\BVCONJ$.
Let $\psi_a, \psi_b \in \mathcal{L}_\BVCONJ$ be the formulas from Fig.\ \ref{Fi:BitVectorPolyhedraJoinExample}(a) and Fig.\ \ref{Fi:BitVectorPolyhedraJoinExample}(b), respectively.
Fig.\ \ref{Fi:BitVectorPolyhedraJoinExample}(c) shows the satisfying solutions of the disjunction $\psi_c =_{\textit{df}} \psi_a \lor \psi_b$.
Fig.\ \ref{Fi:BitVectorPolyhedraJoinExample}(d) shows the $\mathcal{L}_\BVCONJ$ formula $\psi_d$ computed by \name for $\psi_a \sqcup \psi_b$, as well as the satisfying solutions of $\psi_d$.
The join computation was performed by applying \name to $\psi_c$.
In this case, \name showed that the result could be expressed with a single inequality, namely, $1 \le 7x + 9y + 1$.

\subsection{Detailed Evaluation for each Applications}

Table~\ref{tab:benchmarks} shows how many times each query is performed for each benchmark in our first application in Section~\ref{se:specmining} and the total running time spent performing each type of query.
Table~\ref{tab:benchmarks-2} shows the same metrics for the other three applications.
``Last iter.''\ denotes the time taken to perform the final call to \cp from the final call on \synthproperty at line \ref{Li:CallSynthPropertyOne} of Alg.\ \ref{alg:SynthesizeAllProperties}.
That call on \synthproperty is the most difficult one because it implicitly establishes that the synthesized \lconjunction is indeed a best one.

In Table~\ref{tab:benchmarks}, for the benchmarks above the thick line, the semantics of the involved operations are expressible in SMT-Lib, whereas for the other benchmarks the semantics are expressed in \sketch.

\begin{table}[tp]
\caption{Evaluation results of \name for specification-mining. 
The Enum. column reports the estimated time required to run \cs for all formulas in the DSL $\lang$. This estimation is achieved by multiplying the size of the grammar by the average running time of the \cs.
A (*) indicates a timeout when attempting to prove precision in the last iteration.
In that case, we report as total time the time at which \name timed out. 
\label{tab:benchmarks}}
{\footnotesize
\setlength{\tabcolsep}{2pt}
\begin{tabular}{ccrrrrrrrrrr} 
\toprule[.1em]
\multicolumn{2}{c}{\multirow{2}{*}[-0.4ex]{Problem}}
& \multicolumn{1}{c}{\multirow{2}{*}{$|\lang|$}}
    & \multicolumn{2}{c}{\synth} & \multicolumn{2}{c}{\cs} & \multicolumn{2}{c}{\cp} & Last Iter. & Enum. & Total \\
    \cmidrule{4-12}
    & & & Num & T(sec) & Num & T(sec) & Num & T(sec) & T(sec) & T(sec) & T(sec) \\
\midrule[.1em]
\parbox[t]{2mm}{\multirow{8}{*}{\rotatebox[origin=c]{90}{\sygus}}}
    & \maxtwo & $1.57 \cdot 10^5$ & 6 & 0.13 & 12 & 0.06 & 9 & 1.37 & 0.05 & 787.32 & 1.55 \\
    & \maxthree & $8.85 \cdot 10^5$ & 19 & 13.83 & 48 & 3.53 & 36 & 131.44 & 0.46 & $6.51 \cdot 10^4$ & 148.79 \\
    & \maxfour & $5.06 \cdot 10^8$ & - & - & - & - & - & - & - & - & - \\
    & \diff & $5.66 \cdot 10^7$ & 18 & 19.41 & 41 & 1.06 & 28 & 236.61 & 0.48 & $1.46 \cdot 10^6$ & 257.08 \\
    & \diff, \diff & $3.73 \cdot 10^5$ & - & - & - & - & - & - & - & - & - \\
    & \arraytwo & $3.37 \cdot 10^6$ & 16 & 3.44 & 41 & 1.44 & 31 & 60.97 & 0.28 & $1.19 \cdot 10^5$ & 65.84 \\
    & \arraythree & $3.66 \cdot 10^9$ & - & - & - & - & - & - & - & - & - \\
\cmidrule{1-12}
\parbox[t]{2mm}{\multirow{3}{*}{\rotatebox[origin=c]{90}{LIA}}}
    & \iaabs\xspace (Eq.~\eqref{eq:arith-grammar-1}) & $3.37 \cdot 10^6$ & 6 & 1.14 & 14 & 0.76 & 11 & 10.16 & 0.38 & $1.83 \cdot 10^5$ & 12.05 \\
    & \iaabs\xspace (Eq.~\eqref{eq:arith-grammar-2}) & $2.90 \cdot 10^{11}$ & - & - & - & - & - & - & - & - & - \\
    & \iaabs\xspace (Eq.~\eqref{eq:arith-grammar-3}) & $\infty$ & 22 & 3.98 & 23 & 0.15 & 3 & 0.67 & 0.70 & $\infty$ & 4.80 \\
\midrule[.2em]
\parbox[t]{2mm}{\multirow{14}{*}{\rotatebox[origin=c]{90}{List}}}
    & \lappend & $4.97 \cdot 10^8$ & 29 & 46.62 & 68 & 81.50 & 43 & 91.46 & 57.59 & $5.95 \cdot 10^8$ & 219.58 \\
    & \ldelete & $2.99 \cdot 10^6$ & 22 & 33.91 & 64 & 72.05 & 50 & 100.63 & 9.10 & $3.36 \cdot 10^6$ & 206.60 \\
    & \ldeleteall & $2.99 \cdot 10^6$ & 24 & 39.49 & 66 & 87.15 & 49 & 107.07 & 18.41 & $3.94 \cdot 10^6$ & 233.71 \\
    & \ldrop & $4.75 \cdot 10^8$ & 26 & 38.55 & 60 & 66.00 & 37 & 71.48 & 45.89 & $5.22 \cdot 10^8$ & 176.03 \\
    & \lelem & 4,096 & 8 & 9.05 & 18 & 17.36 & 12 & 15.71 & 7.04 & $3950.36$ & 42.12 \\
    & \lelemidx & $8.74 \cdot 10^6$ & 22 & 28.62 & 53 & 52.87 & 35 & 53.96 & 17.22 & $8.72 \cdot 10^6$ & 135.45 \\
    & \lith & $1.91 \cdot 10^6$ & 21 & 28.72 & 46 & 45.41 & 29 & 47.21 & 23.40 & $1.88 \cdot 10^6$ & 121.34 \\
    & \lmin & $2.38 \cdot 10^5$ & 5 & 6.02 & 14 & 13.71 & 11 & 14.42 & 2.53 & $2.38 \cdot 10^5$ & 34.15 \\
    & \lreplicate & $1.82 \cdot 10^6$ & 5 & 6.04 & 20 & 19.25 & 17 & 22.59 & 6.59 & $1.75 \cdot 10^6$ & 47.87 \\
    & \lreverse & $1.73 \cdot 10^6$ & 6 & 7.28 & 16 & 18.11 & 12 & 16.84 & 7.83 & $1.96 \cdot 10^6$ & 42.23 \\
    & \lreverse, \lreverse & $6.40 \cdot 10^4$ & 20 & 26.64 & 52 & 96.65 & 40 & 76.37 & 32.21 & $1.19 \cdot 10^5$ & 199.66 \\
    & \lsnoc & $2.99 \cdot 10^6$ & 17 & 25.03 & 47 & 53.42 & 37 & 68.81 & 19.72 & $3.39 \cdot 10^6$ & 147.26 \\
    & \lstutter & $1.73 \cdot 10^6$ & 6 & 7.69 & 23 & 24.80 & 20 & 28.65 & 2.44 & $1.86 \cdot 10^6$ & 61.14 \\
    & \ltake & $8.49 \cdot 10^6$ & 12 & 15.1 & 35 & 40.08 & 26 & 41.35 & 7.91 & $9.72 \cdot 10^6$ & 96.53 \\
\cmidrule{1-12}
\parbox[t]{2mm}{\multirow{6}{*}{\rotatebox[origin=c]{90}{Binary Tree}}}
    & \tempty & $1.41 \cdot 10^5$ & 2 & 2.1 & 6 & 6.41 & 5 & 6.21 & 2.23 & $1.50 \cdot 10^5$ & 14.73 \\
    & \tbranch & $5.68 \cdot 10^8$ & 134 & 1,487.03 & 242 & 340.10 & 118 & 1,938.85 & $\ast$ & $8.00 \cdot 10^9$ & 3,765.98 \\
    & \telem & $2.62 \cdot 10^5$ & 12 & 15.15 & 27 & 30.79 & 17 & 26.22 & 13.69 & $2.99 \cdot 10^5$ & 72.17 \\
    & \tbranch, \tleft & $2.16 \cdot 10^5$ & 38 & 67.7 & 91 & 134.30 & 61 & 156.89 & 70.24 & $3.19 \cdot 10^5$ & 358.89 \\
    & \tbranch, \tright & $2.16 \cdot 10^5$ & 40 & 69.85 & 95 & 134.88 & 63 & 149.51 & 20.75 & $3.07 \cdot 10^5$ & 354.24 \\
    & \tbranch, \trootval & $2.62 \cdot 10^5$ & 36 & 67.19 & 72 & 101.16 & 41 & 93.45 & 84.36 & $3.68 \cdot 10^5$ & 261.81 \\
\cmidrule{1-12}
\parbox[t]{2mm}{\multirow{4}{*}{\rotatebox[origin=c]{90}{BST}}}
    & \bstempty & $1.41 \cdot 10^5$ & 2 & 2.16 & 7 & 6.13 & 6 & 6.92 & 1.92 & $1.23 \cdot 10^5$ & 15.21 \\
    & \bstinsert & $2.99 \cdot 10^6$ & 16 & 24.16 & 52 & 60.63 & 43 & 118.81 & 26.52 & $3.48 \cdot 10^6$ & 203.60 \\
    & \bstdelete & $2.99 \cdot 10^6$ & 17 & 24.75 & 49 & 68.29 & 38 & 109.10 & 31.50 & $4.16 \cdot 10^6$ & 202.15 \\
    & \bstfind & $2.87 \cdot 10^5$ & 11 & 13.7 & 22 & 21.57 & 13 & 19.52 & 22.10 & $2.82 \cdot 10^5$ & 54.79 \\
\cmidrule{1-12}
\parbox[t]{2mm}{\multirow{4}{*}{\rotatebox[origin=c]{90}{Stack}}}
    & \stempty & $1.25 \cdot 10^5$ & 2 & 2.12 & 6 & 5.61 & 5 & 5.89 & 2.05 & $1.17 \cdot 10^5$ & 13.63 \\
    & \stpush & $1.73 \cdot 10^6$ & 4 & 4.9 & 10 & 11.16 & 7 & 9.73 & 2.50 & $1.91 \cdot 10^6$ & 25.79 \\
    & \stpop & $1.73 \cdot 10^6$ & 4 & 4.84 & 11 & 12.16 & 8 & 10.97 & 12.02 & $1.93 \cdot 10^6$ & 27.98 \\
    & \stpush, \stpop & $2.62 \cdot 10^5$ & 32 & 51.12 & 77 & 95.72 & 53 & 107.13 & 27.66 & $3.26 \cdot 10^5$ & 253.97 \\
\cmidrule{1-12}
\parbox[t]{2mm}{\multirow{3}{*}{\rotatebox[origin=c]{90}{Queue}}}
    & \qempty & 64 & 2 & 2.18 & 6 & 5.86 & 5 & 9.81 & 2.07 & 62.51 & 17.85 \\
    & \qenqueue & $5.93 \cdot 10^5$ & 4 & 5.4 & 11 & 16.82 & 8 & 270.26 & 200.15 & $9.06 \cdot 10^5$ & 292.48 \\
    & \qdequeue & $5.93 \cdot 10^5$ & 4 & 5.51 & 9 & 12.51 & 6 & 290.28 & 192.02 & $8.24 \cdot 10^5$ & 308.30 \\
\cmidrule{1-12}
\parbox[t]{2mm}{\multirow{4}{*}{\rotatebox[origin=c]{90}{Arithmetic}}}    
    & \ialinsum\xspace (Eq.~\eqref{eq:arith-grammar-1}) & $1.01 \cdot 10^7$ & 7 & 6.69 & 20 & 14.50 & 15 & 15.33 & 5.33 & $7.31 \cdot 10^6$ & 36.52 \\
    & \ialinsum\xspace (Eq.~\eqref{eq:arith-grammar-2}) & $2.90 \cdot 10^{10}$ & 15 & 15.31 & 99 & 71.72 & 88 & 112.87 & 2.70 & $2.10 \cdot 10^{10}$ & 199.90 \\
    & \ianonlinsum\xspace (Eq.~\eqref{eq:arith-grammar-1}) & $1.01 \cdot 10^7$ & 8 & 7.95 & 30 & 21.58 & 25 & 26.48 & 2.53 & $7.25 \cdot 10^6$ & 56.01 \\
    & \ianonlinsum\xspace (Eq.~\eqref{eq:arith-grammar-2}) & $1.48 \cdot 10^{13}$ & 17 & 16.82 & 121 & 102.71 & 107 & 734.23 & 48.21 & $1.26 \cdot 10^{13}$ & 853.77 \\
\bottomrule[.1em]
\bottomrule[.1em]
\end{tabular}
}
\end{table}

\begin{table}[tp]
\caption{Evaluation results of \name for sensitivity analysis and abstract domains.
The Enum. column reports the estimated time required to run \cs for all formulas in the DSL $\lang$. This estimation is achieved by multiplying the size of the grammar by the average running time of the \cs.
A (*) indicates a timeout when attempting to prove precision in the last iteration.
In that case, we report as total time the time at which \name timed out. 
\label{tab:benchmarks-2}}
{\footnotesize
\setlength{\tabcolsep}{3pt}
\begin{tabular}{ccrrrrrrrrrr} 
\toprule[.1em]
\multicolumn{2}{c}{\multirow{2}{*}[-0.4ex]{Problem}}
& \multicolumn{1}{c}{\multirow{2}{*}{$|\lang|$}}
    & \multicolumn{2}{c}{\synth} & \multicolumn{2}{c}{\cs} & \multicolumn{2}{c}{\cp} & Last Iter. & Enum. & Total \\
    \cmidrule{4-12}
    & & & Num & T(sec) & Num & T(sec) & Num & T(sec) & T(sec) & T(sec) & T(sec) \\
\midrule[.1em]
\parbox[t]{2mm}{\multirow{3}{*}{\rotatebox[origin=c]{90}{ArrayList}}}
    & \arrayget, \arrayadd & 1,156 & 13 & 4.19 & 23 & 6.87 & 13 & 5.00 & 5.39 & 345.29 &  16.06 \\
    & \size, \arrayempty & 3 & 3 & 0.55 & 5 & 1.34 & 4 & 1.23 & 0.56 & 0.80 & 3.12 \\
    & \size, \arrayadd & 1,303 & 5 & 1.26 & 9 & 2.54 & 6 & 2.07 & 1.86 & 367.74 & 5.88 \\ 
\midrule[.1em]
\parbox[t]{2mm}{\multirow{3}{*}{\rotatebox[origin=c]{90}{HashMap}}}
    & \mapget, \mapempty & 2 & 2 & 0.67 & 6 & 2.53 & 5 & 2.40 & 0.80 & 0.84 & 5.60 \\
    & \mapget, \mapput & 1,860 & 14 &6.22 & 27 & 295.04 & 16 & 189.98 & 119.87 & $2.03 \cdot 10^4$ & 491.24 \\
    & \mapput, \mapput & 103 & 4 & 1.58 & 9 & 217.01 & 6 & 485.72 & 213.21 & 2483.56 & 704.32 \\ 
\midrule[.1em]
\parbox[t]{2mm}{\multirow{6}{*}{\rotatebox[origin=c]{90}{HashSet}}}
    & \setsize, \setempty & 3 & 2 & 0.62 & 6 & 1.80 & 5 & 1.62 & 0.62 & 0.90 & 4.04 \\ 
    & \setsize, \setadd & 1,315 & 7 & 2.39 & 14 & 4.56 & 9 & 3.40 & 0.70 &  428.31 & 10.36 \\ 
    & \setcontains, \setempty & 3 & 2 & 0.60 & 6 & 1.80 & 5 & 1.64 & 0.64 & 0.90 & 4.04 \\ 
    & \setcontains, \setadd & 55 & 5 & 1.64 & 12 & 3.94 & 9 & 3.28 & 0.69 & 18.06 & 8.86 \\ 
    & \setremove, \setempty & 5 & 1 & 0.34 & 5 & 1.05 & 5 & 1.78 & 0.55 & 1.05 & 3.17 \\ 
    & \setremove, \setadd & 181 & 7 & 3.38 & 15 & 6.20 & 10 & 13.23 & 3.86 & 74.81 & 22.81 \\ 
\midrule[.2em]
\parbox[t]{2mm}{\multirow{9}{*}{\rotatebox[origin=c]{90}{Sensitivity Hamming Dist.}}}
    & \lappend & 1,025 & 25 & 11.24 & 55 & 25.88 & 34 & 19.46 & 0.98 & 482.31 & 56.58 \\
    & \lcons & 193 & 14 & 5.21 & 34 & 12.08 & 24 & 10.55 & 0.80 & 68.57 & 27.84 \\
    & \lconsdelete & 193 & 10 & 3.68 & 24 & 9.02 & 16 & 7.29 & 2.44 & 72.54 & 20.0 \\ 
    & \ldelete & 193 & 13 & 4.97 & 26 & 10.03 & 14 & 6.48 & 5.81 & 74.45 & 21.48 \\
    & \ldeleteall & 193 & 15 & 5.99 & 28 & 11.69 & 14 & 6.56 & 2.45 & 161.02 & 24.24 \\
    & \lreverse & 65 & 7 & 2.47 & 15 & 9.82 & 10 & 4.10 & 0.79 & 42.55 & 16.39 \\
    & \lsnoc & 193 & 17 & 6.62 & 37 & 14.55 & 24 & 11.22 & 0.81 & 75.90 & 32.39 \\
    & \lstutter & 65 & 7 & 2.47 & 17 & 6.38 & 12 & 4.95 & 2.25 & 24.39 & 13.79 \\
    & \ltail & 65 & 7 & 2.39 & 19 & 6.61 & 14 & 5.52 & 2.2 & 22.61 & 14.52 \\
\cmidrule{1-12}
\parbox[t]{2mm}{\multirow{9}{*}{\rotatebox[origin=c]{90}{Sensitivity Edit Dist.}}}
    & \lappend & 16,385 & 43 & 29.18 & 82 & 118.52 & 43 & 55.27 & 7.95 & $2.37 \cdot 10^4$ & 202.97 \\
    & \lcons & 769 & 6 & 2.49 & 10 & 4.18 & 4 & 2.02 & 8.69 & 321.44 & 8.69 \\
    & \lconsdelete & 769 & 19 & 9.00 & 36 & 15.80 & 19 & 12.70 & 11.75 & 337.51 & 37.50 \\ 
    & \ldelete & 769 & 33 & 17.9 & 68 & 115.43 & 42 & 31.53 & 1.31 & 1385.38 & 164.86 \\
    & \ldeleteall & 769 & 26 & 13.42 & 55 & 190.48 & 33 & 23.55 & 3.64 & 2663.26 & 227.45 \\
    & \lreverse & 257 & 13 & 6.00 & 27 & 61.43 & 16 & 10.11 & 3.32 & 594.24 & 77.53 \\
    & \lsnoc & 769 & 29 & 15.44 & 61 & 37.21 & 36 & 25.72 & 5.80 & 700.07 & 78.38 \\
    & \lstutter & 257 & 14 & 6.29 & 26 & 13.66 & 14 & 8.51 & 3.03 & 135.02 & 28.47 \\
    & \ltail & 257 & 14 & 6.17 & 29 & 12.83 & 17 & 10.01 & 9.66 & 113.70 & 29.01 \\
\midrule[.2em]
\parbox[t]{2mm}{\multirow{9}{*}{\rotatebox[origin=c]{90}{Bit-Vector Polyhedra}}}
    & \bvsquare & $1.68 \cdot 10^7$ & 20 & 14.79 & 76 & 24.65 & 60 & 136.76 & 68.99 & $5.44 \cdot 10^6$  & 176.19 \\
    & \bvcube & $1.68 \cdot 10^7$ & 21 & 14.26 & 71 & 22.85 & 55 & 195.03 & 115.98 & $6.97 \cdot 10^6$ & 232.14 \\
    & \bvhalf & $1.68 \cdot 10^7$ & 19 & 16.17 & 56 & 19.27 & 40 & 353.14 & $\ast$ & $5.77 \cdot 10^6$ & 388.57 \\
    & \bvsquareineq & $1.68 \cdot 10^7$ & 45 & 105.24 & 70 & 20.24 & 28 & 95.41 & 98.74 & $4.85 \cdot 10^6$ & 220.89 \\
    & \bvconstneq & $1.68 \cdot 10^7$ & 46 & 65.37 & 63 & 20.23 & 18 & 69.83 & 75.32 & $5.39 \cdot 10^6$ & 155.43 \\
    & \bvconj & $1.68 \cdot 10^7$ & 46 & 73.08 & 67 & 20.49 & 23 & 102.32 & 117.45 & $5.13 \cdot 10^6$ & 195.89 \\
    & \bvsinglepoint & $1.68 \cdot 10^7$ & 9 & 14.61 & 75 & 21.64 & 70 & 202.59 & 5.95 & $4.84 \cdot 10^6$ & 238.84 \\
    & \bvfourpoints & $1.68 \cdot 10^7$ & 4 & 1.32 & 56 & 14.75 & 54 & 72.47 & 0.94 & $4.42 \cdot 10^6$ & 88.53 \\
    & \bvdisj & $1.68 \cdot 10^7$ & 48 & 178.05 & 62 & 22.69 & 15 & 102.73 & 105.71 & $6.14 \cdot 10^6$ & 303.47 \\
\bottomrule[.1em]
\bottomrule[.1em]
\end{tabular}
}
\end{table}

\end{document}